%%%%%%%%%%%%%%%%%%%%%%%%%%%%%%%%%%%%%%%%%%%%%%%%%%%%%%%%%%%%%%%%%%%%%%%%%%%%%%%%
%2345678901234567890123456789012345678901234567890123456789012345678901234567890
%        1         2         3         4         5         6         7         8

% \documentclass[a4paper,onecolumn,12pt]{IEEEtran}

\documentclass[letterpaper, 10 pt, conference]{ieeeconf}

\IEEEoverridecommandlockouts    % This command is only needed if you want to use the \thanks command 

\overrideIEEEmargins % Needed to meet printer requirements. 

% See the \addtolength command later in the file to balance the column lengths on the last page of the document

% The following packages can be found on http:\\www.ctan.org
%\usepackage{graphics} % for pdf, bitmapped graphics files
%\usepackage{epsfig} % for postscript graphics files
\usepackage{mathptmx} % assumes new font selection scheme installed
\usepackage{times} % assumes new font selection scheme installed
\usepackage{amsmath} % assumes amsmath package installed
\usepackage{amssymb}  % assumes amsmath package installed
\usepackage{amsthm}
\usepackage{amsxtra}
\usepackage{amsfonts}

\usepackage{graphicx}

\usepackage{color}

\usepackage{pict2e}

\usepackage[backend=biber,style=ieee]{biblatex}
\addbibresource{./references.bib}

\usepackage{hyperref}
\hypersetup{
	colorlinks=true,
	linkcolor=black,
    citecolor=black,
	filecolor=magenta,
	urlcolor=cyan,
}

\usepackage{bbm}
\DeclareSymbolFont{bbold}{U}{bbold}{m}{n}
\DeclareSymbolFontAlphabet{\mathbbm}{bbold}

\DeclareMathAlphabet{\mathcal}{OMS}{cmsy}{m}{n}

\makeatletter
\DeclareFontEncoding{LS1}{}{}
\DeclareFontSubstitution{LS1}{stix}{m}{n}
\DeclareMathAlphabet{\mathscr}{LS1}{stixscr}{m}{n}
\makeatother

% \newcommand*{\kay}{\mathscr{t}}

% \DeclareFontEncoding{LS1}{}{}
% \DeclareFontSubstitution{LS1}{stix}{m}{n}
% \newcommand*\kay{%
%   \text{%
%   \fontencoding{LS1}%
%   \fontfamily{stixscr}%
%   \fontseries{\textmathversion}%
%   \fontshape{n}%
%   \selectfont\symbol{"6B}}}
% \makeatletter
%   % the current math version is saved in \math@version
%   \newcommand*\textmathversion{\csname textmv@\math@version\endcsname}
%   \newcommand*\textmv@normal{m}
%   \newcommand*\textmv@bold{b}
% \makeatother

% \bibliographystyle{ieeetr}

%% ------ New Commands ------ %

\newcommand{\trs}{\prime}
\newcommand{\tr}{\mathrm{tr}}
\newcommand{\diag}{\mathrm{diag}}
\DeclareMathOperator*{\argmin}{arg\,min}

\theoremstyle{plain}
\newtheorem{theorem}{Theorem}
\newtheorem{assumption}{Assumption}
\newtheorem{corollary}{Corollary}

\newtheorem{lemma}{Lemma}
\newtheorem{definition}{Definition}

\newtheorem{remark}{Remark}

\title{\LARGE \bf On receding-horizon approximation in time-varying optimal control}
%Sub-optimality of linear time-varying receding horizon scheme} 

\author{Jintao Sun and Michael Cantoni%
\thanks{This work was supported by a Melbourne Research Scholarship and the Australian Research Council (DP210103272).}% <-this % stops a space
\thanks{J. Sun is a PhD student at the Department of Electrical and Electronic Engineering, The University of Melbourne, Melbourne, Australia
        {\tt\small jintaos@student.unimelb.edu.au}}%
\thanks{M. Cantoni is with the Department of Electrical and Electronic Engineering, The University of Melbourne, Melbourne, Australia
        {\tt\small cantoni@unimelb.edu.au}}%
}

\begin{document}

\maketitle
%\thispagestyle{empty}
%\pagestyle{empty}

%%%%%%%%%%%%%%%%%%%%%%%%%%%%%%%%%%%%%%%%%%%%%%%%%%%%%%%%%%%%%%%%%%%%%%%%%%%%%%%%
\begin{abstract}
The closed-loop stability and infinite-horizon performance of receding-horizon approximations are studied for non-stationary linear-quadratic regulator (LQR) problems. The approach is based on a lifted reformulation of the optimal control problem, under assumed uniform controllability and observability, leading to a strict contraction property of the corresponding Riccati operator. 
Leveraging this contraction property, a stabilizing linear time-varying state-feedback approximation of the infinite-horizon optimal control policy is constructed to meet a performance-loss specification.
% This property is leveraged to construct a stabilizing linear time-varying state-feedback approximation of the infinite-horizon optimal control policy, that also meets a performance-loss specification. 
Its synthesis involves only finite preview of the time-varying problem data at each time step, over a sufficiently long prediction horizon. 
\end{abstract}

\begin{keywords}
Non-stationary discrete-time systems, LQR, model predictive control, Riccati difference equations
\end{keywords}

%%%%%%%%%%%%%%%%%%%%%%%%%%%%%%%%%%%%%%%%%%%%%%%%%%%%%%%%%%%%%%%%%%%%%%%%%%%%%%%%

\section{Introduction}

Consider the following infinite-horizon linear-quadratic regulator (LQR) problem:
\begin{subequations}\label{eq:opt_problem}
\begin{align}
\min_{u} &\sum_{k=0}^{\infty} x_{k}^{\trs}Q_k x_k + u_k^{\trs}R_k u_k, \label{eq:nominal_cost} \\
\intertext{where $x_0=\xi$, and}
%x_0&=\xi,\\
x_{k+1}&=A_k x_k + B_k u_k, \quad k\in\{0,1,2,\ldots\}, \label{eq:system_dynamics}
\end{align}
\end{subequations}
for given problem data $(A_k,B_k,Q_k,R_k)$ and initial state $\xi$.
% where $Q_k$ is positive semi-definite, $R_k$ is positive definite, $A_k \in \mathbb{R}^{n \times n}$, $B_k \in \mathbb{R}^{n \times m}$, and the initial state $\xi \in \mathbb{R}^n$ is given. 
The task is to determine the cost minimizing infinite-horizon control input sequence $u=(u_0,u_1,\dots)$ and corresponding state sequence $x=(x_0,x_1,\dots)$.

 Under stabilizability and detectability conditions, a linear state-feedback characterization of the optimal control policy is well-known~\cite{anderson2007optimal,bertsekas2012dynamic}. It involves the stabilizing solution of a corresponding non-stationary infinite-horizon Riccati recursion. As such, without ab initio knowledge of the problem data over all time, one can only approximate the optimal policy at each time step. 

In this paper, a lifted reformulation of~\eqref{eq:opt_problem} is employed 
%along with Assumptions~\ref{asm:a_invertible} and~\ref{asm:uni_ctr_obs}, 
to construct a receding-horizon approximation of the infinite-horizon optimal control policy. The key feature of the approximation is the use of only finite preview of the problem data over a prediction horizon at each time step. 
%An approximate solution of the original problem~\eqref{eq:opt_problem} can constructed accordingly. 
It is established that the receding-horizon policy is exponentially stabilizing. Further, explicit bounds are given for setting the prediction horizon length to achieve specified tolerance of infinite-horizon performance degradation. This is the main contribution. The receding-horizon approximation is ultimately a linear time-varying state-feedback controller.  

 Related time-varying work appears in~\cite{keerthi1988optimal} and~\cite{lin2021perturbation}. In~\cite{keerthi1988optimal}, the analysis of infinite-horizon performance degradation is somewhat implicit. By contrast, the explicit bounds provided here can be used for direct synthesis of a receding-horizon approximation that achieves a performance-loss specification. In~\cite{lin2021perturbation}, a Smoothed Online Convex Optimization (SOCO) sensitivity analysis is used to study the performance of a receding-horizon approximation of the optimal policy for a possibly long but finite horizon problem in terms of dynamic regret. The distinguishing feature of the developments presented below relates to the way prediction horizon length is linked to {\em infinite-horizon} performance loss. In particular, the link is established via a 
 strict contraction property of the time-varying Riccati operator in the lifted domain, as recently established in~\cite{sun2023riccati}, which builds upon a foundation result from~\cite{bougerol1993kalman}. Recent Riccati contraction based analysis of receding-horizon schemes can be found in~\cite{zhang2021regret,li2022performance} for linear time-invariant (i.e., stationary) problems. Earlier related work for stationary nonlinear problems appears in~\cite{grune2008infinite}.

 The paper is structured as follows. Notation and preliminary results are presented next. The lifted reformulation of problem~\eqref{eq:opt_problem} is developed in Section~\ref{sec:lift}. Approximation by receding-horizon control policies
 %and performance loss are 
 is discussed in Section~\ref{sec:rha}. Results pertaining to exponential stability under receding-horizon control are given in Section~\ref{sec:stab}. Infinite-horizon performance bounds and receding-horizon controller synthesis are then considered in Sections~\ref{sec:1step} and~\ref{sec:synth}. Concluding remarks are given in Section~\ref{sec:conc}. 

\subsection{Preliminaries}
\paragraph*{Notation}
$\mathbb{N}$ denotes the set of natural numbers, $\mathbb{N}_0 := \mathbb{N} \cup \{0\}$, and  for $i\leq j\in\mathbb{N}_0$, $[i:j]:=\{k\in\mathbb{N}_0~|~i\leq k \leq j\}$. The field of real numbers is denoted by $\mathbb{R}$, and $\mathbb{R}_{>0}:=\{\gamma\in\mathbb{R}~|~\gamma>0\}$. For $m,n\in\mathbb{N}$, the Euclidean space of real-valued $n$-vectors, with norm $|\cdot|$, and the space of real $m\times n$ matrices, are denoted by $\mathbb{R}^n$, and $\mathbb{R}^{m\times n}$, respectively. The space $\mathbb{R}^n$ is implicitly associated with $\mathbb{R}^{n \times 1}$. The $n \times n$ identity matrix is denoted by~$I_n$. The $m \times n$ matrix of zeros is denoted by~$0_{m,n}$.
% Given the indexed collection of matrices $(M_a, M_{a+1}, \dots, M_{b})$, where $a,b \in \mathbb{N}_0$, and $a < b$, the corresponding block-diagonal matrix is denoted by $\mathop{\oplus}_{j=a}^{b} M_j$. 
The transpose of $M\in\mathbb{R}^{m\times n}$ is denoted by $M^{\trs}\in\mathbb{R}^{n\times m}$; note, $|x|^2=x^\prime x$ for $x\in\mathbb{R}^n$. When it exists, the inverse of square $M\in\mathbb{R}^{n\times n}$ is denoted by $M^{-1}\in\mathbb{R}^{n\times n}$ (i.e., $M^{-1}M=MM^{-1}=I_n$); when the relevant inverses exist, the Woodbury matrix identity $(M_1+M_2 M_3 M_4)^{-1} = M_1^{-1} - M_1^{-1}M_2(M_3^{-1}+M_4M_1^{-1}M_2)^{-1}M_4M_1^{-1}$ holds. The induced norm of $M\in\mathbb{R}^{m\times n}$ is~$\|M\|_2:=\max_{|x|=1}|Mx|$.
%this equals the maximum singular value.
% The Frobenius norm 
% %of $M$ 
% is~$\|M\|_F:=\mathrm{trace}(M^{\trs}M)^{1/2}$, the square root of the sum of the eigenvalues of $M^{\trs}M$ (i.e., singular values of $M$ squared.)
% $\mathbb{R}_{++}^{n \times n}$ the set of real positive definite matrices, and by $\mathbb{R}_{+}^{n \times n}$ the set of real positive semidefinite matrices. Denote by 
The respective subsets of symmetric, positive semi-definite, and positive definite matrices, are denoted by  $\mathbb{S}^n:=\{M\in\mathbb{R}^{n\times n}~|~M=M^{\trs}\}$, $\mathbb{S}_{+}^{n}:=\{M\in\mathbb{S}^n~|~(\forall x\in\mathbb{R}^n)~x^\prime M x \geq 0\}$, and 
$\mathbb{S}_{++}^{n}:=\{M\in\mathbb{S}^{n}_{+}~|~(\exists \gamma\in\mathbb{R}_{>0})(\forall x\in\mathbb{R}^n)~x^{\trs}Mx \geq \gamma |x|^2 \}$. For $M\in\mathbb{S}^n$, all eigenvalues are real; the minimum value is denoted by~$\lambda_{\min}(M)=\min_{|x|=1}x^\prime Mx$, and the maximum value by $\lambda_{\max}(M)=\max_{|x|=1}x^\prime M x$. Also, given $M_1,M_2\in\mathbb{S}^n$, the notation $M_1 \prec M_2$ (resp.~$\preceq M_2$) means $(M_2-M_1)\in\mathbb{S}_{++}^n$ (resp.~$(M_2-M_1)\in\mathbb{S}_{+}^n$.) For $M\in\mathbb{S}_{+}^{n}$, $\lambda_{\min}(M)\geq 0$ and the matrix square-root is denoted by 
%$\lambda_{\min}(M)\geq 0$, and 
$M^{1/2}\in\mathbb{S}_+^n$  (i.e., $M^{1/2}M^{1/2}=M$.) The vector space of $\mathbb{R}^n$-valued sequences is denoted by $\ell(\mathbb{R}^n)$, and $(\mathbb{R}^n)^m$ denotes the $m$-times Cartesian product $\mathbb{R}^n\times\cdots\times\mathbb{R}^n$. For $w = (w_0, w_1, \dots) \in \ell(\mathbb{R}^n)$ and $a\leq b \in \mathbb{N}_0$, the vector $w_{[a:b]}:=(w_a, \dots, w_b)\in(\mathbb{R}^n)^{(b-a+1)}$; note, $w_{[a:a]}=w_a$.

\begin{definition} \label{def:Riemannian}
Given $Y,Z\in\mathbb{S}^n_{++}$, 
the Riemannian distance is 
\begin{align}\label{eq:delta_def}
\delta(Y,Z) := \sqrt{\sum_{i=1}^{n} \log^{2} \lambda_i} ,
\end{align}
where 
$\{\lambda_1, \dots, \lambda_n\} = \mathrm{spec}\{YZ^{-1}\} \subset \mathbb{R}_{>0}$ is the spectrum (i.e., set of eigenvalues) of $YZ^{-1}\in\mathbb{R}^{n\times n}$.
\end{definition}

\begin{definition}\label{def:log}
When it exists,  $\log(M)\in\mathbb{R}^{n\times n}$ is the unique matrix logarithm for which $M=\exp(\log(M))\in\mathbb{R}^{n\times n}$, where $\exp(\cdot):= \sum_{k\in\mathbb{N}_{0}} \frac{1}{k\\!}(\cdot)^k$ denotes the matrix exponential.
% For any real diagonalizable $M \in \mathbb{R}^{n \times n}$ with real positive eigenvalues $\lambda_1,\dots,\lambda_n > 0$, and
% \begin{align*}
% M = T^{-1} \diag(\lambda_1,\dots,\lambda_n) T
% \end{align*}
% for some non-singular $T \in \mathbb{R}^{n \times n}$, define $\log : \mathbb{R}^{n \times n} \to \mathbb{R}^{n \times n}$ by
% \begin{align*}
% \log M := T^{-1} \diag(\log \lambda_1, \dots, \log \lambda_n) T.
% \end{align*}
\end{definition}

Proofs of the following are deferred to the Appendix.

\begin{lemma}\label{lemma:logm}
% Let 
% \begin{align*}
% \mathbb{L}^n := \{T^{-1}\diag(\lambda_1,\cdots, \lambda_n)T
% ~|~
% \lambda_i\in\mathbb{R}_{>0},~
% T\in\mathbb{R}^{n\times n} \text{ non-singular}\}.
% \end{align*}
Given any non-singular $T\in\mathbb{R}^{n\times n}$, and $\{\lambda_1,\ldots,\lambda_n\}\subset\mathbb{R}$, the matrix logarithm of $M=T^{-1}\diag(\lambda_1,\dots,\lambda_n)T$
%$\in\mathbb{L}^n$, 
%the logarithm exists, with
is
\begin{align*}
\log(M) = T^{-1}\diag(\log(\lambda_1),\dots,\log(\lambda_n))T.
\end{align*}
\end{lemma}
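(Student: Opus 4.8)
The plan is to verify directly that the matrix
$L := T^{-1}\diag(\log(\lambda_1),\dots,\log(\lambda_n))T$ satisfies $\exp(L) = M$, and then to invoke the uniqueness asserted in Definition~\ref{def:log} to conclude $\log(M) = L$. This presumes each $\log(\lambda_i)$ exists (i.e., $\lambda_i \in \mathbb{R}_{>0}$); the eigenvalues of $L$ are then real, consistent with the matrix logarithm of $M$ being well-defined in the sense of Definition~\ref{def:log}.

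First I would record the elementary identity $(T^{-1}DT)^k = T^{-1}D^k T$ for every $k \in \mathbb{N}_0$, which follows by induction as the intermediate factors $TT^{-1}$ telescope; here $D := \diag(\log(\lambda_1),\dots,\log(\lambda_n))$. Since $D$ is diagonal, $D^k = \diag((\log\lambda_1)^k, \dots, (\log\lambda_n)^k)$. Substituting into the series definition $\exp(L) = \sum_{k\in\mathbb{N}_0} \frac{1}{k!}L^k$, and using linearity and continuity of the map $X \mapsto T^{-1}XT$ to move it outside the (absolutely convergent) sum, gives
\begin{align*}
\exp(L) = T^{-1}\Bigl(\sum_{k\in\mathbb{N}_0} \tfrac{1}{k!}D^k\Bigr)T = T^{-1}\diag\bigl(e^{\log\lambda_1},\dots,e^{\log\lambda_n}\bigr)T = T^{-1}\diag(\lambda_1,\dots,\lambda_n)T = M,
\end{align*}
where the second equality uses that the exponential of a diagonal matrix is the diagonal matrix of the exponentials of its entries, and the third uses $e^{\log\lambda_i} = \lambda_i$.

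Having exhibited a matrix $L$ with $\exp(L) = M$, the conclusion $\log(M) = L$ follows from Definition~\ref{def:log}. The only point that deserves a line of justification is the interchange of the conjugation by $T$ with the infinite sum: each partial sum satisfies $\sum_{k=0}^{N} \tfrac{1}{k!}L^k = T^{-1}\bigl(\sum_{k=0}^{N} \tfrac{1}{k!}D^k\bigr)T$, and since $X \mapsto T^{-1}XT$ is continuous on $\mathbb{R}^{n\times n}$, the identity persists in the limit $N \to \infty$. Beyond this there is no real obstacle; the lemma is essentially the statement that functional calculus commutes with similarity transformations, specialized here to the logarithm of a diagonalizable matrix with positive spectrum.
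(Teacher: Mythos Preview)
Your proposal is correct and follows essentially the same route as the paper: both verify directly that $\exp\bigl(T^{-1}\diag(\log\lambda_1,\dots,\log\lambda_n)T\bigr)=M$ via the power-series definition, using $(T^{-1}DT)^k=T^{-1}D^kT$, and then appeal to Definition~\ref{def:log}. Your version is simply more explicit about the continuity argument for interchanging the conjugation with the infinite sum and about the implicit positivity assumption on the $\lambda_i$, which the paper leaves unstated.
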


\begin{lemma}\label{lemma:log_symmetric}
If $M \in \mathbb{S}^n_{++}$, then $\log(M) \in \mathbb{S}^n$.
\end{lemma}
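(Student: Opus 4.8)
The plan is to diagonalize $M$ by an orthogonal similarity transformation and then invoke Lemma~\ref{lemma:logm}. Since $M\in\mathbb{S}^n$, the spectral theorem provides an orthogonal matrix $U\in\mathbb{R}^{n\times n}$ (so that $U^{-1}=U^{\trs}$) and real scalars $\lambda_1,\dots,\lambda_n$ with $M = U^{\trs}\diag(\lambda_1,\dots,\lambda_n)\,U = U^{-1}\diag(\lambda_1,\dots,\lambda_n)\,U$. Because $M\in\mathbb{S}^n_{++}$, each eigenvalue satisfies $\lambda_i\geq\lambda_{\min}(M)>0$, so every $\log(\lambda_i)$ is a well-defined real number.

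Next I would apply Lemma~\ref{lemma:logm} with $T=U$, which is non-singular being orthogonal. This yields $\log(M) = U^{-1}\diag(\log(\lambda_1),\dots,\log(\lambda_n))\,U = U^{\trs}\diag(\log(\lambda_1),\dots,\log(\lambda_n))\,U \in \mathbb{R}^{n\times n}$; in particular $\log(M)$ exists. Taking the transpose and using $U^{\trs\trs}=U$ together with the fact that a diagonal matrix equals its own transpose gives $\log(M)^{\trs} = U^{\trs}\diag(\log(\lambda_1),\dots,\log(\lambda_n))\,U = \log(M)$, hence $\log(M)\in\mathbb{S}^n$.

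There is no substantial obstacle here; the only points requiring minor care are checking the hypotheses of Lemma~\ref{lemma:logm} — orthogonality of $U$ supplies the non-singularity of $T$, and positive definiteness of $M$ supplies the positivity of the $\lambda_i$ that makes the scalar logarithms real — and noting that, by Definition~\ref{def:log}, it suffices to exhibit one symmetric matrix $L$ with $\exp(L)=M$, which the formula from Lemma~\ref{lemma:logm} provides. One could instead argue directly from the power-series definition of $\exp(\cdot)$, but routing through the already-established lemma is cleaner and keeps the argument essentially a one-line computation.
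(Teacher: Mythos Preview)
Your proof is correct and follows essentially the same route as the paper: orthogonally diagonalize $M$ via the spectral theorem, apply Lemma~\ref{lemma:logm} with the orthogonal diagonalizing matrix as $T$, and observe that the resulting $U^{\trs}\diag(\log\lambda_i)\,U$ is symmetric. The paper's version is just a terser rendition of the same computation.
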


\begin{lemma}\label{lemma:delta_invariant}
For any $Y,Z \in \mathbb{S}^n_{++}$,
\begin{align}\label{eq:delta_equality}
\delta(Y,Z) = \|\log(Z^{-\frac{1}{2}}YZ^{-\frac{1}{2}})\|_F .
\end{align}
\end{lemma}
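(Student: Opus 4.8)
The plan is to exploit the similarity between $YZ^{-1}$ and the symmetric matrix $W := Z^{-\frac12}YZ^{-\frac12}$, together with the orthogonal invariance of the Frobenius norm. First I would observe that $W = Z^{-\frac12}(YZ^{-1})Z^{\frac12}$, so $W$ is similar to $YZ^{-1}$ and thus shares its spectrum $\{\lambda_1,\dots,\lambda_n\}$; moreover $W\in\mathbb{S}^n_{++}$, since $Z^{-\frac12}\in\mathbb{S}^n_{++}$ and $Y\in\mathbb{S}^n_{++}$. This simultaneously confirms that $\lambda_i>0$ for all $i$, that the left-hand side of~\eqref{eq:delta_equality} is well-defined via Definition~\ref{def:Riemannian}, and that the right-hand side is well-defined because $\log(W)$ exists.

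Next I would diagonalize $W = U^{\trs}\diag(\lambda_1,\dots,\lambda_n)U$ with $U$ orthogonal, using the spectral theorem for real symmetric matrices. Applying Lemma~\ref{lemma:logm} with the non-singular matrix $T = U$ then yields $\log(W) = U^{\trs}\diag(\log\lambda_1,\dots,\log\lambda_n)U$, and Lemma~\ref{lemma:log_symmetric} (or simply the symmetry of this expression) shows $\log(W)\in\mathbb{S}^n$, consistent with Definition~\ref{def:log}.

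Finally I would use that the Frobenius norm is invariant under pre- and post-multiplication by orthogonal matrices: $\|\log(W)\|_F^2 = \tr\!\big(U^{\trs}\diag(\log\lambda_1,\dots,\log\lambda_n)^2 U\big) = \tr\!\big(\diag(\log\lambda_1,\dots,\log\lambda_n)^2\big) = \sum_{i=1}^n \log^2\lambda_i$, which is precisely $\delta(Y,Z)^2$ by Definition~\ref{def:Riemannian}. Taking non-negative square roots gives~\eqref{eq:delta_equality}.

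The only genuine subtlety — the point I would be most careful about — is the bookkeeping around the matrix logarithm: one must verify that the logarithm produced by the spectral construction in Lemma~\ref{lemma:logm} coincides with the unique logarithm of Definition~\ref{def:log}, and that eigenvalue multiplicities are treated consistently between the spectrum $\mathrm{spec}\{YZ^{-1}\}$ appearing in Definition~\ref{def:Riemannian} and the $n$ diagonal entries arising from the spectral decomposition of $W$. Everything else is routine linear algebra.
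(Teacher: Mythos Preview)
Your proposal is correct and follows essentially the same approach as the paper: both arguments pass from $YZ^{-1}$ to the similar symmetric positive-definite matrix $W=Z^{-\frac12}YZ^{-\frac12}$, orthogonally diagonalize $W$, invoke Lemma~\ref{lemma:logm} to compute $\log(W)$, and then use orthogonal invariance of the Frobenius norm (equivalently, the trace identity) to match $\|\log(W)\|_F^2$ with $\sum_i\log^2\lambda_i$. The only cosmetic difference is that the paper writes the chain via $\sqrt{\tr((\log W)^2)}$ and then appeals to Lemma~\ref{lemma:log_symmetric} to rewrite this as $\|\log W\|_F$, whereas you compute $\|\log W\|_F^2$ directly from the diagonal form.
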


\begin{lemma}\label{lemma:log_norm}
If $M \in \mathbb{S}^n_{++}$ with $\lambda_{\min}(M) \geq 1$, then
\begin{align*}
\|\log(M)\|_2 = \log(\|M\|_2) .
\end{align*}
\end{lemma}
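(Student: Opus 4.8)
The plan is to diagonalize $M$ and push everything through the (monotone) scalar logarithm. Since $M \in \mathbb{S}^n_{++}$, the spectral theorem furnishes an orthogonal $U\in\mathbb{R}^{n\times n}$ and positive reals $\lambda_1,\ldots,\lambda_n$ with $M = U^{\trs}\diag(\lambda_1,\dots,\lambda_n)U$; the hypothesis $\lambda_{\min}(M)\geq 1$ means $\lambda_i\geq 1$ for every $i$. Applying Lemma~\ref{lemma:logm} with $T=U$ (so $T^{-1}=U^{\trs}$) yields $\log(M) = U^{\trs}\diag(\log(\lambda_1),\dots,\log(\lambda_n))U$, and by Lemma~\ref{lemma:log_symmetric} this matrix lies in $\mathbb{S}^n$, so in particular its spectrum is $\{\log(\lambda_1),\dots,\log(\lambda_n)\}\subset\mathbb{R}$.

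Next I would invoke the standard fact that the induced $2$-norm of a symmetric matrix equals its spectral radius. For $M$ this gives $\|M\|_2 = \max_i |\lambda_i| = \max_i \lambda_i$, the last step because each $\lambda_i>0$. For $\log(M)$, whose eigenvalues are the $\log(\lambda_i)$, it gives $\|\log(M)\|_2 = \max_i |\log(\lambda_i)|$. Since $\lambda_i\geq 1$ forces $\log(\lambda_i)\geq 0$, the absolute values are redundant, so $\|\log(M)\|_2 = \max_i \log(\lambda_i)$.

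Finally, monotonicity of the scalar logarithm on $\mathbb{R}_{>0}$ lets me interchange the maximum and the logarithm: $\max_i \log(\lambda_i) = \log(\max_i \lambda_i) = \log(\|M\|_2)$, which is the claimed identity. There is no substantive obstacle here beyond assembling these elementary facts; the only point requiring the hypothesis $\lambda_{\min}(M)\geq 1$ is the removal of the absolute value, without which one would only get $\|\log(M)\|_2 = \max(|\log(\lambda_{\min}(M))|,\log(\|M\|_2))$ rather than the stated equality.
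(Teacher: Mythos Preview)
Your proof is correct and follows essentially the same route as the paper: orthogonally diagonalize $M$, apply Lemma~\ref{lemma:logm} to obtain $\log(M)=U^{\trs}\diag(\log\lambda_1,\dots,\log\lambda_n)U$, and then use $\lambda_i\geq 1$ together with the monotonicity of $\log$ to pass from $\max_i\log(\lambda_i)$ to $\log(\max_i\lambda_i)$. The only cosmetic difference is that you spell out the intermediate step $\|\log(M)\|_2=\max_i|\log(\lambda_i)|$ and explicitly note why the hypothesis is needed to drop the absolute value, which is a nice addition but not a different argument.
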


\begin{lemma}\label{lemma:u-v_bound}
For any $Y,Z \in \mathbb{S}^n_{++}$ with $Y \succeq Z$,
\begin{align*}
\|Y - Z\|_2 \leq \|Z\|_2 (\exp({\delta(Y,Z)}) - 1) ,
\end{align*}
where $\delta(\cdot,\cdot)$ is the Riemannian metric in~\eqref{eq:delta_def}.
\end{lemma}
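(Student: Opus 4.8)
The plan is to reduce the estimate to the normalized matrix $M := Z^{-\frac12} Y Z^{-\frac12}$. Since $Z \in \mathbb{S}^n_{++}$, the square root $Z^{\frac12}$ and its inverse $Z^{-\frac12}$ are symmetric positive definite, so $M$ is symmetric, and $x^\trs M x = (Z^{-\frac12}x)^\trs Y (Z^{-\frac12}x) > 0$ for $x \neq 0$ because $Y \succ 0$; thus $M \in \mathbb{S}^n_{++}$. Moreover $Y \succeq Z$ gives $M = Z^{-\frac12} Y Z^{-\frac12} \succeq Z^{-\frac12} Z Z^{-\frac12} = I_n$, so $\lambda_{\min}(M) \geq 1$. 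This last fact is what makes Lemma~\ref{lemma:log_norm} available.

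First I would handle the left-hand side. Writing $Y - Z = Z^{\frac12}(M - I_n)Z^{\frac12}$ and using submultiplicativity of $\|\cdot\|_2$ together with $\|Z^{\frac12}\|_2^2 = \lambda_{\max}(Z) = \|Z\|_2$, one obtains $\|Y - Z\|_2 \leq \|Z\|_2\,\|M - I_n\|_2$. Since $M \succeq I_n$, the matrix $M - I_n$ is positive semi-definite, so its induced norm equals its largest eigenvalue:
\begin{align*}
\|M - I_n\|_2 = \lambda_{\max}(M - I_n) = \lambda_{\max}(M) - 1 = \|M\|_2 - 1 .
\end{align*}

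Next I would bound $\|M\|_2$ by $\exp(\delta(Y,Z))$. By Lemma~\ref{lemma:delta_invariant}, $\delta(Y,Z) = \|\log(Z^{-\frac12} Y Z^{-\frac12})\|_F = \|\log(M)\|_F$, and since the Frobenius norm dominates the spectral norm, $\|\log(M)\|_F \geq \|\log(M)\|_2$. Because $\lambda_{\min}(M) \geq 1$, Lemma~\ref{lemma:log_norm} gives $\|\log(M)\|_2 = \log(\|M\|_2)$. Chaining these, $\log(\|M\|_2) \leq \delta(Y,Z)$, hence $\|M\|_2 \leq \exp(\delta(Y,Z))$. Combining with the previous step,
\begin{align*}
\|Y - Z\|_2 \leq \|Z\|_2(\|M\|_2 - 1) \leq \|Z\|_2(\exp(\delta(Y,Z)) - 1),
\end{align*}
which is the claimed inequality.

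I do not anticipate a genuine obstacle: the argument is a chain of elementary reductions. The only points demanding care are (i) confirming that the congruence $M = Z^{-\frac12}YZ^{-\frac12}$ is symmetric positive definite with $\lambda_{\min}(M)\geq 1$, so that Lemma~\ref{lemma:log_norm} legitimately applies, and (ii) the two norm identities $\|Z^{\frac12}\|_2^2 = \|Z\|_2$ and $\|M - I_n\|_2 = \|M\|_2 - 1$, both of which rely on positive (semi-)definiteness to pass between induced norm and maximum eigenvalue.
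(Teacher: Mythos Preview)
Your proposal is correct and follows essentially the same route as the paper's proof: the factorization $Y-Z=Z^{1/2}(M-I_n)Z^{1/2}$ with $M=Z^{-1/2}YZ^{-1/2}$, the identity $\|M-I_n\|_2=\|M\|_2-1$ from $M\succeq I_n$, and the chain Lemma~\ref{lemma:log_norm} $\to$ $\|\cdot\|_2\le\|\cdot\|_F$ $\to$ Lemma~\ref{lemma:delta_invariant} to bound $\|M\|_2$ by $\exp(\delta(Y,Z))$. The only cosmetic difference is that you name $M$ and verify $\lambda_{\min}(M)\ge1$ up front, whereas the paper does these steps inline.
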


\section{A lifted reformulation}\label{sec:lift}
A lifting approach is used to transform \eqref{eq:opt_problem} into an LQR problem that is $1$-step controllable and observable; see~\cite{sun2023riccati}. Given problem data $A_k\in\mathbb{R}^{n\times n}$, $B_{k}\in\mathbb{R}^{n\times m}$, $Q_k \in \mathbb{S}_{+}^n$, $R_k \in \mathbb{S}_{++}^m$, for $k \in \mathbb{N}_0$, define the $d$-step state transition matrix, $d\in\mathbb{N}$, by
\begin{align}\label{eq:state_transition}
\Theta_{k,d} := A_{k+d-1} A_{k+d-2} \cdots A_{k},
\end{align}
with $\Theta_{k,0} := I_n$.
The corresponding $d$-step controllability matrix is defined by
\begin{align}\label{eq:ctr_matrix}
\mathcal{C}_{k,d} :=
\begin{bmatrix}
\Theta_{k+1,d-1}B_k & \Theta_{k+2,d-2}B_{k+1} & \dots & \Theta_{k+d,0}B_{k+d-1}
\end{bmatrix} .
\end{align}
Further, the $d$-step observability matrix is defined by
\begin{align}\label{eq:obs_matrix}
\mathcal{O}_{k,d} :=
\begin{bmatrix}
C_k \Theta_{k,0} \\
C_{k+1} \Theta_{k,1} \\
\vdots \\
C_{k+d} \Theta_{k,d}
\end{bmatrix} ,
\end{align}
where $C_k := Q_k^{\frac{1}{2}}$.
% \begin{align}\label{eq:q_sqrt}
% C_k := Q_k^{\frac{1}{2}} .
% \end{align}
%The following assumptions are taken to hold throughout.

\begin{assumption}\label{asm:bounded_data}
There exist $\overline{a}, \overline{b}, \overline{q}, \overline{r} \in \mathbb{R}_{>0}$ such that for all $k \in \mathbb{N}_0$, $\|A_k\|_2 \leq \overline{a}$, $\|B_k\|_2 \leq \overline{b}$, $\|Q_k\|_2 \leq \overline{q}$, and $\|R_k\|_2 \leq \overline{r}$.
\end{assumption}

\begin{assumption}\label{asm:a_invertible}
$A_k$ in \eqref{eq:system_dynamics} is non-singular for all $k \in \mathbb{N}_0$. This is a standard assumption, which holds when \eqref{eq:system_dynamics} arises from the discretization of continuous-time dynamics, for example.
\end{assumption}

\begin{assumption}\label{asm:uni_ctr_obs}
There exists $d \in \mathbb{N}$ such that for all $k \in \mathbb{N}_0$, 
%the $d$-step controllability matrix~
$\mathcal{C}_{k,d}$ has full row rank, and 
%the $d$-step observability matrix~
$\mathcal{O}_{k,d}$ has full column rank.
\end{assumption}

% Assumption~\ref{asm:uni_ctr_obs} implies uniform stabilizability and uniform detectability~\cite{hager1976convergence}, which ensures there exists a unique input sequence $(u_0, u_1, \dots)$ that solves~\eqref{eq:opt_problem}.
% Given $d \in \mathbb{N}$, with reference to \eqref{eq:system_dynamics},
% %where~$d$ is the number of time steps for the lifted system, 
% define 
% the $d$-step lifted model state
% \begin{align} \label{eq:lifted_x}
% \tilde{x}_t := x_{dt}, 
% %\quad t \in \mathbb{N}_0
% \end{align}
% and input
% \begin{align} \label{eq:lifted_u}
% \hat{u}_t := 
% \begin{bmatrix}
% u_{dt}^\trs & u_{dt+1}^\trs & \cdots & u_{d(t+1)-1}^\trs
% \end{bmatrix}^\trs
% %, \quad t \in \mathbb{N}_0
% \end{align}
% for each $t\in\mathbb{N}_0$.

% Given $d \in \mathbb{N}$, with reference to~\eqref{eq:state_transition}, \eqref{eq:ctr_matrix}, \eqref{eq:obs_matrix}, and~\eqref{eq:q_sqrt}, define
% \begin{align}
% &\Phi_t := \Theta_{dt,d} , \\
% &\underline{\lambda}_t := \mathcal{C}_{dt,d} , \label{eq:gamma_matrix} \\
% &\Xi_t := \mathcal{O}_{dt,d} , \label{eq:xi_matrix}
% \end{align}
% and
% \begin{align}
% &\Delta_t :=
% \begin{bmatrix}
% 0_{n,md} \\ \Sigma_t
% \end{bmatrix}
% \end{align}
% for $t \in \mathbb{N}_0$, where
% \begin{align}
% \Sigma_t :=
% \begin{bmatrix}
% C_{dt+1} \mathcal{C}_{dt,1} & 0_{n,m(d-1)} \\
% C_{dt+2} \mathcal{C}_{dt+1,2} & 0_{n,m(d-2)} \\
% \vdots & \vdots \\
% C_{dt+d-1} \mathcal{C}_{dt+d-2,d-1} & 0_{n,m}
% \end{bmatrix} .
% \end{align}

With $d \in \mathbb{N}$ as per Assumption~\ref{asm:uni_ctr_obs}, define
\begin{subequations}  \label{eq:ABhat}
\begin{align}
\hat{A}_t &:= I_{n(d+1)} -
\begin{bmatrix}
0_{n,nd} & 0_{n,n} \\
\diag(A_{dt}, \dots, A_{d(t+1)-1}) & 0_{nd,n}
\end{bmatrix}, \label{eq:Ahat} \\
\hat{B}_t &:= \begin{bmatrix}
0_{n,md} \\ \diag(B_{dt}, \dots, B_{d(t+1)-1})
\end{bmatrix} , \label{eq:Bhat} \\
\hat{C}_t &:=
\begin{bmatrix}
\diag(C_{dt}, \dots, C_{d(t+1)-1}) & 0_{nd,n}
\end{bmatrix} ,
\end{align}
\end{subequations}
for $t \in \mathbb{N}_0$, where $C_t := Q_t^{\frac{1}{2}}$.
With
\begin{subequations}
\begin{align}
\Phi_t &:= \begin{bmatrix}
0_{n,nd} & I_n
\end{bmatrix}
\hat{A}_t^{-1}
\begin{bmatrix}
I_n \\ 0_{nd,n}
\end{bmatrix} , \label{eq:phi_compact} \\
\Gamma_t &:= \begin{bmatrix}
0_{n,nd} & I_n
\end{bmatrix}
\hat{A}_t^{-1} \hat{B}_t \label{eq:gamma_compact}, \\
\Xi_t &:= \hat{C}_t \hat{A}_t^{-1} \begin{bmatrix}
I_n \\ 0_{nd,n}
\end{bmatrix} , \label{eq:xi_compact} \\
\Delta_t &:= \hat{C}_t \hat{A}_t^{-1} \hat{B}_t, \label{eq:delta_compact}
\end{align}
\end{subequations}
the lifted problem data comprises
\begin{subequations}\label{eq:lifted_problem_data}
\begin{align}
&\tilde{Q}_t := \Xi_t^{\trs}\Xi_t - \Xi_t^{\trs}\Delta_t\tilde{R}_t^{-1}\Delta_t^{\trs}\Xi_t , \label{eq:tilde_q} \\
&\tilde{R}_t := \diag(R_{dt}, \dots, R_{dt+d-1}) + \Delta_t^{\trs}\Delta_t , \label{eq:tilde_r} \\
&\tilde{A}_t := \Phi_t - \Gamma_t \tilde{R}_t^{-1} \Delta_t^{\trs} \Xi_t , \label{eq:tilde_a} \\
&\tilde{B}_t := \Gamma_t. \label{eq:tilde_b}
\end{align}
\end{subequations}

% Note that Assumption~\ref{asm:uni_ctr_obs} holds for the un-lifted problem~\eqref{eq:opt_problem}. Therefore, for all $t \in \mathbb{N}_0$, $\underline{\lambda}_t$ in~\eqref{eq:gamma_matrix} has full row rank, and $\Xi_t$ in~\eqref{eq:xi_matrix} has full column rank.

%On noting that Assumption~\ref{asm:uni_ctr_obs} holds for the un-lifted problem~\eqref{eq:opt_problem}, the following remark is made.

\begin{remark} \label{rem:lifted_props}
Under Assumption~\ref{asm:a_invertible}, and with $d\in\mathbb{N}$ as per Assumption~\ref{asm:uni_ctr_obs}, $\tilde{A}_t$ is non-singular~\cite[Lem.~6]{sun2023riccati}, $\tilde{B}_t$ is full row rank~\cite[Rem.~1]{sun2023riccati}, and $\tilde{Q}_t \in \mathbb{S}^n_{++}$~\cite[Lem.~5]{sun2023riccati}, for all $t \in \mathbb{N}_0$.
\end{remark}

\begin{assumption} \label{asm:ctr_obs_uniform}
    The observability and controllability property in Assumption~\ref{asm:uni_ctr_obs} holds uniformly in the sense that $\inf_{t\in\mathbb{N}_0} \lambda_{\min}(\tilde{Q}_t) >0$, and $\inf_{t\in\mathbb{N}_0} \lambda_{\min}(\tilde{B}_t\tilde{B}_t^{\trs}) > 0$. It is also assumed that $\inf_{t \in \mathbb{N}_0} \lambda_{\min}(\tilde{A}_t \tilde{A}_t^{\trs}) > 0$.
\end{assumption}

 Given $T \in \mathbb{N}$, for $t \in \mathbb{N}_0$, $\chi \in \mathbb{R}^n$, and $w=(w_t,\dots,w_{t+T-1}) \in ((\mathbb{R}^{m})^{d})^T$, let
\begin{subequations}\label{eq:j_def}
\begin{align}
J_T(t,\chi,w) := \sum_{j=t}^{t+T-1} 
z_j^{\trs}\tilde{Q}_j z_j  
+ w_j^{\trs}\tilde{R}_jw_j
\end{align}
 where 
 \begin{align}
 %\label{eq:xj_evolve}
 z_{j+1} = \tilde{A}_j z_j + \tilde{B}_jw_j,~ j \in [t:t+T-1],
 \end{align}
\end{subequations}
with $z_t = \chi$, and  $\tilde{Q}_t, \tilde{R}_t, \tilde{A}_t, \tilde{B}_t$, as given in~\eqref{eq:lifted_problem_data}.
% Given $\tilde{u} \in \ell(\mathbb{R}^d)$, where $\tilde{u} = (\tilde{u}_0, \tilde{u}_1, \dots)$, denote $(\tilde{u}_t, \dots, \tilde{u}_{t+T-1})$ by $\tilde{u}_{[t:t+T-1]}$ for $t \in \mathbb{N}_0$ and $T \in \mathbb{N}$.
%Further, 
%with reference to~\eqref{eq:j_def}, 
For 
%$t \in \mathbb{N}_0$, $\chi \in \mathbb{R}^n$, and 
$\tilde{u} \in \ell((\mathbb{R}^{m})^{d})$ define the infinite-horizon cost at time $t\in\mathbb{N}_0$ as
\begin{align} \label{eq:infhorizoncost}
J(t,\chi,\tilde{u}) := \lim_{T \to \infty} J_T(t,\chi,\tilde{u}_{[t:t+T-1]}),
\end{align}
and the optimal cost-to-go as
\begin{align} \label{eq:liftedform}
V_{\infty}(t,\chi) := \min_{\tilde{u} \in \ell((\mathbb{R}^{m})^{d})} J(t,\chi,\tilde{u}).
\end{align}
It is well-known (e.g., see~\cite[Section~3.3]{anderson2007optimal} and \cite{bertsekas2012dynamic}) that
\begin{align}\label{eq:opt_terminal}
V_{\infty}(t,\chi) = \chi^{\trs}P_t\chi ,
\end{align}
where $(P_t)_{t\in\mathbb{N}_0}\subset\mathbb{S}_{+}$ is the 
%unique 
bounded positive semi-definite solution of the (backward) recursion
%the Riccati recursion
\begin{align}\label{eq:recursion}
P_t = \mathcal{R}_t(P_{t+1}),
% \quad j = t+\hat{T}-1, \dots, t+T ,
\end{align}
% with the boundary condition $P_{t+\hat{T}} = 0$, where $\hat{T}$ is defined as per~\eqref{eq:t_hat_def}, 
where the Riccati operator is defined by
\begin{align} \label{eq:ricc_def}
\mathcal{R}_t(P) 
&:= \tilde{Q}_t + \tilde{A}_t^{\trs} (P - P\tilde{B}_t(\tilde{R}_t + \tilde{B}_t^{\trs}P\tilde{B}_t)^{-1}\tilde{B}_t^{\trs}P) \tilde{A}_t 
\end{align}
for $P\in\mathbb{S}^n$;
 n.b., the recursion does not have a boundary condition. Existence and uniqueness of a bounded positive semi-definite solution $(P_t)_{t\in\mathbb{N}_0}$ is guaranteed under Assumption~\ref{asm:uni_ctr_obs}, which implies uniform stabilizability and detectability~\cite{de1992time}. The optimal control input  
$\tilde{u}^*=\argmin_{\tilde{u}\in\ell((\mathbb{R}^{m})^d)}J(0,\xi,\tilde{u})$ 
corresponds to the 
linear 
%time-varying 
state-feedback control policy
\begin{align} \label{eq:optpolicy}
\tilde{u}_t =\mu^*(t,\tilde{x}_t) := -(\tilde{R}_t + \tilde{B}_t^{\trs} P_{t+1} \tilde{B}_t)^{-1} \tilde{B}_t^{\trs}P_{t+1}\tilde{A}_t \tilde{x}_t
\end{align}
%for $t\in\mathbb{N}_0$, 
for the dynamics of the lifted system state given by
\begin{align} 
\label{eq:xj_evolve}
\tilde{x}_{t+1} = \tilde{A}_t\tilde{x}_t+\tilde{B}_t\tilde{u}_t,\quad t\in\mathbb{N}_0,
\end{align}
with initial condition $\tilde{x}_0=\xi$.
In particular, $\tilde{u}_t^* = \mu^*(t,\tilde{x}_t)$~\cite{anderson2007optimal,bertsekas2012dynamic}. The following result is taken from~~\cite[Lemma~4]{sun2023riccati}.
\begin{lemma}\label{lemma:equivalence}
The optimal cost associated with the original infinite-horizon problem~\eqref{eq:opt_problem} is equal to~$V_{\infty}(0,\xi)=\xi^{\trs} P_0\xi$.
\end{lemma}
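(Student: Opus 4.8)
The plan is to show that the lifted infinite-horizon problem~\eqref{eq:liftedform}–\eqref{eq:opt_terminal} has the same optimal value as the original problem~\eqref{eq:opt_problem}, evaluated at $t=0$ and $\chi=\xi$. This amounts to establishing a bijective correspondence between admissible trajectories of the two problems that preserves cost. First I would recall from the construction in~\cite{sun2023riccati} (summarized in Section~\ref{sec:lift}) that the lifted state $\tilde{x}_t$ collects the original state at the sampling instants $x_{dt}$ (via the particular structure of $\hat{A}_t$, $\hat{B}_t$, and the compression matrices $\Phi_t,\Gamma_t,\Xi_t,\Delta_t$), while the lifted input $w_t=(u_{dt},\dots,u_{dt+d-1})$ packs $d$ consecutive original inputs. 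The key algebraic fact is that, for a block of $d$ steps, one can complete the square in the $d$ original stage costs $\sum_{j=dt}^{dt+d-1} x_j^\trs Q_j x_j + u_j^\trs R_j u_j$ so that, after eliminating the intermediate states $x_{dt+1},\dots,x_{dt+d-1}$ in favour of $x_{dt}$ and the input block $w_t$, the block cost becomes exactly $z_t^\trs\tilde{Q}_t z_t + w_t^\trs \tilde{R}_t w_t$ subject to $z_{t+1}=\tilde{A}_t z_t + \tilde{B}_t w_t$ with $z_t = x_{dt}$. Here the cross term $\Delta_t^\trs\Xi_t$ and the modified weight $\tilde{Q}_t$, $\tilde{R}_t$, $\tilde{A}_t$ in~\eqref{eq:lifted_problem_data} arise precisely from this completion of the square; the nonnegativity of the reduced block cost (hence $\tilde{Q}_t\succ 0$ as in Remark~\ref{rem:lifted_props}) is what makes the reformulation a bona fide LQR problem.

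Granting this per-block identity, summing over $t\in\{0,1,\dots,T-1\}$ gives $J_{dT}^{\mathrm{orig}}(\xi,u_{[0:dT-1]}) = J_T(0,\xi,w_{[0:T-1]})$ whenever $w_t = u_{[dt:dt+d-1]}$ and the two state sequences are linked by $z_t = x_{dt}$. Passing to the limit $T\to\infty$ — legitimate because, under Assumptions~\ref{asm:bounded_data}–\ref{asm:ctr_obs_uniform}, both problems admit optimal costs that are finite limits of the finite-horizon costs (boundedness of $(P_t)$ and uniform stabilizability/detectability) — yields $J^{\mathrm{orig}}(\xi,u) = J(0,\xi,\tilde u)$ for corresponding tails. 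Taking the infimum over $u\in\ell(\mathbb{R}^m)$ on the left and over $\tilde u\in\ell((\mathbb{R}^m)^d)$ on the right, and noting the correspondence $u\leftrightarrow\tilde u$ is a linear bijection between the two sequence spaces, gives equality of optimal values: $V_\infty(0,\xi) = \min_u \sum_{k\ge 0} x_k^\trs Q_k x_k + u_k^\trs R_k u_k$. Combining with~\eqref{eq:opt_terminal} at $t=0$ then gives $\xi^\trs P_0\xi = V_\infty(0,\xi)$, which is the claim.

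Since this lemma is quoted verbatim from~\cite[Lemma~4]{sun2023riccati}, I would in practice give only the short argument above and cite the reference for the block-wise completion-of-the-square computation, rather than reproduce it. If a self-contained proof is wanted, the main obstacle is the bookkeeping of the completion of the square over a $d$-step block: one must verify that after minimizing out (or, more carefully, simply substituting the dynamics for) the intermediate states and regrouping, the residual quadratic form in $(x_{dt}, w_t)$ has exactly the coefficient matrices $\tilde Q_t, \tilde R_t, \tilde A_t, \tilde B_t$ defined in~\eqref{eq:lifted_problem_data}. A secondary technical point is justifying the interchange of the infimum with the limit $T\to\infty$; this is handled by standard LQR arguments (monotonicity of finite-horizon value functions and the uniform bounds of Assumptions~\ref{asm:bounded_data} and~\ref{asm:ctr_obs_uniform}, together with uniform stabilizability/detectability from~\cite{de1992time}), so it does not require new ideas here.
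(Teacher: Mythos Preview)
The paper does not actually prove this lemma; it simply states that the result is taken from~\cite[Lemma~4]{sun2023riccati} and moves on. You correctly identify this, and the argument you sketch---a cost-preserving bijection between original and lifted trajectories obtained by packing $d$ consecutive inputs, eliminating the intermediate states via $\hat{A}_t^{-1}$, and completing the square to produce exactly the data in~\eqref{eq:lifted_problem_data}---is precisely the content of the cited proof, so your approach matches the source the paper relies on.
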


\begin{remark}
As shown in the proof of~\cite[Lemma~4]{sun2023riccati},
the solution~$u^* \in \ell(\mathbb{R}^m)$ of the original problem~\eqref{eq:opt_problem} can be recovered from \eqref{eq:optpolicy} as follows:
%from $(\tilde{u}_0^*, \tilde{u}_1^*, \cdots)$ in \eqref{eq:optpolicy}; see the proof of~\cite[Lemma~4]{sun2023riccati}. In particular, for $t \in \mathbb{N}_0$,
\begin{align*}
% (u_{dt}, u_{dt+1}, \dots u_{d(t+1)-1})
u_{[dt:d(t+1)-1]}^*
= \mu^*(t,\tilde{x}_t) %\tilde{u}_t^* 
- \tilde{R}_t^{-1}\Delta_t^{\trs}\Xi_t\tilde{x}_t.
\end{align*}
% and
% \begin{align*}
% x_{[dt:d(t+1)]}
% = \hat{A}_t^{-1} \Big(\hat{B}_t u_{[dt:d(t+1)-1]} +
% \begin{bmatrix}
% \tilde{x}_t \\ 0_{nd,1}
% \end{bmatrix} \Big) .
% \end{align*}
\end{remark}

\begin{remark} \label{rem:PDricc}
Since $\tilde{R}_t,\tilde{R}_t^{-1}\in\mathbb{S}^{md}_{++}$,
\begin{align*}
D(P)&:=(P-P\tilde{B}_t(\tilde{R}_t+\tilde{B}^{\trs}P\tilde{B}_t)^{-1}\tilde{B}_t^{\trs}P) \\
&=P^{1/2}(I +  P^{1/2}\tilde{B}_t\tilde{R}_t^{-1}\tilde{B}_t^{\trs}P^{1/2})^{-1}P^{1/2}\in\mathbb{S}^n_{+}
\end{align*}
for all $P\in\mathbb{S}_+$; the equality holds by the Woodbury matrix identity.
%,
%by the Woodbury matrix identity.
%for all $t\in\mathbb{N}_0$, 
So $\mathcal{R}_t(P) = \tilde{Q}_t + \tilde{A}_t^\prime D(P) \tilde{A}_t \in \mathbb{S}^n_{++}$, because $\tilde{Q}_t\in \mathbb{S}^n_{++}$ as noted in Remark~\ref{rem:lifted_props}. Therefore, the unique positive semi-definite solution $(P_t)_{t\in\mathbb{N}}$ of \eqref{eq:recursion} is positive definite, with $\lambda_{\min}(P_t) \geq \lambda_{\min}(\tilde{Q}_t)>0$.
\end{remark}

Within the time-varying context of this work, it is important to note that implementation of 
%to implement 
%the optimal policy 
\eqref{eq:optpolicy}
requires knowledge of the lifted problem data $(\tilde{A}_j,\tilde{B}_j, \tilde{Q}_j, \tilde{R}_j)$ 
%must be known 
for all $j \geq t$, as needed to determine $P_{t+1}$ according to \eqref{eq:recursion}, and thus, 
%the optimal control input 
$u_t^*$ at time $t\in\mathbb{N}_0$. To overcome this impediment, a so-called receding-horizon approximation can be employed.
%, requiring model data known only over a finite prediction horizon. 
The impact of such an approximation on stability and performance is investigated in the subsequent developments.

\section{Receding-horizon approximation}\label{sec:rha}

A receding-horizon scheme is presented here to approximate the optimal policy~\eqref{eq:optpolicy} for the lifted reformulation of the original infinite-horizon LQR problem~\eqref{eq:opt_problem}; i.e., for~\eqref{eq:liftedform} with $t=0$. 
%The optimal policy~\eqref{eq:optpolicy} may be approximated in a receding horizon manner.
With $T \in \mathbb{N}$, and $d\in\mathbb{N}$ as per Assumption~\ref{asm:uni_ctr_obs}, define 
$L_T:\mathbb{N}_0\times \mathbb{R}^n\times ((\mathbb{R}^{m})^{d})^T \times \mathbb{S}_{++}^n \rightarrow \mathbb{R}_{\geq 0}$
by\begin{align}\label{eq:j_tilde_def}
L_T(t,\chi,w,X) :=  
%\sum_{j=t}^{t+T-1} \tilde{x}_j^{\trs}\tilde{Q}_j\tilde{x}_j + w_j^{\trs}\tilde{R}_jw_j 
J_T(t,\chi,w) + z_{t+T}^{\trs}X z_{t+T}
\end{align}
with $J_T$ and $z_{t+T}$ as per \eqref{eq:j_def} for the given $w=(w_t,\dots,w_{t+T-1})$.
%and $\tilde{x}_{t+T})\in(\mathbb{R}^n)^{T+1}$ is uniquely determined by~$w=(w_t,\cdots,w_{t+T-1})\in(\mathbb{R}^n)^T$ according to~\eqref{eq:xj_evolve} with $\tilde{x}_t=\chi$, %$X_{t+T} \in \mathbb{S}^n_{++}$ is given, 
%and the lifted problem data $\tilde{Q}_j$ and $\tilde{R}_j$ are defined in \eqref{eq:tilde_q} and~\eqref{eq:tilde_r}, respectively.
Given bounded terminal penalty matrix sequence $(X_{t+T})_{t\in\mathbb{N}_0} \subset \mathbb{S}^n_{++}$, the $T$-step receding-horizon scheme is defined by the state-feedback control policy
\begin{subequations}\label{eq:MPC_policy}
\begin{align}
    \tilde{u}_t = \mu^{RH}_0(t,\tilde{x}_t),
    %\quad t\in\mathbb{N}_0,
\end{align}
where
\begin{align}(\mu_0^{RH}(t,\tilde{x}_t), \dots, \mu_{T-1}^{RH}(t,\tilde{x}_t)) = \argmin_{w \in ((\mathbb{R}^{m})^{d})^T} L_T(t,\tilde{x}_t,w,X_{t+T}).
    % &= \argmin_{\tilde{u}} L(\tilde{x}_t,\tilde{u},X_{t+T}) \\
    % &= \argmin_{\tilde{u}} \!\! \sum_{j=t}^{t+T-1} \!\!\! \tilde{x}_{j}^{\trs} \tilde{Q}_{j} \tilde{x}_{j} + \tilde{u}_{j}^{\trs} \tilde{R}_{j} \tilde{u}_{j} + \tilde{x}_{t+T}^{\trs} X_{t+T} \tilde{x}_{t+T} ,
    \end{align}
\end{subequations}
% and $\tilde{x}_t$ evolves according to \eqref{eq:xj_evolve} from $\tilde{x}_0=\xi$. 
%
% Subsequently, $\tilde{u}^{RH}\in\ell((\mathbb{R}^m)^d)$ denotes the corresponding control sequence, in the sense that
% \begin{align} \label{eq:RHinput}
% \tilde{u}^{RH}_t := \mu_0^{RH}(t,\tilde{x}_t).
% \end{align}
% the closed-loop state evolves from
% %\begin{align}
% $\tilde{x}_0 = \xi$ according to
% %\end{align} as
% %\begin{subequations}
% \begin{align}\label{eq:cl_state}
% \tilde{x}_{t+1} = \tilde{A}_t \tilde{x}_t + \tilde{B}_t\tilde{u}_t^{RH}.
% %\quad t \in \mathbb{N}_0,
% \end{align}
% %\end{subequations}

\begin{remark}
%With $d \in \mathbb{N}$ such that Assumption~\ref{asm:uni_ctr_obs} holds, 
The receding-horizon policy~\eqref{eq:MPC_policy} corresponds to a sampled-data policy in the domain of the original problem:
%\begin{subequations}\label{eq:rh_unlifted}
\begin{align}\label{eq:rh_unlifted}
u_{[td:(t+1)d-1]} = 
\mu_0^{RH}(t,x_{td})
-\tilde{R}_t^{-1}\Delta_t^{\trs}\Xi_t x_{td},
%\quad t\in \mathbb{N}_0,
%, \dots, \tilde{\mu}_{(t+1)d-1}(t,\phi_{td}))
\end{align}
where 
% \begin{align}
% & (\mu_0(t,\phi_{td
% ), \dots, \mu_{k+Td-1}(k,\phi_k)) \\
% &= \argmin_{u} \!\! \sum_{l=k}^{k+Td-1} \!\!\! x_{l}^{\trs} Q_{l} x_{l} + u_{l}^{\trs} R_{l} u_{l} + 
%     x_{k+Td}^{\trs} X_{\frac{k}{d}+T} x_{k+Td} ,
% \end{align}
% and 
$x_k$ evolves according to \eqref{eq:system_dynamics} from $x_0=\xi$.
%$\phi_{k+1} = A_k \phi_k + B_k u_k^{RH}$ for $k\in\mathbb{N}_0$, with $\phi_0 = \xi$.
%\end{subequations}
\end{remark}

% Given $d \in \mathbb{N}$ and any input sequence $\tilde{u} = (\tilde{u}_j)_{j=t}^{t+T-1}$, where $\tilde{u}_j \in \mathbb{R}^{md}$, define
% \begin{align}\label{eq:j_def}
% J_{t,T}(\xi,\tilde{u}) := & \sum_{j=t}^{t+T-1} \tilde{x}_j^{\trs}\tilde{Q}_j\tilde{x}_j + \tilde{u}_j^{\trs}\tilde{R}_j\tilde{u}_j
% \end{align}
% for $t \in \mathbb{N}_0$ and $T \in \mathbb{N}$, where $\xi \in \mathbb{R}^n$, and
% \begin{subequations}\label{eq:xj_evolve}
% \begin{align}
% & \tilde{x}_{j+1} = \tilde{A}_j\tilde{x}_j + \tilde{B}_j\tilde{u}_j, \ j \in [t,t+T-1], \\
% & \tilde{x}_t = \xi .
% \end{align}
% \end{subequations}
%$\mathbb{S}^n_{++}$ denotes the set of symmetric positive definite $n \times n$ matrices,
%\begin{align*}
%\mathbb{S}^n_{+} = \{X \in \mathbb{R}^{n \times n} \mid X = X^{\trs},\ X \succeq 0\}
%\end{align*}
%and
%\begin{align*}
%\mathbb{S}^m_{++} = \{X \in \mathbb{R}^{m \times m} \mid X = X^{\trs},\ X \succ 0\} .
%\end{align*}

The quality of this receding-horizon policy, as an approximation of the optimal policy~\eqref{eq:optpolicy}, is investigated subsequently
%for suitable sequences of terminal penalty matrices $(X_{t+T})_{t\in\mathbb{N}_0} \subset \mathbb{S}^n_{++}$. 
in terms of both closed-loop stability and infinite-horizon performance degradation.
%are considered, as the main contributions of this paper. 
%
%For $T \in \mathbb{N}$, 
% Additionally, for $t \in \mathbb{N}_0$, define the infinite-horizon cost-to-go
% \begin{align*}
% V_{t,\infty}(\xi) := \lim_{T \to \infty} \min_{\tilde{u}}
% % \in \mathbb{R}^m} 
% J_{t,T}(\xi,\tilde{u}) .
% \end{align*}
%
% \begin{assumption}\label{asm:tv_inf_stabilizing}
% The pair~$(\tilde{A}_t,\tilde{Q}_t^{1/2})$ is uniformly detectable, and $(\tilde{A}_t,\tilde{B}_t)$ is uniformly stabilizable.
% \end{assumption}
%
% Under Assumption~\ref{asm:tv_inf_stabilizing}, the value function~$V_{t,\infty}(\xi)$ is bounded for $t \in \mathbb{N}_0$ and $\xi \in \mathbb{R}^n$~\cite{anderson1981detectability}.
%
First, it is noted
%, 
%in the lemma below, 
that 
the receding-horizon policy~\eqref{eq:MPC_policy} recovers the least infinite-horizon cost with particular terminal penalty matrices; however, this requires ab initio knowledge of the problem data over the infinite horiozon.
%. By the principle of optimality~\cite{bertsekas2012dynamic}, if .
Given $T \in \mathbb{N}$, define 
%$V_T : \mathbb{N}_0 \times \mathbb{R}^n \to \mathbb{R}_{\geq 0}$ and 
$W_T : \mathbb{N}_0 \times \mathbb{R}^n \times \mathbb{S}^n_{++} \to \mathbb{R}_{\geq 0}$  by
% \begin{align*}
% V_T(t,\xi) = \min_{w \in (\mathbb{R}^{md})^T} \ 
% %\in \mathbb{R}^m} \ 
% J_T(t,\xi,w) ,
% \end{align*}
% and
\begin{align}\label{eq:v_tilde_def}
W_T(t,\chi,X) := \min_{w \in ((\mathbb{R}^{m})^{d})^T} \ %\in \mathbb{R}^m} \  
L_T(t,\chi,w,X),
\end{align}
where $L_T$ is defined in \eqref{eq:j_tilde_def}.

\begin{lemma}\label{lemma:p_opt}
For all $t \in \mathbb{N}_0$, and $\chi \in \mathbb{R}^n$,
\begin{align} \label{eq:WTquad}
W_T(t,\chi,X) = \chi^{\trs} (\mathcal{R}_{t}\circ \cdots \circ \mathcal{R}_{t+T-1}(X))\chi, 
\end{align}
with $\mathcal{R}_{\bullet}(\cdot)$ as per \eqref{eq:ricc_def}.
%, and $W_T(\cdot,\cdot,\cdot)$ as per~\eqref{eq:v_tilde_def}.
%, $\tau\in[t:t+T-1]$. 
In particular,
\begin{align} \label{eq:WTP}
W_T(t,\chi,P_{t+T}) =  \chi^{\trs} P_{t} \chi = V_{\infty}(t,\chi), 
\end{align}
where $(P_{t})_{t\in\mathbb{N}_0}$ is the bounded positive definite solution of~\eqref{eq:recursion}.
\end{lemma}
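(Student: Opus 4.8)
The plan is to prove \eqref{eq:WTquad} by induction on the horizon length $T$, exploiting the principle of optimality (dynamic programming) for the finite-horizon lifted cost $L_T$, and then obtain \eqref{eq:WTP} as an immediate corollary using the Riccati recursion \eqref{eq:recursion}.

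For the base case $T=1$, I would observe that $L_1(t,\chi,w,X) = \chi^{\trs}\tilde{Q}_t\chi + w^{\trs}\tilde{R}_t w + (\tilde{A}_t\chi + \tilde{B}_t w)^{\trs}X(\tilde{A}_t\chi + \tilde{B}_t w)$, which is a strictly convex quadratic in $w$ (since $\tilde{R}_t \in \mathbb{S}^{md}_{++}$ by Remark~\ref{rem:lifted_props}). Setting the gradient in $w$ to zero gives the minimizer $w^* = -(\tilde{R}_t + \tilde{B}_t^{\trs}X\tilde{B}_t)^{-1}\tilde{B}_t^{\trs}X\tilde{A}_t\chi$, and substituting back and collecting terms yields $W_1(t,\chi,X) = \chi^{\trs}(\tilde{Q}_t + \tilde{A}_t^{\trs}(X - X\tilde{B}_t(\tilde{R}_t + \tilde{B}_t^{\trs}X\tilde{B}_t)^{-1}\tilde{B}_t^{\trs}X)\tilde{A}_t)\chi = \chi^{\trs}\mathcal{R}_t(X)\chi$ by the definition \eqref{eq:ricc_def}. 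This is the standard one-step LQR completion-of-squares computation; I would keep it terse.

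For the inductive step, assume \eqref{eq:WTquad} holds for horizon $T-1$ at every time index; in particular $W_{T-1}(t+1,\cdot,X) = \chi^{\trs}(\mathcal{R}_{t+1}\circ\cdots\circ\mathcal{R}_{t+T-1}(X))\chi$. The key structural fact is that $L_T(t,\chi,w,X)$ decomposes as the first-stage cost $\chi^{\trs}\tilde{Q}_t\chi + w_t^{\trs}\tilde{R}_t w_t$ plus $L_{T-1}(t+1, \tilde{A}_t\chi + \tilde{B}_t w_t, (w_{t+1},\dots,w_{t+T-1}), X)$, because the tail dynamics and running costs coincide and the terminal penalty $z_{t+T}^{\trs}Xz_{t+T}$ is shared. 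Minimizing over $w_t$ last and over the tail $(w_{t+1},\dots,w_{t+T-1})$ first (valid since the objective is jointly convex and the inner minimization over the tail is unconstrained and decoupled from the choice of $w_t$ except through the state $z_{t+1}$), one gets $W_T(t,\chi,X) = \min_{w_t}\big(\chi^{\trs}\tilde{Q}_t\chi + w_t^{\trs}\tilde{R}_t w_t + W_{T-1}(t+1, \tilde{A}_t\chi + \tilde{B}_t w_t, X)\big)$. Substituting the inductive hypothesis, the bracket becomes exactly $L_1$-type with $X$ replaced by $\mathcal{R}_{t+1}\circ\cdots\circ\mathcal{R}_{t+T-1}(X)$, so the base-case computation gives $W_T(t,\chi,X) = \chi^{\trs}\mathcal{R}_t\big(\mathcal{R}_{t+1}\circ\cdots\circ\mathcal{R}_{t+T-1}(X)\big)\chi$, as claimed. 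Note $\mathcal{R}_{t+1}\circ\cdots\circ\mathcal{R}_{t+T-1}(X) \in \mathbb{S}^n_{++}$ by Remark~\ref{rem:PDricc} applied inductively (so the one-step formula and its inverses are all well-defined).

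Finally, \eqref{eq:WTP} follows by setting $X = P_{t+T}$ and applying the Riccati recursion \eqref{eq:recursion} iteratively: $\mathcal{R}_{t+T-1}(P_{t+T}) = P_{t+T-1}$, then $\mathcal{R}_{t+T-2}(P_{t+T-1}) = P_{t+T-2}$, and so on down to $\mathcal{R}_t(P_{t+1}) = P_t$; combined with \eqref{eq:opt_terminal} this gives $W_T(t,\chi,P_{t+T}) = \chi^{\trs}P_t\chi = V_\infty(t,\chi)$. I expect the only genuine subtlety — more a point requiring care than a real obstacle — is justifying the interchange of the order of minimization in the inductive step (the dynamic-programming decomposition); this is routine given joint strict convexity of $L_T$ in $w$, which holds because $\tilde{R}_j \in \mathbb{S}^{md}_{++}$ for all $j$ and $X, \tilde{Q}_j \succeq 0$, guaranteeing a unique minimizer that can be computed stagewise.
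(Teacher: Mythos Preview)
Your proposal is correct and is essentially the same approach as the paper's: the paper simply cites standard LQR references for the quadratic form \eqref{eq:WTquad} and then applies \eqref{eq:recursion} and \eqref{eq:opt_terminal} to get \eqref{eq:WTP}, whereas you spell out the underlying dynamic-programming induction that those references contain. There is nothing to add.
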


\begin{proof}
It is well known, e.g., see~\cite[Section~2.4]{anderson2007optimal} and \cite{bertsekas2012dynamic}, that the value function of the finite horizon LQ optimal control problem \eqref{eq:v_tilde_def} takes the quadratic form given in \eqref{eq:WTquad}. As such, in view of~\eqref{eq:opt_terminal} and ~\eqref{eq:recursion}, $V_\infty(t,\chi)=\chi^{\trs} P_t \chi = \chi^{\trs}
(\tilde{\mathcal{R}}_{t}\circ \cdots \circ \tilde{\mathcal{R}}_{t+T-1}(P_{t+T}))\chi^{\trs} = W_T(t,\chi,P_{t+T})$.
% From~\eqref{eq:j_def}, \eqref{eq:j_tilde_def}, and~\eqref{eq:v_tilde_def},
% \begin{align*}
% W_T(t,\chi,P_{t+T}) &= \min_{w \in ((\mathbb{R}^{m})^{d})^T} (J_T(t,\chi,w) + z_{t+T}^{\trs}P_{t+T}z_{t+T}).
% \end{align*}
% %with $\tilde{x}_{t+T}$ %the state at $t+T$ under the input sequence~$w$, 
% %as per~\eqref{eq:xj_evolve}.
% In view of~\eqref{eq:opt_terminal}, $z_{t+T}^{\trs}P_{t+T} z_{t+T}=V_{\infty}(t+T,z_{t+T})$, and thus, $W_T(t,\chi,P_{t+T})=V_{\infty}(t,\chi)$ by Bellman's principle of optimality~\cite{bertsekas2012dynamic}.
\end{proof}

%, then $V_{t,\infty}(\xi)$ is achieved by the receding horizon scheme~\eqref{eq:MPC_policy}.

For given initial state $\xi \in \mathbb{R}^n$, and control input $\tilde{u}\in\ell((\mathbb{R}^m)^d)$, the performance loss 
%associated with the receding-horizon state-feedback control policy~\eqref{eq:MPC_policy} 
with respect to the optimal control input $\tilde{u}^*\in\ell((\mathbb{R}^{m})^{d})$ as per~\eqref{eq:optpolicy},
is defined by
%defined by
%\begin{subequations}
\begin{align}
\label{eq:regret_def}
% \lim_{N \to \infty} J_{0,N}(\xi, (\tilde{u}_t)_{t=0}^{N}) - V_{0,\infty}(\xi) .
\beta(\xi) := J(0,\xi,\tilde{u}) - J(0,\xi,\tilde{u}^*),
%=J(0,\xi,\tilde{u}^{RH}) - V_{\infty}(0,\xi) ,
\end{align}
where the performance index $J$ is given in~\eqref{eq:infhorizoncost}.
%where $\tilde{u}\in\ell((\mathbb{R}^{m})^{d})$ denotes the control input sequence generated by the receding-horizon policy \eqref{eq:MPC_policy}, and $\tilde{u}^*\in\ell((\mathbb{R}^{m})^{d})$ denotes the infinite-horizon optimal control input~\eqref{eq:optpolicy}. 
Note that $\beta(\xi)=J(0,\xi,\tilde{u}) - \xi^{\trs}P_0\xi$.
%\end{subequations}
It quantifies the {\em infinite-horizon} performance degradation.
The aim 
%of this work 
here is to obtain an upper bound on %the performance loss
~\eqref{eq:regret_def} for the control input generated by~\eqref{eq:MPC_policy}, and a method for selecting the prediction horizon $T \in \mathbb{N}$ and penalty matrix sequence
%in order 
to achieve specified performance-loss tolerance.

\section{Closed-loop stability}
%under receding-horizon approximation}
\label{sec:stab}
%It is established here that for
% `optimal cost-to-go bounding' 
For suitable sequences of terminal penalty matrices,
%in the sense elaborated below, 
the receding-horizon policy~\eqref{eq:MPC_policy} is exponentially stabilizing.

\begin{theorem}\label{theorem:mpc_stable}
 Given $T \in \mathbb{N}$, let $(X_{t+T})_{t\in\mathbb{N}_0} \subset \mathbb{S}^n_{++}$ be any bounded terminal penalty matrix sequence such that
\begin{align}\label{eq:sufficient_stability}
X_{t+T} \succeq \mathcal{R}_{t+T}(X_{t+T+1})
\end{align}
for all $t \in \mathbb{N}_0$, with $\mathcal{R}_{t+T}(\cdot)$ as per \eqref{eq:ricc_def}. Then, the origin is exponentially stable for \eqref{eq:xj_evolve} under the corresponding receding-horizon state-feedback control policy~\eqref{eq:MPC_policy}. 
%receding horizon scheme is exponentially stabilizing to the origin.
% in the sense that
% \begin{align*}
% |\tilde{x}_t|^2 \leq \frac{\bar{\overline{\lambda}}_W}{\underline{\overline{\lambda}}_V} (1-\frac{\underline{\lambda}}{\bar{\overline{\lambda}}_W})^t |\xi|^2
% \end{align*}
% for $t \in \mathbb{N}_0$, where $\underline{\lambda} = \min_{t \in \mathbb{N}_0} \lambda_{\min}(\tilde{Q}_t)$, and $\tilde{x}_t$ is the closed-loop state as per~\eqref{eq:cl_state}.
\end{theorem}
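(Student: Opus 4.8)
The standard route is to exhibit a Lyapunov function for the closed loop and show it decreases geometrically. The natural candidate is $V(t,\chi) := W_T(t,\chi,X_{t+T}) = \chi^{\trs}\Pi_t\chi$, where $\Pi_t := \mathcal{R}_t\circ\cdots\circ\mathcal{R}_{t+T-1}(X_{t+T})$ by Lemma~\ref{lemma:p_opt}. First I would record that $(\Pi_t)_{t\in\mathbb{N}_0}$ is uniformly bounded above (the terminal sequence is bounded, the lifted data are bounded by Assumption~\ref{asm:bounded_data}, and each $\mathcal{R}_t$ maps bounded sets to bounded sets) and uniformly bounded below, since $\mathcal{R}_t(P)\succeq\tilde{Q}_t$ gives $\Pi_t\succeq\tilde{Q}_t\succeq \underline{q}I_n$ with $\underline{q}:=\inf_t\lambda_{\min}(\tilde{Q}_t)>0$ by Assumption~\ref{asm:ctr_obs_uniform}. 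Hence $\underline{\pi}|\chi|^2\le V(t,\chi)\le\overline{\pi}|\chi|^2$ for constants $0<\underline{\pi}\le\overline{\pi}$ independent of $t$.

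Next comes the key descent inequality. The receding-horizon control $\mu_0^{RH}(t,\tilde{x}_t)$ is by definition the first element of the minimizer of $L_T(t,\tilde{x}_t,\cdot,X_{t+T})$; by the dynamic programming principle, the tail of that minimizer is optimal for the $(T-1)$-horizon problem started at $\tilde{x}_{t+1}=\tilde{A}_t\tilde{x}_t+\tilde{B}_t\mu_0^{RH}(t,\tilde{x}_t)$ with the same terminal penalty $X_{t+T}$. Therefore
\begin{align*}
V(t,\tilde{x}_t) &= \tilde{x}_t^{\trs}\tilde{Q}_t\tilde{x}_t + \mu_0^{RH}(t,\tilde{x}_t)^{\trs}\tilde{R}_t\mu_0^{RH}(t,\tilde{x}_t) + W_{T-1}(t+1,\tilde{x}_{t+1},X_{t+T}).
\end{align*}
The hypothesis \eqref{eq:sufficient_stability}, $X_{t+T}\succeq\mathcal{R}_{t+T}(X_{t+T+1})$, together with monotonicity of the Riccati operator in its argument, propagates to $W_{T-1}(t+1,\chi,X_{t+T})\ge W_{T-1}(t+1,\chi,\mathcal{R}_{t+T}(X_{t+T+1})) = W_T(t+1,\chi,X_{t+T+1}) = V(t+1,\chi)$. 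Combining, $V(t+1,\tilde{x}_{t+1}) \le V(t,\tilde{x}_t) - \tilde{x}_t^{\trs}\tilde{Q}_t\tilde{x}_t \le V(t,\tilde{x}_t) - \underline{q}|\tilde{x}_t|^2 \le (1-\underline{q}/\overline{\pi})V(t,\tilde{x}_t)$. Writing $\rho := 1-\underline{q}/\overline{\pi}\in[0,1)$, iterating gives $V(t,\tilde{x}_t)\le\rho^{t}V(0,\xi)$, and then $|\tilde{x}_t|^2 \le (\overline{\pi}/\underline{\pi})\rho^{t}|\xi|^2$, which is exponential stability of the origin for \eqref{eq:xj_evolve} under \eqref{eq:MPC_policy}.

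The one point needing care — and the main obstacle — is justifying monotonicity of the Riccati operator and of $W_{T-1}(t+1,\cdot,\cdot)$ in the terminal penalty: that $X\preceq X'$ implies $\mathcal{R}_t(X)\preceq\mathcal{R}_t(X')$, hence $W_j(t,\chi,X)\le W_j(t,\chi,X')$. This is standard (the map $P\mapsto D(P)$ of Remark~\ref{rem:PDricc} is operator-monotone, being a composition of congruence and $P\mapsto P^{1/2}(I+P^{1/2}M P^{1/2})^{-1}P^{1/2}$, which is monotone in $P$), but I would state it as a small lemma and invoke it to get \eqref{eq:sufficient_stability} propagating under composition: applying $\mathcal{R}_t\circ\cdots\circ\mathcal{R}_{t+T-1}$ to both sides of $X_{t+1+T-1}\succeq\mathcal{R}_{t+T}(X_{t+T+1})$ at the shifted index yields $\Pi_{t+1}$ on the right and bounds $W_{T-1}(t+1,\chi,X_{t+T})$ from below by $\chi^{\trs}\Pi_{t+1}\chi$. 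A secondary check is that $\mu_0^{RH}$ is well-defined (the minimized quadratic is strictly convex in $w$ since $\tilde R_t\in\mathbb S^{md}_{++}$ and $X_{t+T}\succ 0$), which is immediate. Everything else is bookkeeping with the uniform bounds.
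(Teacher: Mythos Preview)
Your argument is correct and lands on the same Lyapunov function $V(t,\chi)=W_T(t,\chi,X_{t+T})$ and the same descent inequality as the paper. The route to the descent differs, though: the paper does not invoke Riccati monotonicity. Instead it uses the classical ``shift-and-append'' construction from MPC stability theory: take the tail $(\tilde u_{t+1},\dots,\tilde u_{t+T-1})$ of the time-$t$ minimizer, append a free $v$, evaluate $L_T(t{+}1,\tilde x_{t+1},w_{t+1}(v),X_{t+T+1})$, complete the square in $v$, and observe that the minimum of the resulting terminal correction is $\tilde x_{t+T}^{\trs}(\mathcal{R}_{t+T}(X_{t+T+1})-X_{t+T})\tilde x_{t+T}\le 0$ by hypothesis~\eqref{eq:sufficient_stability}. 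Your approach replaces this explicit candidate by the DPP split $W_T(t,\tilde x_t,X_{t+T})=\ell_t+W_{T-1}(t{+}1,\tilde x_{t+1},X_{t+T})$ and then compares $W_{T-1}(\cdot,\cdot,X_{t+T})$ with $W_T(\cdot,\cdot,X_{t+T+1})$ via monotonicity of $\mathcal{R}_t$ (which the paper states later as Lemma~\ref{lemma:ric_monotone}). Both are standard and equivalent here; your version is a touch shorter once monotonicity is granted, while the paper's version is self-contained at this point in the text and makes the role of the hypothesis~\eqref{eq:sufficient_stability} as a ``terminal-cost decrease'' condition completely explicit. Your quantitative wrap-up (uniform bounds $\underline\pi,\overline\pi$, the rate $1-\underline q/\overline\pi$) is exactly what the paper carries out, more explicitly, in the $1$-step specialization (Theorem~\ref{theorem:mpc_1step_stable}).
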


\begin{proof}
% For ease of notation, denote by $\tilde{x}_t$ the closed-loop state at time $t \in \mathbb{N}_0$ under the receding horizon scheme, where $\tilde{x}_0 = \xi$ is the initial state at time~$0$.
With reference to \eqref{eq:v_tilde_def}, note that
%By Riccati recursion~\eqref{eq:recursion}, 
$W_T(t,\tilde{x}_t,X_{t+T}) = \tilde{x}_{t}^{\trs} (\mathcal{R}_{t} \circ \cdots \circ \mathcal{R}_{t+T-1}(X_{t+T})) \tilde{x}_t$ by Lemma~\ref{lemma:p_opt}. Thus,  in view of Remark~\ref{rem:PDricc}, and the hypothesis $X_{t+T}\in\mathbb{S}_{++}$, $W_T$ is positive and radially unbounded as a function of $\tilde{x}_t$. Further, it can be shown that $t \mapsto W_T(t,\tilde{x}_t,X_{t+T})$ is strictly decreasing for the evolution of $\tilde{x}_t$ according to \eqref{eq:xj_evolve} with 
% $\tilde{u}_t=\tilde{u}^{RH}_t:=\mu^{RH}_0(t,\tilde{x}_t)$ 
$\tilde{u}_t$ as per~\eqref{eq:MPC_policy}.
For $v \in (\mathbb{R}^{m})^{d}$, 
%consider $\tilde{u}^{RH}_{[t:t+T-1]} \in ((\mathbb{R}^{m})^{d})^T$ and the corresponding part of the state trajectory $\tilde{x}_{[t:t+T]}\in(\mathbb{R}^n)^T$.
define 
%$\hat{u}_{t+1} \in ((\mathbb{R}^{m})^{d})^T$ by
\begin{align*}
% \hat{u}_{t+1} := (\tilde{u}_{t+1}^{RH}, \tilde{u}_{t+2}^{RH}, \dots, \tilde{u}_{t+T-1}^{RH}, \bar{u}),
w_{t+1}(v) := 
(\tilde{u}_{t+1},\dots,\tilde{u}_{t+T-1}, v) \in ((\mathbb{R}^{m})^{d})^T.
\end{align*}
% satisfies
% \begin{align}\label{eq:u_bar_def}
% \tilde{A}_{t+T}\tilde{x}_{t+T} + \tilde{B}_{t+T}\bar{u} = 0.
% \end{align}
% Note that $\bar{u}$ exists since $\tilde{B}_{t+T}$ has full row rank.
% With~\eqref{eq:u_bar_def} in mind, 
Then, 
%the cost associated with~$\hat{u}_{t+1}$ is given by
\begin{align}
&L_T(t+1,\tilde{x}_{t+1},w_{t+1}(v),X_{t+T+1}) \nonumber\\
&= W_T(t,\tilde{x}_t,X_{t+T}) - \tilde{x}_t^{\trs}\tilde{Q}_t\tilde{x}_t - \tilde{u}_t^{\trs}\tilde{R}_t\tilde{u}_t + G_{t+T}(v) , \label{eq:l_t+1}
\end{align}
where
\begin{align*}
& G_{t+T}(v) \\
& := - \tilde{x}_{t+T}^{\trs}X_{t+T}\tilde{x}_{t+T} + \tilde{x}_{t+T}^{\trs}\tilde{Q}_{t+T}\tilde{x}_{t+T} + v^{\trs}\tilde{R}_{t+T}v \\
&\quad + (\tilde{A}_{t+T}\tilde{x}_{t+T}+\tilde{B}_{t+T}v)^{\trs} X_{t+T+1} (\tilde{A}_{t+T}\tilde{x}_{t+T}+\tilde{B}_{t+T}v) .
\end{align*}
By `completing-the-square', 
%for all $v \in (\mathbb{R}^{m})^{d}$, 
% note that
\begin{align*}
G_{t+T}(v)
&= (K_{t+T}\tilde{x}_{t+T} + v)^{\trs} M_{t+T} (K_{t+T}\tilde{x}_{t+T} + v) \\
& \quad + \tilde{x}_{t+T}^{\trs} (\tilde{\mathcal{R}}_{t+T}(X_{t+T+1}) - X_{t+T})\tilde{x}_{t+T},
\end{align*}
where
\begin{align*}
&K_{t+T} := M_{t+T}^{-1}\tilde{B}_{t+T}^{\trs}X_{t+T+1}\tilde{A}_{t+T}, \\
&M_{t+T} := \tilde{R}_{t+T} + \tilde{B}_{t+T}^{\trs}X_{t+T+1} \tilde{B}_{t+T}.
\end{align*}
% and
% \begin{align*}
% H_{t+T} 
% % &= \tilde{Q}_{t+T} - X_{t+T} + \tilde{A}_{t+T}^{\trs} X_{t+T+1} \tilde{A}_{t+T} \\
% % &\quad -\tilde{A}_{t+T}^{\trs}X_{t+T+1} \tilde{B}_{t+T} 
% % %\\
% % %& \quad\quad\quad \times 
% % (\tilde{R}_{t+T} + \tilde{B}_{t+T}^{\trs} X_{t+T+1} \tilde{B}_{t+T})^{-1} \\
% % & \qquad\qquad\qquad\qquad\qquad \times \tilde{B}_{t+T}^{\trs} X_{t+T+1} \tilde{A}_{t+T} \\
% &= \tilde{\mathcal{R}}_{t+T}(X_{t+T+1}) - X_{t+T}.
% \end{align*}
Now, by definition,
\begin{align*}
W_T(t+1,\tilde{x}_{t+1},X_{t+T+1}) \leq L_T(t+1,\tilde{x}_{t+1},w_{t+1}(v),X_{t+T+1}),
\end{align*}
and so it follows from~\eqref{eq:l_t+1} that for all $v \in (\mathbb{R}^{m})^{d}$,
\begin{align*}
&W_T(t+1,\tilde{x}_{t+1},X_{t+T+1}) - W_T(t,\tilde{x}_t,X_{t+T}) \nonumber\\
&\quad \leq L_T(t+1,\tilde{x}_{t+1},w_{t+1}(v),X_{t+T+1}) - W_T(t,\tilde{x}_t,X_{t+T}) \nonumber\\
&\quad = - \tilde{x}_t^{\trs}\tilde{Q}_t\tilde{x}_t - \tilde{u}_t^{\trs}\tilde{R}_t\tilde{u}_t + G_{t+T}(v). 
%\label{eq:v_tilde_exp_1}
\end{align*}
Further, since $\min_{v} G_{t+T}(v) = \tilde{x}_{t+T}^{\trs}(\tilde{\mathcal{R}}_{t+T}(X_{t+T+1}) - X_{t+T})\tilde{x}_{t+T} \leq 0$ by \eqref{eq:sufficient_stability}, 
% Note from Assumption~\ref{asm:fully_ctrl} that $\tilde{B}_t$ has full row rank. 
% From~\eqref{eq:u_bar_def},
% \begin{align*}
% \bar{u} = -\tilde{B}_{t+T}^{\trs}(\tilde{B}_{t+T}\tilde{B}_{t+T}^{\trs})^{-1}\tilde{A}_{t+T}\tilde{x}_{t+T}.
% \end{align*}
% Together with~\eqref{eq:v_tilde_grad_proof},
% \begin{align*}
% &W_T(t+1,\tilde{x}_{t+1},X_{t+T+1}) - W_T(t,\tilde{x}_t,X_{t+T}) \nonumber\\
% &\leq - \tilde{x}_t^{\trs}\tilde{Q}_t\tilde{x}_t - (\tilde{u}_t^{RH})^{\trs}\tilde{R}_t\tilde{u}_t^{RH} - (\tilde{x}_{t+T})^{\trs}(X_{t+T} - G)\tilde{x}_{t+T} ,
% \end{align*}
% where
% \begin{align*}
% G := \tilde{Q}_{t+T} + \tilde{A}_{t+T}^{\trs}(\tilde{B}_{t+T}\tilde{B}_{t+T}^{\trs})^{-1}\tilde{B}_{t+T}\tilde{R}_{t+T}\tilde{B}_{t+T}^{\trs}(\tilde{B}_{t+T}\tilde{B}_{t+T}^{\trs})^{-1}\tilde{A}_{t+T} .
% \end{align*}
% From Lemma~\ref{lemma:ric_lb}, with $P_{t+T+1} \in \mathbb{S}^n_{++}$,
% \begin{align*}
% P_{t+T} = \tilde{\mathcal{R}}_{t+T}(P_{t+T+1}) \succeq G.
% \end{align*}
% Thus, $X_{t+T} - G \succeq 0$. 
% it then follows that
\begin{align}\label{eq:w_decreasing}
&W_T(t+1,\tilde{x}_{t+1},X_{t+T+1}) - W_T(t,\tilde{x}_t,X_{t+T})
\nonumber \\
&\qquad\qquad\qquad 
\leq  - \tilde{x}_t^{\trs} \tilde{Q}_t \tilde{x}_t - \tilde{u}_t^{\trs}\tilde{R}_t\tilde{u}_t.
\end{align}
This bound is strictly negative for non-zero $\tilde{x}_t$ since $\inf_{t\in\mathbb{N}_0}\lambda_{\min} (\tilde{Q}_t)>0$ by Assumption~\ref{asm:ctr_obs_uniform}, and $\tilde{R}_t\succ 0$.
 Therefore, $W_T$ is a Lyapunov function for the closed-loop dynamics, and the claimed exponential stability property follows from~\cite[Theorem~5.7]{bof2018lyapunov}.
\end{proof}

%\begin{lemma}\label{lemma:w_ub}
% Given $\chi\in\mathbb{R}^n$, for all $t \in \mathbb{N}_0$ and all $X \in \mathbb{S}^n_{+}$,
% \begin{align*}
% \chi^{\trs} \mathcal{R}_t(X) \chi = W_1(t,\chi,X) \leq \bar{\overline{\lambda}}_W |\chi|^2 ,
% \end{align*}
% where
% \begin{align}\label{eq:alpha_w_def}
% \bar{\overline{\lambda}}_W := \max_{t \in \mathbb{N}_0} \|\tilde{Q}_t + \tilde{A}_t^{\trs}(\tilde{B}_t \tilde{B}_t^{\trs})^{-1} \tilde{B}_t^{\trs} \tilde{R}_t \tilde{B}_t^{\trs}(\tilde{B}_t \tilde{B}_t^{\trs})^{-1} \tilde{A}_t\|_2 .
% \end{align}
%
% For all $t \in \mathbb{N}_0$ and $X \in \mathbb{S}^n_{+}$,
% \begin{align*}
% \mathcal{R}_t(X) \preceq \tilde{Q}_t + \tilde{A}_t^{\trs}(\tilde{B}_t \tilde{B}_t^{\trs})^{-1} \tilde{B}_t \tilde{R}_t \tilde{B}_t^{\trs}(\tilde{B}_t \tilde{B}_t^{\trs})^{-1} \tilde{A}_t ,
% \end{align*}
% with $\mathcal{R}_t(\cdot)$ as per \eqref{eq:ricc_def}.
% \end{lemma}

% \begin{proof}
% See \cite[Lemma~8]{sun2023receding}.
% \end{proof}

Next, Theorem~\ref{theorem:mpc_stable} is specialized to a one-step policy, which facilitates the presentation of an explicit exponential bound on the closed-loop state trajectory.
%under the receding-horizon policy~\eqref{eq:MPC_policy}. %At any rate, it can be shown that 
By the dynamic programming principle~\cite{bertsekas2012dynamic},
for given $T\in\mathbb{N}$, and bounded terminal penalty matrix sequence $(X_{t+T})_{t\in\mathbb{N}_0}\subset\mathbb{S}_{++}^n$, the $T$-step policy~\eqref{eq:MPC_policy} is equivalent to the $1$-step policy 
\begin{align}
    \tilde{u}_t &= {\textstyle \argmin_{w_t \in (\mathbb{R}^m)^d}} 
    ~L_1(t,\tilde{x}_t,w_t,\hat{X}_{t+1}) \nonumber \\
    &= -(\tilde{R}_t+\tilde{B}_t^{\trs}\hat{X}_{t+1}\tilde{B}_t)^{-1}\tilde{B}_t^{\trs}\hat{X}_{t+1}\tilde{A}_t \tilde{x}_t,  \label{eq:onestepmpc}
\end{align}
 where 
 %the terminal penalty matrix 
$\hat{X}_{t+1}:=\mathcal{R}_{t+1}\circ\cdots\circ\mathcal{R}_{t+T-1}(X_{t+T})$ is the terminal penalty matrix for each $t\in\mathbb{N}_0$.
% and 'optimal cost-to-go bounding' sequences of terminal penalty matrices.
Note, in particular, that 
$$L_1(t,\tilde{x}_t,w_t,\hat{X}_{t+1}) = (K_t\tilde{x}_t + w_t)^{\trs}M_t(K_t\tilde{x}+w_t) + \tilde{x}_t^{\trs} \mathcal{R}_t(\hat{X}_{t+1})\tilde{x}_t,$$ where $K_t:=M_t^{-1}\tilde{B}_t^{\trs}\hat{X}_{t+1}\tilde{A}_t$ and $M_t=\tilde{R}_t+\tilde{B}_t^{\trs}\hat{X}_{t+1}\tilde{B}_t$.
Also note that, with 
%reference to 
Assumptions~\ref{asm:bounded_data} and~\ref{asm:ctr_obs_uniform}, 
\begin{align*}
\sup_{t\in\mathbb{N}_0}\|\tilde{Q}_t + \tilde{A}_t^{\trs}(\tilde{B}_t \tilde{B}_t^{\trs})^{-1} \tilde{B}_t^{\trs} \tilde{R}_t \tilde{B}_t^{\trs}(\tilde{B}_t \tilde{B}_t^{\trs})^{-1} \tilde{A}_t\|_2 < +\infty,
\end{align*}
%for all $t \in \mathbb{N}_0$, 
%With Lemma~\ref{lemma:w_ub}, 
which implies $\sup_{t\in\mathbb{N}_0} \|\mathcal{R}_t(X)\|_2<+\infty$ for any  $X \in \mathbb{S}^n_{+}$, 
since $
L_1(t,\chi,-K_t\chi,X) \leq L_1(t,\chi,-\tilde{B}_t^\prime(\tilde{B}_t\tilde{B}_t^\prime)^{-1}\tilde{A}_t\chi,X)$ for all $\chi\in\mathbb{R}^n$, and thus,
\begin{align}
 \mathcal{R}_t(X) \preceq \tilde{Q}_t + \tilde{A}_t^{\trs}(\tilde{B}_t \tilde{B}_t^{\trs})^{-1} \tilde{B}_t \tilde{R}_t \tilde{B}_t^{\trs}(\tilde{B}_t \tilde{B}_t^{\trs})^{-1} \tilde{A}_t.
\label{eq:RiccBounded}
\end{align}
%; see \cite[Lemma~8]{sun2023receding} for more details. 
Further, $\inf_{t\in\mathbb{N}_0}\lambda_{\min}(\mathcal{R}_t(X)) > 0$ in view of Remark~\ref{rem:PDricc} and Assumption~\ref{asm:ctr_obs_uniform}.

\begin{theorem}\label{theorem:mpc_1step_stable}
Consider the state-feedback control policy~\eqref{eq:onestepmpc}
%, corresponding to~\eqref{eq:MPC_policy} with~$T=1$, and 
for given 
bounded 
%$1$-step terminal penalty matrix 
sequence 
$(\hat{X}_{t+1})_{t\in\mathbb{N}_0} \subset \mathbb{S}^n_{++}$. 
Suppose 
%is a bounded sequence of terminal penalty matrices such that 
$\hat{X}_t \succeq \mathcal{R}_t(\hat{X}_{t+1})$ for all $t \in \mathbb{N}$, with $\mathcal{R}_{t}(\cdot)$ as per \eqref{eq:ricc_def}.
Then, the evolution of \eqref{eq:xj_evolve} with $\tilde{u}_t$ as per 
~\eqref{eq:onestepmpc} satisfies 
\begin{align}\label{eq:cl_exp_stable}
|\tilde{x}_t|^2 \leq \frac{\overline{\omega}}{\underline{\omega}} \Big(1-\underline{\lambda}\big/\overline{\omega}\Big)^t |\tilde{x}_0|^2
\end{align}
for all $t\in\mathbb{N}_0$, with 
%$\tilde{x}_t$ is the closed-loop state under the receding horizon input~\eqref{eq:RHinput},
\begin{align}
% &\underline{q} := \min_{t \in \mathbb{N}_0} \lambda_{\min}(\tilde{Q}_t), \label{eq:q_min}\\
&
\overline{\omega} := \sup_{t \in \mathbb{N}_0} \|\mathcal{R}_{t}(\hat{X}_{t+1})\|_2 \!<\! +\infty, \label{eq:w_ub}
\\
&
\underline{\omega} := \inf_{t \in \mathbb{N}_0} \lambda_{\min} (\mathcal{R}_{t}(\hat{X}_{t+1}))\!>\! 0, 
\label{eq:w_lb}\\
%\end{align}
%and
%\begin{multline}
\label{eq:gamma_def}
&
\underline{\lambda} := \inf_{t \in \mathbb{N}_0} 
\lambda_{\min} \Big(\tilde{Q}_t + K_t^{\trs} \tilde{R}_t K_t \Big)\leq \overline{\omega},
% \lambda_{\min} \Big(\tilde{Q}_t + \tilde{A}_t^{\trs}\hat{X}_{t+1}\tilde{B}_t(\tilde{R}_t+\tilde{B}_t^{\trs}\hat{X}_{t+1}\tilde{B}_t)^{-1}\tilde{R}_{t} \\
% \times (\tilde{R}_t+\tilde{B}_t^{\trs}\hat{X}_{t+1}\tilde{B}_t)^{-1}\tilde{B}_t^{\trs}\hat{X}_{t+1}\tilde{A}_t\Big)>0,
%
\end{align}
and $K_t:=(\tilde{R}_t+\tilde{B}_t^{\trs}\hat{X}_{t+1}\tilde{B}_t)^{-1}\tilde{B}_t^{\trs}\hat{X}_{t+1}\tilde{A}_t$.
Further, $\underline{\lambda} >0$.
\end{theorem}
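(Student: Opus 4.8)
The plan is to establish the bound~\eqref{eq:cl_exp_stable} by exhibiting $V_t(\chi) := \chi^{\trs}\mathcal{R}_t(\hat{X}_{t+1})\chi$ as a Lyapunov function for the closed loop and extracting the geometric decay rate explicitly, in the style of a discrete-time Lyapunov argument. First I would record the two sandwiching inequalities that follow immediately from the definitions~\eqref{eq:w_ub}--\eqref{eq:w_lb}, namely $\underline{\omega}\,|\chi|^2 \le V_t(\chi) \le \overline{\omega}\,|\chi|^2$ for all $\chi\in\mathbb{R}^n$ and $t\in\mathbb{N}_0$; the finiteness of $\overline{\omega}$ and positivity of $\underline{\omega}$ are exactly what was argued just before the theorem statement via~\eqref{eq:RiccBounded} and Remark~\ref{rem:PDricc} together with Assumption~\ref{asm:ctr_obs_uniform}.

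Next I would compute the one-step decrease. Using the ``completing-the-square'' identity displayed just before the theorem, with $\tilde{u}_t = -K_t\tilde{x}_t$ as in~\eqref{eq:onestepmpc} the cross term $(K_t\tilde{x}_t + w_t)^{\trs}M_t(K_t\tilde{x}_t+w_t)$ vanishes, so $L_1(t,\tilde{x}_t,-K_t\tilde{x}_t,\hat{X}_{t+1}) = \tilde{x}_t^{\trs}\mathcal{R}_t(\hat{X}_{t+1})\tilde{x}_t = V_t(\tilde{x}_t)$. On the other hand, by definition of $L_1$ in~\eqref{eq:j_tilde_def}--\eqref{eq:j_def}, $L_1(t,\tilde{x}_t,-K_t\tilde{x}_t,\hat{X}_{t+1}) = \tilde{x}_t^{\trs}(\tilde{Q}_t + K_t^{\trs}\tilde{R}_t K_t)\tilde{x}_t + \tilde{x}_{t+1}^{\trs}\hat{X}_{t+1}\tilde{x}_{t+1}$, where $\tilde{x}_{t+1} = (\tilde{A}_t - \tilde{B}_t K_t)\tilde{x}_t$ is the closed-loop successor state. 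Invoking the hypothesis $\hat{X}_{t+1}\succeq\mathcal{R}_{t+1}(\hat{X}_{t+2}) $ gives $\tilde{x}_{t+1}^{\trs}\hat{X}_{t+1}\tilde{x}_{t+1} \ge V_{t+1}(\tilde{x}_{t+1})$. Combining, $V_{t+1}(\tilde{x}_{t+1}) \le V_t(\tilde{x}_t) - \tilde{x}_t^{\trs}(\tilde{Q}_t + K_t^{\trs}\tilde{R}_t K_t)\tilde{x}_t \le V_t(\tilde{x}_t) - \underline{\lambda}\,|\tilde{x}_t|^2$, by~\eqref{eq:gamma_def}.

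From the upper sandwich $|\tilde{x}_t|^2 \ge V_t(\tilde{x}_t)/\overline{\omega}$ this yields $V_{t+1}(\tilde{x}_{t+1}) \le (1 - \underline{\lambda}/\overline{\omega})\,V_t(\tilde{x}_t)$; iterating gives $V_t(\tilde{x}_t) \le (1-\underline{\lambda}/\overline{\omega})^t V_0(\tilde{x}_0)$, and applying the lower sandwich on the left and upper on the right produces~\eqref{eq:cl_exp_stable}. (One should note $0 \le 1-\underline{\lambda}/\overline{\omega} < 1$: the inequality $\underline{\lambda}\le\overline{\omega}$ asserted in~\eqref{eq:gamma_def} keeps the base nonnegative, and it is strictly less than $1$ because $\underline{\lambda}>0$.) It remains to verify $\underline{\lambda}>0$: since $\tilde{Q}_t + K_t^{\trs}\tilde{R}_t K_t \succeq \tilde{Q}_t$ and $\inf_t \lambda_{\min}(\tilde{Q}_t) > 0$ by Assumption~\ref{asm:ctr_obs_uniform}, the infimum in~\eqref{eq:gamma_def} is bounded below by that positive constant. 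I would also briefly justify $\underline{\lambda}\le\overline{\omega}$, e.g.\ by noting $\mathcal{R}_t(\hat{X}_{t+1}) = \tilde{Q}_t + \tilde{A}_t^{\trs}D(\hat{X}_{t+1})\tilde{A}_t \succeq \tilde{Q}_t + K_t^{\trs}\tilde{R}_t K_t$ — which is the content of the ``completing-the-square'' identity, since $\mathcal{R}_t(\hat X_{t+1})$ is the minimum over $w_t$ of $L_1(t,\cdot,w_t,\hat X_{t+1})$ and dropping the nonnegative $\hat X_{t+1}$-term in the value at $w_t=-K_t\chi$ only decreases it — so $\overline{\omega}\ge\sup_t\lambda_{\min}(\mathcal{R}_t(\hat X_{t+1})) \ge \underline{\lambda}$. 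The only mild subtlety, and the step I would be most careful about, is keeping the two distinct quadratic forms $V_t$ and $\tilde{x}_t^{\trs}(\tilde Q_t + K_t^{\trs}\tilde R_t K_t)\tilde x_t$ straight and correctly applying the recursive hypothesis at index $t+1$ rather than $t$; everything else is bookkeeping with the already-established bounds.
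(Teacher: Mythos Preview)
Your proposal is correct and is essentially the paper's own argument: the paper uses $W_1(t,\tilde{x}_t,\hat{X}_{t+1})$ as the Lyapunov function, which by Lemma~\ref{lemma:p_opt} is exactly your $V_t(\tilde{x}_t)=\tilde{x}_t^{\trs}\mathcal{R}_t(\hat{X}_{t+1})\tilde{x}_t$, obtains the same one-step decrease $W_1(t+1,\cdot)-W_1(t,\cdot)\le -\underline{\lambda}\,|\tilde{x}_t|^2$ (by citing~\eqref{eq:w_decreasing} from the proof of Theorem~\ref{theorem:mpc_stable} rather than re-deriving it via completing-the-square as you do), and then performs the identical sandwich-and-iterate step. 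The justifications of $\underline{\lambda}>0$ and $\underline{\lambda}\le\overline{\omega}$ also match the paper's.
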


\begin{proof}
% Since $$L_1(t,\tilde{x}_t,w_t,X_{t+1}) = (K_t\tilde{x}_t + w_t)^{\trs}M_t(K_t\tilde{x}+w_t) + \tilde{x}_t^{\trs} \mathcal{R}_t(X_{t+1})\tilde{x}_t,$$ where $K_t:=M_t^{-1}\tilde{B}_t^{\trs}X_{t+1}\tilde{A}_t$ and $M_t=\tilde{R}_t+\tilde{B}_t^{\trs}X_{t+1}\tilde{B}_t$,
% the receding horizon control input is
% \begin{align}\label{eq:u_rh_1}
% \tilde{u}_t^{RH} := \mu^{RH}_0(t,\tilde{x}_t)= -(\tilde{R}_t+\tilde{B}_t^{\trs}X_{t+1}\tilde{B}_t)^{-1}\tilde{B}_t^{\trs}X_{t+1}\tilde{A}_t \tilde{x}_t
% \end{align}
% for all $t \in \mathbb{N}_0$. Further, 
In view of the hypothesis
$\hat{X}_t \succeq \mathcal{R}_t(\hat{X}_{t+1})$, \eqref{eq:w_decreasing}, and~\eqref{eq:onestepmpc},
\begin{align}\label{eq:w_1_decreasing}
& W_1(t+1,\tilde{x}_{t+1},\hat{X}_{t+2}) - W_1(t,\tilde{x}_t,\hat{X}_{t+1}) \nonumber\\
&\qquad\qquad \leq -\tilde{x}_t^{\trs}\tilde{Q}_t\tilde{x}_t - \tilde{u}_t^{\trs}\tilde{R}_t\tilde{u}_t %\nonumber\\
% &= -\tilde{x}_t^{\trs} \left( \tilde{Q}_t + \tilde{A}_t^{\trs}X_{t+1}\tilde{B}_t(\tilde{R}_t+\tilde{B}_t^{\trs}X_{t+1}\tilde{B}_t)^{-1}\tilde{R}_{t+1} (\tilde{R}_t+\tilde{B}_t^{\trs}X_{t+1}\tilde{B}_t)^{-1}\tilde{B}_t^{\trs}X_{t+1}\tilde{A}_t \right) \tilde{x}_t \nonumber\\
\leq -\underline{\lambda} |\tilde{x}_t|^2,
\end{align}
with $\underline{\lambda}$ as per \eqref{eq:gamma_def}. Note that $\underline{\lambda} \geq \inf_{t\in\mathbb{N}_0}\lambda_{\min}(\tilde{Q}_t)>0$ by Assumption~\ref{asm:ctr_obs_uniform}, and that $\underline{\lambda} \leq \overline{\omega}$, because \begin{align*}
\tilde{x}_t^{\trs}\tilde{Q}_t\tilde{x}_t + \tilde{u}_t^{\trs}\tilde{R}_t\tilde{u}_t &= \tilde{x}_t^{\trs}\mathcal{R}_t(\hat{X}_{t+1}) \tilde{x}_t -  
\tilde{x}_{t+1}^{\trs}\hat{X}_{t+1}\tilde{x}_{t+1} \\
&\leq \tilde{x}_t^{\trs}\mathcal{R}_t(\hat{X}_{t+1}) \tilde{x}_t  
\end{align*}
for $t \in \mathbb{N}_0$. Now,
 $W_1(t,\tilde{x}_t,\hat{X}_{t+1}) = \tilde{x}_t^{\trs} \mathcal{R}_t(\hat{X}_{t+1}) \tilde{x}_t$ by Lemma~\ref{lemma:p_opt}, and as such,
\begin{align}\label{eq:v_tilde_exp_2}
\underline{\omega} |\tilde{x}_t|^2 \leq W_1(t,\tilde{x}_t,\hat{X}_{t+1}) \leq \overline{\omega} |\tilde{x}_t|^2.
% &= \tilde{x}_t^{\trs} \mathcal{R}_{t}(X_{t+1}) \tilde{x}_t \nonumber\\
% &\leq \overline{\lambda}_W |\tilde{x}_t|^2,
\end{align}
% and
% \begin{align}\label{eq:v_tilde_exp_3}
% W_1(t,\tilde{x}_t,X_{t+1}) \geq \underline{\omega} |\tilde{x}_t|^2.
% % &= \tilde{x}_t^{\trs} \mathcal{R}_t(X_{t+1}) \tilde{x}_t \nonumber\\
% % &\geq \tilde{x}_t^{\trs} \tilde{Q}_t \tilde{x}_t \nonumber\\
% % &\geq \underline{q} |\tilde{x}_t|^2.
% \end{align}
Combining~\eqref{eq:w_1_decreasing} and~\eqref{eq:v_tilde_exp_2} yields 
%for $t \in \mathbb{N}_0$,
\begin{align*}
W_1(t+1,\tilde{x}_{t+1},\hat{X}_{t+2}) \leq (1-\underline{\lambda}\big/\overline{\omega}) W_1(t,\tilde{x}_t,\hat{X}_{t+1}) ,
\end{align*}
%Hence, 
for all $t \in \mathbb{N}_0$.
% \begin{align*}
% W_1(t+1,\tilde{x}_{t+1},X_{t+2}) \leq (1-\frac{\underline{\lambda}}{\overline{\omega}})W_1(t,\tilde{x}_t,X_{t+1}) ,
% \end{align*}
Therefore,
%for $t \in \mathbb{N}_0$,
\begin{align}\label{eq:vtilde_exp}
W_1(t,\tilde{x}_t,\hat{X}_{t+1}) \leq (1-\underline{\lambda}\big/\overline{\omega})^t W_1(0,\tilde{x}_0,\hat{X}_1) .
\end{align}
Finally, using \eqref{eq:v_tilde_exp_2} %\eqref{eq:v_tilde_exp_3},  
in~\eqref{eq:vtilde_exp}, it follows that
\begin{align*}
|\tilde{x}_t|^2 \leq \frac{1}{\underline{\omega}}W_1(t,\tilde{x}_t,\hat{X}_{t+1}) 
&\leq \frac{1}{\underline{\omega}} (1-\underline{\lambda}\big/\overline{\omega})^t W_1(0,\tilde{x}_0,\hat{X}_1) \\
&\leq \frac{\overline{\omega}}{\underline{\omega}} (1-\underline{\lambda}\big/\overline{\omega})^t |\tilde{x}_0|^2,
\end{align*}
as claimed.
\end{proof}

\section{Bounded performance loss with {$T=1$}}\label{sec:1step}

As elaborated below, given any bounded $1$-step terminal penalty matrix sequence $(\hat{X}_{t+1})_{t\in\mathbb{N}_0}$ that ($i$) bounds the positive definite solution of~\eqref{eq:recursion}, 
%in the sense that $\hat{X}_{t} \succeq P_t$, 
and ($ii$) satisfies $\hat{X}_t \succeq \mathcal{R}_t(\hat{X}_{t+1})$, it is possible to bound the infinite-horizon performance loss 
associated with the corresponding state-feedback control policy~\eqref{eq:onestepmpc}, relative to the optimal policy~\eqref{eq:optpolicy}. The synthesis of such a sequence is considered in Section~\ref{sec:synth}.%associated with the corresponding receding-horizon policy~\eqref{eq:MPC_policy} with $T=1$. This result is elaborated below, after the presentation of some technical preliminaries. 

\begin{lemma}\label{lemma:ric_monotone}
Given $X, P \in \mathbb{S}^n_{++}$, if $X \succeq P$,
then
$\mathcal{R}_t(X) \succeq \mathcal{R}_t(P)$
for all $t \in \mathbb{N}_0$, with $\mathcal{R}_t(\cdot)$ as per~\eqref{eq:ricc_def}.
\end{lemma}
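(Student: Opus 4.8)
\textbf{Proof proposal for Lemma~\ref{lemma:ric_monotone}.} The plan is to exploit the representation $\mathcal{R}_t(P) = \tilde{Q}_t + \tilde{A}_t^{\trs} D(P) \tilde{A}_t$ from Remark~\ref{rem:PDricc}, where $D(P) = P^{1/2}(I + P^{1/2}\tilde{B}_t\tilde{R}_t^{-1}\tilde{B}_t^{\trs}P^{1/2})^{-1}P^{1/2}$. Since $\tilde{Q}_t$ does not depend on $P$ and congruence by $\tilde{A}_t$ preserves the Loewner order, it suffices to show that $D(\cdot)$ is monotone, i.e., $X \succeq P \succ 0$ implies $D(X) \succeq D(P)$. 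This reduces the claim to a statement purely about the map $D$.

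To prove monotonicity of $D$, I would use the alternative (Woodbury) form $D(P) = P - P\tilde{B}_t(\tilde{R}_t + \tilde{B}_t^{\trs}P\tilde{B}_t)^{-1}\tilde{B}_t^{\trs}P$, which one recognizes as a Schur complement: $D(P)$ is the Schur complement of the $(2,2)$ block in the block matrix
\begin{align*}
\begin{bmatrix} P & P\tilde{B}_t \\ \tilde{B}_t^{\trs}P & \tilde{R}_t + \tilde{B}_t^{\trs}P\tilde{B}_t \end{bmatrix} \succeq 0,
\end{align*}
and in fact this block matrix equals $\begin{bmatrix} I \\ \tilde{B}_t^{\trs}\end{bmatrix} P \begin{bmatrix} I & \tilde{B}_t\end{bmatrix} + \begin{bmatrix} 0 & 0 \\ 0 & \tilde{R}_t \end{bmatrix}$, so $X \succeq P$ gives an inequality between the two corresponding block matrices. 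Monotonicity of the Schur complement with respect to the Loewner order of the full matrix (with a fixed, invertible $(2,2)$ block direction) then yields $D(X) \succeq D(P)$. Alternatively, and perhaps more cleanly for a short conference paper, I would invoke the variational characterization: for any $\chi$,
\begin{align*}
\chi^{\trs}D(P)\chi = \min_{v}\ \bigl(\chi^{\trs}P\chi + 2\chi^{\trs}P\tilde{B}_t v + v^{\trs}(\tilde{R}_t + \tilde{B}_t^{\trs}P\tilde{B}_t)v\bigr) = \min_{v}\ \bigl((\chi + \tilde{B}_t v)^{\trs}P(\chi+\tilde{B}_t v) + v^{\trs}\tilde{R}_t v\bigr),
\end{align*}
and since the bracketed quantity is pointwise nondecreasing in $P$ (for each fixed $v$), so is its minimum over $v$; hence $\chi^{\trs}D(X)\chi \geq \chi^{\trs}D(P)\chi$ for all $\chi$.

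Combining, for any $\chi \in \mathbb{R}^n$,
\begin{align*}
\chi^{\trs}\mathcal{R}_t(X)\chi = \chi^{\trs}\tilde{Q}_t\chi + (\tilde{A}_t\chi)^{\trs}D(X)(\tilde{A}_t\chi) \geq \chi^{\trs}\tilde{Q}_t\chi + (\tilde{A}_t\chi)^{\trs}D(P)(\tilde{A}_t\chi) = \chi^{\trs}\mathcal{R}_t(P)\chi,
\end{align*}
which is exactly $\mathcal{R}_t(X) \succeq \mathcal{R}_t(P)$. I do not anticipate a serious obstacle here; the only mild care needed is to confirm that $D(\cdot)$ is well-defined and that the Woodbury identity applies, but both $\tilde{R}_t$ and $\tilde{R}_t + \tilde{B}_t^{\trs}P\tilde{B}_t$ are positive definite for $P \succeq 0$ (Remark~\ref{rem:lifted_props} and the hypothesis), so all inverses exist and the argument goes through verbatim.
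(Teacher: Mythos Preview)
Your argument is correct. The variational identity
\[
\chi^{\trs}D(P)\chi=\min_{v}\bigl((\chi+\tilde{B}_t v)^{\trs}P(\chi+\tilde{B}_t v)+v^{\trs}\tilde{R}_t v\bigr)
\]
is exactly the completion-of-squares computation for $D(P)$, and since the objective is monotone in $P$ for each fixed $v$, so is the infimum; congruence by $\tilde{A}_t$ then carries the order to $\mathcal{R}_t$. The Schur-complement route you sketch is equally valid.

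By contrast, the paper does not prove this lemma at all: its ``proof'' consists of a single citation to~\cite[Lemma~10.1]{bitmead1991riccati}. So your proposal is not the same approach but rather a self-contained replacement for an external reference. What you gain is independence from the cited source and an argument that uses only objects already introduced in the paper (Remark~\ref{rem:PDricc}, the Woodbury identity, positive definiteness of $\tilde{R}_t$); what the paper gains by citing is brevity. Either the variational or the Schur-complement version would fit comfortably in a few lines if the authors wished to make the paper self-contained.
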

\begin{proof}
This result is taken from~\cite[Lemma~10.1]{bitmead1991riccati}.
\end{proof}

%See~\cite[Lemma~10.1]{bitmead1991riccati} for a proof of Lemma~\ref{lemma:ric_monotone}.

\begin{lemma}\label{lemma:tilde_v_lipschitz}
Given $X, P \in \mathbb{S}^n_{++}$, if $X \succeq P$, then
\begin{align*}
W_1(t,\chi,X) - W_1(t,\chi,P) \leq \|P\|_2 |\chi|^2 (\exp(\delta(X,P)) - 1)
\end{align*}
for all $t \in \mathbb{N}_0$, and $\chi \in \mathbb{R}^n$, with $\delta(\cdot,\cdot)$ as per~\eqref{eq:delta_def}, and $W_1(\cdot,\cdot,\cdot)$ as per~\eqref{eq:v_tilde_def}.
% and
% \begin{align*}
% c = \max_{t \in \mathbb{N}_0} e^{\max(\|\log(\mathcal{R}_t(X_{t+1}))\|_F,\|\log(\mathcal{R}_t(P_{t+1})\|_F))} .
% \end{align*}
\end{lemma}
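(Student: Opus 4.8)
The plan is to use the closed-form expression $W_1(t,\chi,X) = \chi^{\trs}\mathcal{R}_t(X)\chi$ from Lemma~\ref{lemma:p_opt}, which reduces the claim to the matrix inequality
\begin{align*}
\chi^{\trs}\big(\mathcal{R}_t(X) - \mathcal{R}_t(P)\big)\chi \leq \|P\|_2\,|\chi|^2\,\big(\exp(\delta(X,P))-1\big).
\end{align*}
Since this must hold for all $\chi$, it suffices to show $\|\mathcal{R}_t(X) - \mathcal{R}_t(P)\|_2 \leq \|P\|_2(\exp(\delta(X,P))-1)$; indeed, because $X \succeq P$, Lemma~\ref{lemma:ric_monotone} gives $\mathcal{R}_t(X) \succeq \mathcal{R}_t(P)$, so $\mathcal{R}_t(X)-\mathcal{R}_t(P) \in \mathbb{S}^n_+$ and the quadratic form is bounded by $\lambda_{\max}(\mathcal{R}_t(X)-\mathcal{R}_t(P))|\chi|^2 = \|\mathcal{R}_t(X)-\mathcal{R}_t(P)\|_2|\chi|^2$.

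First I would invoke Lemma~\ref{lemma:u-v_bound} with $Y = \mathcal{R}_t(X)$ and $Z = \mathcal{R}_t(P)$: since both are positive definite (Remark~\ref{rem:PDricc} and Assumption~\ref{asm:ctr_obs_uniform}) and $\mathcal{R}_t(X) \succeq \mathcal{R}_t(P)$ by Lemma~\ref{lemma:ric_monotone}, the lemma yields
\begin{align*}
\|\mathcal{R}_t(X)-\mathcal{R}_t(P)\|_2 \leq \|\mathcal{R}_t(P)\|_2\big(\exp(\delta(\mathcal{R}_t(X),\mathcal{R}_t(P)))-1\big).
\end{align*}
Next I would control the two surviving quantities. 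For the leading factor, $\|\mathcal{R}_t(P)\|_2 \leq \|P\|_2$: using the representation $\mathcal{R}_t(P) = \tilde{Q}_t + \tilde{A}_t^{\trs}D(P)\tilde{A}_t$ from Remark~\ref{rem:PDricc} with $0 \preceq D(P) \preceq P$... actually this needs care, so more robustly I would instead just bound $\|\mathcal{R}_t(P)\|_2 \le \|P\|_2$ by the monotonicity argument if it holds, or — safer — keep $\|\mathcal{R}_t(P)\|_2$ and note it is dominated by $\|P\|_2$ only after checking $\mathcal{R}_t(P) \preceq$ something; on reflection the cleanest route is: since $P \preceq \|P\|_2 I$ and $\mathcal{R}_t$ is monotone (Lemma~\ref{lemma:ric_monotone}), and one can verify $\mathcal{R}_t(cI) \preceq c\,\mathcal{R}_t(I)$-type scaling is not automatic, I would instead use the contraction property of the Riccati operator in the Riemannian metric. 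This is the crux.

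The decisive step is bounding $\delta(\mathcal{R}_t(X),\mathcal{R}_t(P))$. Here I would appeal to the strict contraction property of the Riccati operator with respect to the Riemannian distance $\delta$ — the foundational result from \cite{bougerol1993kalman} as used in \cite{sun2023riccati} — which gives $\delta(\mathcal{R}_t(X),\mathcal{R}_t(P)) \leq \delta(X,P)$ (non-expansiveness suffices here; strictness is not needed for this one-step bound). Combining the contraction bound $\exp(\delta(\mathcal{R}_t(X),\mathcal{R}_t(P)))-1 \leq \exp(\delta(X,P))-1$ with the leading-factor bound $\|\mathcal{R}_t(P)\|_2 \leq \|P\|_2$ (which follows since $\mathcal{R}_t(P) = \tilde{Q}_t + \tilde{A}_t^{\trs}D(P)\tilde{A}_t$ and, after the analogous completing-the-square manipulation, $\chi^{\trs}\mathcal{R}_t(P)\chi = L_1(t,\chi,-K_t^P\chi,P) \le L_1$ evaluated at a feedback making the terminal state vanish, giving $\mathcal{R}_t(P) \preceq$ the finite quantity in \eqref{eq:RiccBounded}; but to get $\preceq \|P\|_2 I$ directly one uses $\mathcal{R}_t(P) \preceq \mathcal{R}_t(\|P\|_2 I)$ and a homogeneity estimate) yields the claimed inequality. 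The main obstacle I anticipate is pinning down the bound $\|\mathcal{R}_t(P)\|_2 \leq \|P\|_2$ cleanly — the Riccati operator is not homogeneous of degree one — so I would likely need the intermediate step that $\mathcal{R}_t(P) \preceq P$ does not hold in general either, and instead carry $\|\mathcal{R}_t(P)\|_2$ through, or observe that for the intended application $P = P_t$ is the stabilizing solution so $\mathcal{R}_t(P_{t+1}) = P_t \preceq \|P_{t+1}\|_2$-type bounds are available from boundedness of the Riccati solution; failing a slick argument, the honest statement would replace $\|P\|_2$ by $\|\mathcal{R}_t(P)\|_2$, which is still bounded uniformly in $t$ by \eqref{eq:RiccBounded}.
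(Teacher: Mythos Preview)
Your plan matches the paper's proof line for line: write $W_1(t,\chi,\cdot)=\chi^{\trs}\mathcal{R}_t(\cdot)\chi$ via Lemma~\ref{lemma:p_opt}, bound the difference by $|\chi|^2\|\mathcal{R}_t(X)-\mathcal{R}_t(P)\|_2$, use monotonicity (Lemma~\ref{lemma:ric_monotone}) so that Lemma~\ref{lemma:u-v_bound} applies, and finish with the Riccati non-expansiveness $\delta(\mathcal{R}_t(X),\mathcal{R}_t(P))\le\delta(X,P)$ from~\cite{sun2023riccati}.

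Your hesitation about the leading factor is warranted and worth recording: Lemma~\ref{lemma:u-v_bound} applied to $Y=\mathcal{R}_t(X)\succeq\mathcal{R}_t(P)=Z$ produces the prefactor $\|\mathcal{R}_t(P)\|_2$, not $\|P\|_2$, and $\|\mathcal{R}_t(P)\|_2\le\|P\|_2$ is false in general (take $P$ small so that $\mathcal{R}_t(P)\approx\tilde{Q}_t$). The paper's proof simply writes $\|P\|_2$ at that step, invoking Lemma~\ref{lemma:u-v_bound}, without justifying the passage from $\|\mathcal{R}_t(P)\|_2$ to $\|P\|_2$; so the gap you flagged is present in the paper as stated. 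Your proposed remedy---carry $\|\mathcal{R}_t(P)\|_2$ through---is the honest fix, and it is harmless for the only downstream use (Theorem~\ref{theorem:regret_bound}), where $P=P_{t+1}$, $\mathcal{R}_t(P_{t+1})=P_t$, and both are dominated by $\overline{\lambda}$.
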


\begin{proof}
By Lemma~\ref{lemma:p_opt},
$W_1(t,\chi,X) = \chi^{\trs} \tilde{\mathcal{R}}_t(X) \chi$, and $
W_1(t,\chi,P) = \chi^{\trs} \tilde{\mathcal{R}}_t(P) \chi$, whereby
\begin{align}
W_1(t,\chi,X) - W_1(t,\chi,P) &= \chi^{\trs} \tilde{\mathcal{R}}_t(X)\chi - \chi^{\trs}\tilde{\mathcal{R}}_t(P)\chi \nonumber\\
&\leq |\chi|^2 \|\tilde{\mathcal{R}}_t(X) - \tilde{\mathcal{R}}_t(P)\|_2. \label{eq:vp1vp2}
\end{align}
Given $X\succeq P$ by hypothesis, 
%it follows from 
% With $X \succeq P$,
% $\tilde{\mathcal{R}}_t(X) \succeq \tilde{\mathcal{R}}_t(P)$ follows from 
 %that 
$\hat{X}:=\tilde{\mathcal{R}}_t(X) \succeq \tilde{\mathcal{R}}_t(P)=:\hat{P}$ by Lemma~\ref{lemma:ric_monotone}, 
and thus, 
\begin{align}\label{eq:rp1rp2}
\|\hat{X} - \hat{P}\|_2 \leq \|P\|_2 (\exp(\delta(\hat{X}, \hat{P})) - 1) 
\end{align}
by Lemma~\ref{lemma:u-v_bound}.
% Since $\tilde{\mathcal{R}}_t(P_1), \tilde{\mathcal{R}}_t(P_2) \in \mathbb{S}^n_{++}$, there exist sufficiently large $\underline{\lambda}, \tau > 0$ such that $\|\tilde{\mathcal{R}}_t(P_1)\|_2 \leq \underline{\lambda}$, $\|\tilde{\mathcal{R}}_t(P_2)\|_2 \leq \underline{\lambda}$, and $\delta(\tilde{\mathcal{R}}_t(P_1), \tilde{\mathcal{R}}_t(P_2)) \leq \tau$. The following inequality
% \begin{align*}
% \exp(\delta(\tilde{\mathcal{R}}_t(P_1), \tilde{\mathcal{R}}_t(P_2))) - 1 \leq \exp(\tau) \delta(\tilde{\mathcal{R}}_t(P_1), \tilde{\mathcal{R}}_t(P_2))
% \end{align*}
% holds for any $0 \leq \delta(\tilde{\mathcal{R}}_t(P_1), \tilde{\mathcal{R}}_t(P_2)) \leq \tau$.
Since $\tilde{\mathcal{R}}_t(\cdot)$ is a contraction~\cite[Theorem~1]{sun2023riccati}, in the sense that
\begin{align*}
\delta(\hat{X},\hat{P})=\delta(\tilde{\mathcal{R}}_t(X), \tilde{\mathcal{R}}_t(P)) \leq \delta(X,P),
\end{align*}
it follows from~\eqref{eq:rp1rp2} that $\|\hat{X}-\hat{P}\|_2=\|\tilde{\mathcal{R}}_t(X)-\tilde{\mathcal{R}}_t(X)\|_2 \leq \|P\|_2(\exp(\delta(X,P))-1)$, which in combination with~\eqref{eq:vp1vp2} yields the result.
% \begin{align*}
% |W_{t,1}(\xi,P_1) - W_{t,1}(\xi,P_2)| &\leq \min(\|\tilde{\mathcal{R}}_t(P_1)\|_2, \|\tilde{\mathcal{R}}_t(P_2)\|_2) \cdot \exp(\tau) \cdot |\xi|^2 \cdot \delta(P_1,P_2) \\
% &\leq \underline{\lambda} \cdot \exp(\tau) \cdot |\xi|^2 \cdot \delta(P_1,P_2) . \qedhere
% \end{align*}
% \begin{align*}
% \|\tilde{\mathcal{R}}_t(P_1) - \tilde{\mathcal{R}}_t(P_2)\|_2 \leq c_t \cdot \delta(\tilde{\mathcal{R}}_t(P_1), \tilde{\mathcal{R}}_t(P_2)) ,
% \end{align*}
% where
% \begin{align*}
% c_t = e^{\max(\|\log(\tilde{\mathcal{R}}_t(X_{t+1}))\|_F,\|\log(\tilde{\mathcal{R}}_t(P_{t+1})\|_F))} .
% \end{align*}
% Note that by the contraction properties of $\tilde{\mathcal{R}}_t(\cdot)$,
% \begin{align*}
% \delta(\tilde{\mathcal{R}}_t(P_1), \tilde{\mathcal{R}}_t(P_2)) \leq \delta(P_1,P_2) .
% \end{align*}
% Let $c = \max_{t \in \mathbb{N}_0}(c_t)$. The claimed result follows.
\end{proof}

\begin{theorem}\label{theorem:regret_bound}
Consider the state-feedback control policy~\eqref{eq:onestepmpc}
%, corresponding to~\eqref{eq:MPC_policy} with~$T=1$, and 
for given bounded
%$1$-step bounded terminal penalty matrix 
sequence 
$(\hat{X}_{t+1})_{t\in\mathbb{N}_0} \subset \mathbb{S}^n_{++}$. Suppose $$\hat{X}_t \succeq \mathcal{R}_t(\hat{X}_{t+1}),\quad \hat{X}_t \succeq P_t,\quad \text{and}\quad \delta(\hat{X}_t,P_t) \leq \eta,$$ for all $t\in\mathbb{N}$, with $\eta\in\mathbb{R}_{>0}$, $\delta(\cdot,\cdot)$ as per \eqref{eq:delta_def},
$(P_t)_{t\in\mathbb{N}_0}$ as the bounded positive definite 
solution of~\eqref{eq:recursion}, %(see Remark~\ref{rem:PDricc}), 
and $\mathcal{R}_t(\cdot)$ as per \eqref{eq:ricc_def}. Then, for initial state $\xi\in\mathbb{R}^n$, the 
%receding-horizon 
performance loss as defined in~\eqref{eq:regret_def} satisfies
% \begin{align*}
% \frac{\overline{\lambda} (\exp(\eta)-1) b |\xi|^2}{1-\rho} ,
% \end{align*}
\begin{align*}
\beta(\xi) \leq \frac{\overline{\lambda}}{\underline{\lambda}}
\frac{\overline{\omega}}{\underline{\omega}} \overline{\omega}
(\exp(\eta)-1) |\xi|^2,
\end{align*}
with
\begin{align}\label{eq:p_ub}
\overline{\lambda} := \sup_{t \in \mathbb{N}_0} \|P_t\|_2 < +\infty,
\end{align}
and $\overline{\omega}, \underline{\omega}, \underline{\lambda} \in \mathbb{R}_{>0}$ as per \eqref{eq:w_ub}, 
\eqref{eq:w_lb}, and~\eqref{eq:gamma_def}.
%, respectively.
% and
% \begin{align*}
% c = \max_{t \in \mathbb{N}_0} e^{\max(\|\log(\mathcal{R}_t(X_{t+1}))\|_F,\|\log(\mathcal{R}_t(P_{t+1})\|_F))} .
% \end{align*}
\end{theorem}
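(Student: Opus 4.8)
The plan is to bound the infinite-horizon cost accumulated under the receding-horizon policy~\eqref{eq:onestepmpc} by comparing, step by step, the decrease in the certainty-equivalent value function $W_1(t,\tilde{x}_t,\hat{X}_{t+1})$ with the running cost, and then to control the gap between $W_1$ evaluated at $\hat{X}_{t+1}$ versus at $P_{t+1}$ using Lemma~\ref{lemma:tilde_v_lipschitz} together with the exponential stability bound~\eqref{eq:cl_exp_stable} from Theorem~\ref{theorem:mpc_1step_stable}. Concretely, write $\tilde{u}$ for the input sequence generated by~\eqref{eq:onestepmpc} and $\tilde{x}_t$ for the resulting lifted state trajectory from $\tilde{x}_0=\xi$; the aim is to show $J(0,\xi,\tilde{u}) - \xi^{\trs}P_0\xi$ is bounded by the claimed quantity, using $\beta(\xi)=J(0,\xi,\tilde{u})-\xi^{\trs}P_0\xi$ from the remark after~\eqref{eq:regret_def}.

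First I would establish the per-step inequality. From~\eqref{eq:w_1_decreasing} in the proof of Theorem~\ref{theorem:mpc_1step_stable}, the running cost at step $t$ satisfies
\begin{align*}
\tilde{x}_t^{\trs}\tilde{Q}_t\tilde{x}_t + \tilde{u}_t^{\trs}\tilde{R}_t\tilde{u}_t \leq W_1(t,\tilde{x}_t,\hat{X}_{t+1}) - W_1(t+1,\tilde{x}_{t+1},\hat{X}_{t+2}).
\end{align*}
Summing over $t\in[0:T-1]$ telescopes to $\sum_{t=0}^{T-1}(\tilde{x}_t^{\trs}\tilde{Q}_t\tilde{x}_t + \tilde{u}_t^{\trs}\tilde{R}_t\tilde{u}_t) \leq W_1(0,\xi,\hat{X}_1) - W_1(T,\tilde{x}_T,\hat{X}_{T+1}) \leq W_1(0,\xi,\hat{X}_1)$, since $W_1\geq 0$. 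Letting $T\to\infty$ gives $J(0,\xi,\tilde{u}) \leq W_1(0,\xi,\hat{X}_1) = \xi^{\trs}\mathcal{R}_0(\hat{X}_1)\xi$. Hence $\beta(\xi) = J(0,\xi,\tilde{u}) - \xi^{\trs}P_0\xi \leq \xi^{\trs}(\mathcal{R}_0(\hat{X}_1) - P_0)\xi$. The difficulty with this naive bound is that it only uses the terminal penalty at $t=1$, whereas the hypothesis $\delta(\hat X_t,P_t)\le\eta$ holds at every step; the crude bound $\mathcal{R}_0(\hat X_1)-P_0 \preceq \overline\omega(\exp(\delta(\hat X_1,P_1))-1)$-style estimate via Lemma~\ref{lemma:tilde_v_lipschitz} would not by itself produce the ratios $\overline\lambda/\underline\lambda$ and $\overline\omega/\underline\omega$ appearing in the claimed bound. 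So the sharper route is needed: re-do the telescoping keeping the $W_1(T,\tilde x_T,\cdot)$ term, but compare $W_1$ at $\hat X_{T+1}$ with $W_1$ at $P_{T+1}$; since $W_1(T,\tilde x_T,P_{T+1})=\tilde x_T^{\trs}P_T\tilde x_T\to 0$ by exponential stability, one recovers $J(0,\xi,\tilde u)\le \xi^{\trs}P_0\xi + \sum_{t\ge 0}\big(W_1(t,\tilde x_t,\hat X_{t+1}) - W_1(t,\tilde x_t, P_{t+1})\big)$ after careful bookkeeping of the $\mathcal R_t(P_{t+1})=P_t$ identity. Then Lemma~\ref{lemma:tilde_v_lipschitz} bounds each summand by $\|P_{t+1}\|_2|\tilde x_t|^2(\exp(\delta(\hat X_{t+1},P_{t+1}))-1) \le \overline\lambda(\exp(\eta)-1)|\tilde x_t|^2$, and~\eqref{eq:cl_exp_stable} bounds $\sum_t|\tilde x_t|^2 \le \frac{\overline\omega}{\underline\omega}\sum_t(1-\underline\lambda/\overline\omega)^t|\xi|^2 = \frac{\overline\omega}{\underline\omega}\cdot\frac{\overline\omega}{\underline\lambda}|\xi|^2$, yielding exactly $\beta(\xi)\le \frac{\overline\lambda}{\underline\lambda}\frac{\overline\omega}{\underline\omega}\overline\omega(\exp(\eta)-1)|\xi|^2$.

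The main obstacle I anticipate is the second telescoping step: correctly accounting for the mismatch between the value function propagated with the approximate penalties $\hat X_{t+1}$ and the one propagated with the true $P_{t+1}$, so that the per-step ``defect'' $W_1(t,\tilde x_t,\hat X_{t+1})-W_1(t,\tilde x_t,P_{t+1})$ is what survives after cancellation — this requires using $\mathcal R_t(P_{t+1})=P_t$ and the decrease inequality in the right order, and confirming the residual boundary term vanishes via exponential stability (which is applicable precisely because the hypotheses of Theorem~\ref{theorem:mpc_1step_stable} hold). The remaining ingredients — the geometric-series sum of the state-norm bound and the application of Lemma~\ref{lemma:tilde_v_lipschitz} with $\delta(\hat X_{t+1},P_{t+1})\le\eta$ and $\|P_{t+1}\|_2\le\overline\lambda$ — are routine, as is noting $\overline\lambda<+\infty$ follows from boundedness of $(P_t)$ guaranteed under Assumption~\ref{asm:uni_ctr_obs}.
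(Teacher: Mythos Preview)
Your proposal is correct and follows essentially the same route as the paper: derive the per-step bound $l(t,\tilde{x}_t)\le \big(W_1(t,\tilde{x}_t,\hat X_{t+1})-W_1(t,\tilde{x}_t,P_{t+1})\big)+\tilde{x}_t^{\trs}P_t\tilde{x}_t-\tilde{x}_{t+1}^{\trs}P_{t+1}\tilde{x}_{t+1}$, apply Lemma~\ref{lemma:tilde_v_lipschitz} with $\delta(\hat X_{t+1},P_{t+1})\le\eta$ and $\|P_{t+1}\|_2\le\overline{\lambda}$, sum, let the $P_{N+1}$ boundary term vanish by exponential stability, and control $\sum_t|\tilde{x}_t|^2$ via~\eqref{eq:cl_exp_stable}. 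The only cosmetic difference is that the paper starts from the exact identity $l(t,\tilde{x}_t)=W_1(t,\tilde{x}_t,\hat X_{t+1})-\tilde{x}_{t+1}^{\trs}\hat X_{t+1}\tilde{x}_{t+1}$ and uses $\hat X_{t+1}\succeq P_{t+1}$ directly, whereas you start from the Lyapunov decrease~\eqref{eq:w_1_decreasing}; both collapse to the same per-step inequality.
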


\begin{proof}
% For $t \in \mathbb{N}_0$ and $\xi \in \mathbb{R}^n$, denote by $\tilde{u}_t \in \mathbb{R}^{md}$ the input associated with $W_{t,1}(\xi,X_{t+1})$, i.e.
% \begin{align}\label{eq:tilde_u_def}
% W_{t,1}(\xi,X_{t+1}) = L_{t,1}(\xi,\tilde{u}_t) .
% \end{align}
Let $\tilde{x}_t$ evolve from $\tilde{x}_0=\xi$ according to \eqref{eq:xj_evolve} with the input $\tilde{u}_t$
%=\tilde{u}^{RH}_t:=-(\tilde{R}_t+\tilde{B}^{\trs}\hat{X}_{t+1}\tilde{B}_t)^{-1}\tilde{B}_t^{\trs}\hat{X}_{t+1}\tilde{A}_t\tilde{x}_t$ 
as per 
\eqref{eq:onestepmpc}.
%=-(\tilde{R}_t+\tilde{B}^{\trs}\hat{X}_{t+1}\tilde{B}_t)^{-1}\tilde{B}_t^{\trs}\hat{X}_{t+1}\tilde{A}_t\tilde{x}_t$.
%that corresponds to the receding-horizon control policy~\eqref{eq:MPC_policy} with $T=1$; see \eqref{eq:u_rh_1}. 
Further, let 
%\begin{align*}
$l(t,\tilde{x}_t) := \tilde{x}_t^{\trs} \tilde{Q}_t \tilde{x}_t + \tilde{u}_t^{\trs} \tilde{R}_t \tilde{u}_t$,
%\end{align*}
noting that 
\begin{align} \label{eq:stage_lt}
l(t,\tilde{x}_t) = W_1(t,\tilde{x}_t,\hat{X}_{t+1}) - \tilde{x}_{t+1}^{\trs} \hat{X}_{t+1} \tilde{x}_{t+1},
\end{align}
where $W_1$ is given in \eqref{eq:v_tilde_def}. Since $W_1(t,\tilde{x}_t,P_{t+1}) = \tilde{x}_t^{\trs}P_t\tilde{x}_t$ by Lemma~\ref{lemma:p_opt}, and since $\hat{X}_{t+1}\succeq P_{t+1}$ by hypothesis, 
%it follows from \eqref{eq:stage_lt}, and Lemma~\ref{lemma:tilde_v_lipschitz}, that 
\begin{align*}
l(t,\tilde{x}_t) &\leq W_1(t,\tilde{x}_t,\hat{X}_{t+1}) - 
W_1(t,\tilde{x}_t,P_{t+1}) \nonumber \\
&\qquad\qquad\qquad\qquad + \tilde{x}_t^{\trs}P_t\tilde{x}_t 
-\tilde{x}_{t+1}^{\trs} P_{t+1} \tilde{x}_{t+1} \nonumber\\
& \leq 
\|P_{t+1}\|_2 |\tilde{x}_t|^2 (\exp(\delta(\hat{X}_{t+1},P_{t+1})) - 1 ) \nonumber \\
&\qquad\qquad\qquad\qquad + \tilde{x}_t^{\trs}P_t\tilde{x}_t 
-\tilde{x}_{t+1}^{\trs} P_{t+1} \tilde{x}_{t+1} %\label{eq:use_lemma} 
\nonumber
\\
% &\leq \overline{\lambda} |\tilde{x}_t|^2 (\exp(\delta(X_{t+1},P_{t+1})) - 1 ) \nonumber \\
% &\quad\quad + \tilde{x}_t^{\trs}P_t\tilde{x}_t 
% -\tilde{x}_{t+1}^{\trs} P_{t+1} \tilde{x}_{t+1}  %\label{eq:use_alpha_v}
% \\
&\leq \overline{\lambda} |\tilde{x}_t|^2 (\exp(\eta) - 1 ) + \tilde{x}_t^{\trs}P_t\tilde{x}_t 
-\tilde{x}_{t+1}^{\trs} P_{t+1} \tilde{x}_{t+1}, %\label{eq:use_eps}
\end{align*}
where $\overline{\lambda}<+\infty$ as $(P_t)_{t\in\mathbb{N}_0}$ is bounded, the second inequality holds by Lemma~\ref{lemma:tilde_v_lipschitz}, and the last follows from \eqref{eq:p_ub} and the hypothesis $\delta(\hat{X}_{t+T},P_{t+T})\leq \eta$.
As such, 
\begin{align*}
\sum_{t=0}^{N} l(t,\tilde{x}_t) \!
\leq \! \overline{\lambda}  (\exp(\eta) - 1 ) \sum_{t=0}^N|\tilde{x}_t|^2 + \xi^{\trs}P_0\xi 
-\tilde{x}_{N+1}^{\trs} P_{N+1} \tilde{x}_{N+1},
\end{align*}
for all $N\in\mathbb{N}$. Since $\tilde{x}_{N+1}\rightarrow 0$ as $N\rightarrow\infty$ by Theorem~\ref{theorem:mpc_1step_stable}, it follows from \eqref{eq:cl_exp_stable} that 
\begin{align*}
J(0,\xi,\tilde{u})
&=\lim_{N\rightarrow\infty} \sum_{t=0}^N l_t(\tilde{x}_t)\\
&\leq \overline{\lambda}  (\exp(\eta) - 1 ) \lim_{N\rightarrow\infty}\sum_{t=0}^N|\tilde{x}_t|^2 + \xi^{\trs}P_0\xi \\
&\leq \overline{\lambda}  (\exp(\eta) - 1 ) \frac{\overline{\omega}}{\underline{\omega}}\frac{\overline{\omega}}{\underline{\lambda}}|\xi|^2 + \xi^{\trs}P_0\xi.
\end{align*}
Therefore, $$\beta(\xi)= J(0,\xi,\tilde{u}) - \xi^{\trs} P_0 \xi \leq   (\exp(\eta) - 1 ) \frac{\overline{\lambda}}{\underline{\lambda}} \frac{\overline{\omega}}{\underline{\omega}} \overline{\omega} |\xi|^2,$$ as claimed.
\end{proof}

\section{Receding-horizon policy synthesis}\label{sec:synth}
In this section, it is shown how to set the prediction horizon $T\in\mathbb{N}$, and construct a sequence of terminal penalty matrices, to achieve a performance loss specification for the receding-horizon policy~\eqref{eq:MPC_policy}.
%such that the closed-loop is exponentially stable. 
The approach is based on a result, taken from~\cite{sun2023riccati}, that establishes a strict contraction property of the Riccati operator given in~\eqref{eq:ricc_def}.
%,
%to establish a bound on the performance loss.
%associated with the stabilizing receding horizon policy is presented below. This regret bound gives rise to a method for selecting the prediction horizon length $T \in \mathbb{N}$ such that the performance degradation associated with the receding horizon approximation does not exceed a specified approximation tolerance.

Given $T \in \mathbb{N}$, the proposed terminal penalty matrix sequence  $(X_{t+T})_{t\in\mathbb{N}_0}$ is given by 
%for each $t\in\mathbb{N}_0$,
%for the receding horizon scheme~\eqref{eq:MPC_policy} 
\begin{align}\label{eq:terminal_candidate}
X_{t+T} &= \tilde{Q}_{t+T} + \tilde{A}_{t+T}^{\trs}(\tilde{B}_{t+T} \tilde{B}_{t+T}^{\trs})^{-1} \tilde{B}_{t+T} \tilde{R}_{t+T} \nonumber\\
& \qquad\qquad\qquad \times \tilde{B}_{t+T}^{\trs}(\tilde{B}_{t+T} \tilde{B}_{t+T}^{\trs})^{-1} \tilde{A}_{t+T} \in \mathbb{S}^n_{++}.
\end{align}
%for $t \in \mathbb{N}_0$.

\begin{remark} \label{rem:X_tT_finite}
In view of Assumptions~\ref{asm:bounded_data} and \ref{asm:ctr_obs_uniform}, with $X_{t+T}$ as per \eqref{eq:terminal_candidate}, $\sup_{t \in \mathbb{N}_0} \|X_{t+T}\|_2 < +\infty$.
\end{remark}

\begin{lemma}\label{lemma:x_hat_combined}
Given $T\in\mathbb{N}$, and $(X_{t+T})_{t\in\mathbb{N}_0}$ as per \eqref{eq:terminal_candidate},  let 
\begin{align} \label{eq:hatX}
\hat{X}_{t+1} := \mathcal{R}_{t+1}\circ \cdots \circ \mathcal{R}_{t+T-1}(X_{t+T}),
\end{align}
for $t\in\mathbb{N}_0$.
Then, 
%$X_{t+T-1} \succeq  \mathcal{R}_{t+T-1}(X_{t+T})$,
$\hat{X}_{t} \succeq \mathcal{R}_t(\hat{X}_{t+1})$, and $\hat{X}_{t} \succeq P_{t}$, 
%for all $t\in\mathbb{N}$,  
% Given $T \in \mathbb{N}$, with $X_{t+T} \in \mathbb{S}^n_{++}$ as per~\eqref{eq:terminal_candidate}, and $\mathcal{R}_{\bullet}(\cdot)$ as per \eqref{eq:ricc_def},
% \begin{align*}
% \mathcal{R}_{t+1} \circ \cdots \circ \mathcal{R}_{t+T}(X_{t+T+1}) \preceq \mathcal{R}_{t+1} \circ \cdots \circ \mathcal{R}_{t+T-1}(X_{t+T})
% \end{align*}
% and
% %\begin{align*}
% $\mathcal{R}_{t+1} \circ \cdots \circ \mathcal{R}_{t+T-1}(X_{t+T}) \succeq P_{t+1}$
% %\end{align*}
% for all $t \in \mathbb{N}_0$, 
where $(P_{t})_{t\in\mathbb{N}_0}$ is 
%the indexed element of 
the bounded positive definite solution of~\eqref{eq:recursion}. Further, $\sup_{t\in\mathbb{N}_0}\|\hat{X}_{t+1}\|_2 <+\infty$.
\end{lemma}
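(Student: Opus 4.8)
The plan is to establish the three assertions in turn, exploiting the monotonicity (Lemma~\ref{lemma:ric_monotone}) and boundedness properties of the Riccati operator, together with the specific structure of the terminal candidate~\eqref{eq:terminal_candidate}. The crucial observation is that, by the bound~\eqref{eq:RiccBounded} derived just before Theorem~\ref{theorem:mpc_1step_stable}, the matrix $X_{t+T}$ in~\eqref{eq:terminal_candidate} is precisely an upper bound for $\mathcal{R}_{t+T}(X)$ for \emph{every} $X\in\mathbb{S}^n_+$; i.e., $X_{t+T}\succeq\mathcal{R}_{t+T}(X)$ for all $X\succeq0$. This single fact drives all three claims.

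First, for the inequality $\hat{X}_t\succeq\mathcal{R}_t(\hat{X}_{t+1})$: observe that $\hat{X}_t=\mathcal{R}_t\circ\mathcal{R}_{t+1}\circ\cdots\circ\mathcal{R}_{t+T-1}(X_{t+T})=\mathcal{R}_t(\hat{X}_{t+1}')$ where $\hat{X}_{t+1}'=\mathcal{R}_{t+1}\circ\cdots\circ\mathcal{R}_{t+T-1}(X_{t+T})$, and one checks that $\hat{X}_{t+1}'=\hat{X}_{t+1}$ by comparing the definition~\eqref{eq:hatX} of $\hat{X}_{t+1}$ with $t$ replaced appropriately — actually the cleanest route is to show $\hat{X}_t\succeq\mathcal{R}_t(\hat{X}_{t+1})$ directly by a telescoping monotonicity argument: since $X_{t+T}\succeq\mathcal{R}_{t+T}(X_{t+1+T})$ by the universal bound above (applied with $X=X_{t+1+T}$), apply $\mathcal{R}_{t+1}\circ\cdots\circ\mathcal{R}_{t+T-1}$ to both sides and invoke Lemma~\ref{lemma:ric_monotone} $(T-1)$ times to get $\hat{X}_{t+1}=\mathcal{R}_{t+1}\circ\cdots\circ\mathcal{R}_{t+T-1}(X_{t+T})\succeq\mathcal{R}_{t+1}\circ\cdots\circ\mathcal{R}_{t+T-1}(\mathcal{R}_{t+T}(X_{t+1+T}))=\mathcal{R}_{t+1}\circ\cdots\circ\mathcal{R}_{t+T}(X_{t+1+T})$. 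The right-hand side is exactly the sequence obtained by shifting the index in~\eqref{eq:hatX}, which after applying $\mathcal{R}_t$ to the relation and using monotonicity once more yields $\hat{X}_t\succeq\mathcal{R}_t(\hat{X}_{t+1})$. I need to be careful to line up indices: $\hat{X}_{t+1}$ as defined uses operators $\mathcal{R}_{t+1},\dots,\mathcal{R}_{t+T-1}$ on terminal data $X_{t+T}$, whereas the shifted version of $\hat{X}$ at "time $t+1$'' would use $\mathcal{R}_{t+1},\dots,\mathcal{R}_{t+T}$ on $X_{t+1+T}$; the inequality $X_{t+T}\succeq\mathcal{R}_{t+T}(X_{t+1+T})$ bridges exactly this gap.

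Second, for $\hat{X}_t\succeq P_t$: by~\eqref{eq:RiccBounded}, $X_{t+T}\succeq\mathcal{R}_{t+T}(P_{t+T+1})=P_{t+T}$ using the recursion~\eqref{eq:recursion}. Apply $\mathcal{R}_t\circ\mathcal{R}_{t+1}\circ\cdots\circ\mathcal{R}_{t+T-1}$ to both sides; by Lemma~\ref{lemma:ric_monotone} (applied $T$ times), $\mathcal{R}_t\circ\cdots\circ\mathcal{R}_{t+T-1}(X_{t+T})\succeq\mathcal{R}_t\circ\cdots\circ\mathcal{R}_{t+T-1}(P_{t+T})$; the left side is $\hat{X}_t$ (one application of $\mathcal{R}_t$ to $\hat{X}_{t+1}$) and the right side telescopes through~\eqref{eq:recursion} to $P_t$. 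Third, for $\sup_{t}\|\hat{X}_{t+1}\|_2<+\infty$: by Remark~\ref{rem:X_tT_finite}, $\sup_t\|X_{t+T}\|_2<+\infty$, so there is a single $\bar{c}>0$ with $X_{t+T}\preceq\bar{c}I_n$ for all $t$; applying the $T-1$ compositions of Riccati operators, monotonicity gives $\hat{X}_{t+1}\preceq\mathcal{R}_{t+1}\circ\cdots\circ\mathcal{R}_{t+T-1}(\bar{c}I_n)$, and then the observation (made right after~\eqref{eq:RiccBounded}) that $\sup_t\|\mathcal{R}_t(X)\|_2<+\infty$ for any fixed $X\in\mathbb{S}^n_+$ — iterated $T-1$ times — yields a uniform bound on $\|\hat{X}_{t+1}\|_2$.

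The main obstacle I anticipate is purely bookkeeping: getting the index ranges on the compositions in~\eqref{eq:hatX} to align correctly when establishing $\hat{X}_t\succeq\mathcal{R}_t(\hat{X}_{t+1})$, since $\hat{X}_{t+1}$ is \emph{not} literally $\mathcal{R}_{t+1}(\hat{X}_{t+2})$ — the composition length is fixed at $T-1$, not growing, so the two are related through the terminal-matrix inequality $X_{t+T}\succeq\mathcal{R}_{t+T}(X_{t+T+1})$ rather than by a recursion on $\hat{X}$. Once that identity is pinned down, everything else is a routine application of monotonicity and the already-established uniform boundedness of the Riccati operator.
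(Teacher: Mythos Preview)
Your approach is essentially the same as the paper's: both hinge on the observation that~\eqref{eq:RiccBounded} makes $X_s$ an upper bound for $\mathcal{R}_s(\cdot)$ applied to anything, and then propagate via the monotonicity Lemma~\ref{lemma:ric_monotone}. The paper's version is slightly cleaner in that it uses $X_{t+T-1}\succeq\mathcal{R}_{t+T-1}(X_{t+T})$ and applies $\mathcal{R}_t\circ\cdots\circ\mathcal{R}_{t+T-2}$, landing directly on $\hat{X}_t\succeq\mathcal{R}_t(\hat{X}_{t+1})$ without any relabeling.

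One bookkeeping slip to fix: in your second step you write ``the left side is $\hat{X}_t$ (one application of $\mathcal{R}_t$ to $\hat{X}_{t+1}$)'', but $\mathcal{R}_t(\hat{X}_{t+1})$ is \emph{not} $\hat{X}_t$ --- that they differ is precisely the content of the first inequality. What your chain actually gives is $\mathcal{R}_t(\hat{X}_{t+1})\succeq P_t$; you then need the already-established $\hat{X}_t\succeq\mathcal{R}_t(\hat{X}_{t+1})$ to conclude. Similarly, in the first step, once you have $\hat{X}_{t+1}\succeq\mathcal{R}_{t+1}(\hat{X}_{t+2})$ you are done by relabeling $t+1\mapsto t$; applying $\mathcal{R}_t$ as you suggest does not yield the desired inequality. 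These are exactly the index alignments you flagged as the main obstacle, so just tidy them and the proof goes through.
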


\begin{proof}
Given~\eqref{eq:terminal_candidate}, it follows from~\eqref{eq:RiccBounded} that $\mathcal{R}_{t+T-1}(X_{t+T}) \preceq X_{t+T-1}$ for all $t\in\mathbb{N}$, and thus, $\hat{X}_t \succeq \mathcal{R}_t(\hat{X}_{t+1})$ by Lemma~\ref{lemma:ric_monotone}. Similarly, $X_{t+T-1} \succeq \mathcal{R}_{t+T-1}(P_{t+T}) = P_{t+T-1} $ for all $t \in \mathbb{N}$, and thus, $\hat{X}_t =\mathcal{R}_{t}\circ \cdots \circ \mathcal{R}_{t+T-2}(X_{t+T-1})
 \succeq \mathcal{R}_{t} \circ \cdots \circ \mathcal{R}_{t+T-2}(P_{t+T-1})=P_t$. Boundedness of $(\hat{X}_{t+1})_{t\in\mathbb{N}_0}$
 %,% as per \eqref{eq:hatX}, 
 also follows from~\eqref{eq:RiccBounded} and Assumptions~\ref{asm:bounded_data} and~\ref{asm:ctr_obs_uniform}.
% See proofs of~\cite[Lem.~11 and Lem.~12]{sun2023receding}.
%and \cite[Lemma~12]{sun2023receding}.
\end{proof}

The following is taken from~\cite[Thm.~1]{sun2023riccati}. It characterizes a strict contraction property of the Riccati operator~\eqref{eq:ricc_def} with respect to the Reimannian metric in Definition~\ref{def:Riemannian}.

\begin{lemma}\label{lemma:contraction}
For all $t\in\mathbb{N}$, and $Y,Z \in \mathbb{S}_{++}^{n}$,
\begin{align}
\delta(\mathcal{R}_t(Y), \mathcal{R}_t(Z)) \leq \rho_t \cdot \delta(Y,Z),
\end{align}
where $\rho_t = \zeta_t/(\zeta_t + \epsilon_t)< 1$,
\begin{subequations}
\label{eq:zeteps}
\begin{align}
\zeta_t &= \|(\tilde{Q}_t + \tilde{Q}_t\tilde{A}_t^{-1}\tilde{B}_t\tilde{R}_t^{-1}\tilde{B}_t^{\trs}(\tilde{A}_t^{\trs})^{-1}\tilde{Q}_t)^{-1}\|_2 , \label{eq:zeta_t}\\
\epsilon_t &= \lambda_{\min} (\tilde{A}_t^{-1}\tilde{B}_t (\tilde{R}_t\!+\!\tilde{B}_t^{\trs}(\tilde{A}_t^{\trs})\!^{-1}\tilde{Q}_t\tilde{A}_t^{-1}\tilde{B}_t)\!^{-1}\! \tilde{B}_t^{\trs}(\tilde{A}_t^{\trs})\!^{-1} ) . \label{eq:eps_t}
\end{align}
\end{subequations}
\end{lemma}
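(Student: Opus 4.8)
The plan is to factor the Riccati operator $\mathcal{R}_t$ into a composition of elementary self-maps of $\mathbb{S}^n_{++}$ whose effect on the Riemannian distance $\delta$ is either an isometry or a (strict) contraction, and to track where contraction is gained. By the Woodbury identity (as in Remark~\ref{rem:PDricc}), $\mathcal{R}_t(P) = \tilde{Q}_t + \tilde{A}_t^{\trs}(P^{-1} + \tilde{S}_t)^{-1}\tilde{A}_t$, where $\tilde{S}_t := \tilde{B}_t\tilde{R}_t^{-1}\tilde{B}_t^{\trs} \in \mathbb{S}^n_{++}$ — positive definite because $\tilde{B}_t$ has full row rank and $\tilde{R}_t \succ 0$ (Remark~\ref{rem:lifted_props}). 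Writing $\iota(X) := X^{-1}$, $\Psi_M(X) := M^{\trs}XM$ for nonsingular $M$, and $\tau_C(X) := X + C$, this reads $\mathcal{R}_t = \tau_{\tilde{Q}_t}\circ\Psi_{\tilde{A}_t}\circ\iota\circ\tau_{\tilde{S}_t}\circ\iota$.

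First I would record the metric behaviour of the three building blocks. (i) $\iota$ is a $\delta$-isometry, since $\mathrm{spec}(Y^{-1}Z)$ consists of the reciprocals of $\mathrm{spec}(YZ^{-1})$ and $\delta$ only sees $\sum\log^2\lambda_i$. (ii) $\Psi_M$ is a $\delta$-isometry, since $\mathrm{spec}(M^{\trs}YM(M^{\trs}ZM)^{-1}) = \mathrm{spec}(YZ^{-1})$. (iii) For $C\in\mathbb{S}^n_+$ the shift $\tau_C$ is $\delta$-nonexpansive, and — the quantitative heart of the matter — if moreover $C\succeq\beta I\succ 0$ and the arguments are confined to $\{X : 0\prec X\preceq\alpha I\}$, then $\tau_C$ is a strict contraction with ratio at most $\alpha/(\alpha+\beta)$. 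I would establish (iii) from the length-of-geodesic description of $\delta$: along the geodesic $\gamma(s) = Y^{1/2}(Y^{-1/2}ZY^{-1/2})^sY^{1/2}$ one has $\delta(Y,Z) = \int_0^1\|\gamma'(s)\|_{\gamma(s)}\,ds$ with $\|V\|_P^2 = \tr(P^{-1}VP^{-1}V)$; the curve $\tau_C\circ\gamma$ joins $Y+C$ to $Z+C$, so $\delta(Y+C,Z+C)\le\int_0^1\|\gamma'(s)\|_{\gamma(s)+C}\,ds$, and the pointwise estimate $\tr(KWKW)\le\|K\|_2^2\|W\|_F^2$ for $K\succeq 0$ gives $\|V\|_{\gamma(s)+C}\le\lambda_{\max}\big(\gamma(s)^{1/2}(\gamma(s)+C)^{-1}\gamma(s)^{1/2}\big)\,\|V\|_{\gamma(s)}$; operator monotonicity of $t\mapsto t^s$ keeps $\gamma(s)\preceq\alpha I$ throughout $[0,1]$ when $Y,Z\preceq\alpha I$, whence that factor is at most $1/(1+\beta/\alpha) = \alpha/(\alpha+\beta)$.

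Next I would locate the contraction. The inner block $\iota\circ\tau_{\tilde{S}_t}\circ\iota$ sends $P\mapsto(P^{-1}+\tilde{S}_t)^{-1}$; since $P^{-1}+\tilde{S}_t\succ\tilde{S}_t$ for every $P\succ 0$, its output satisfies $0\prec(P^{-1}+\tilde{S}_t)^{-1}\prec\tilde{S}_t^{-1}$ uniformly in $P$, and after $\Psi_{\tilde{A}_t}$ it lies in the order interval $\{0\prec X\preceq\tilde{A}_t^{\trs}\tilde{S}_t^{-1}\tilde{A}_t\}$; meanwhile $\tilde{Q}_t\succ 0$. Thus every map preceding the final $\tau_{\tilde{Q}_t}$ is an isometry or nonexpansive, and $\tau_{\tilde{Q}_t}$ acts on a set bounded above, so (iii) applies to it. Chaining, and using the isometry of $\Psi_{\tilde{A}_t},\iota$ and the nonexpansiveness of $\tau_{\tilde{S}_t}$, $\delta(\mathcal{R}_t(Y),\mathcal{R}_t(Z)) \le \rho_t\,\delta\big((Y^{-1}+\tilde{S}_t)^{-1},(Z^{-1}+\tilde{S}_t)^{-1}\big) \le \rho_t\,\delta(Y,Z)$.

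The remaining task is to identify $\rho_t$ with the value in the statement. Applying (iii) after first conjugating by $\tilde{Q}_t^{-1/2}$ to normalise the shift to $I$, and using a sharp rather than crude description of the range of $(P^{-1}+\tilde{S}_t)^{-1}$ measured against $\tilde{Q}_t$, one arrives at $\rho_t = \zeta_t/(\zeta_t+\epsilon_t)$ with $\zeta_t,\epsilon_t$ as in \eqref{eq:zeteps}; the strict inequality $\rho_t<1$ holds because $\epsilon_t = \lambda_{\min}\big(\tilde{A}_t^{-1}\tilde{B}_t(\tilde{R}_t+\tilde{B}_t^{\trs}(\tilde{A}_t^{\trs})^{-1}\tilde{Q}_t\tilde{A}_t^{-1}\tilde{B}_t)^{-1}\tilde{B}_t^{\trs}(\tilde{A}_t^{\trs})^{-1}\big) > 0$, using full row rank of $\tilde{B}_t$, $\tilde{R}_t\succ 0$, and nonsingular $\tilde{A}_t$. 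The hard part will be this last bookkeeping — pinning the contraction constant to exactly $\zeta_t/(\zeta_t+\epsilon_t)$, i.e.\ getting the tight bound on the shift's argument relative to $\tilde{Q}_t$; this is precisely the Bougerol-type computation behind \cite[Thm.~1]{sun2023riccati}, which in turn rests on \cite{bougerol1993kalman}, whereas the structural reduction above is routine once the three metric facts are in hand.
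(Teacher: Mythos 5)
The paper does not actually prove this lemma: it is imported verbatim as \cite[Thm.~1]{sun2023riccati}, so there is no in-paper argument to match yours against. Your structural skeleton is the standard Bougerol route that the cited source itself builds on --- writing $\mathcal{R}_t(P)=\tilde{Q}_t+\tilde{A}_t^{\trs}(P^{-1}+\tilde{S}_t)^{-1}\tilde{A}_t$ with $\tilde{S}_t=\tilde{B}_t\tilde{R}_t^{-1}\tilde{B}_t^{\trs}\succ 0$, using that inversion and congruence are $\delta$-isometries, and that translation by $C\succeq\beta I$ on arguments confined to $\{0\prec X\preceq\alpha I\}$ contracts $\delta$ by $\alpha/(\alpha+\beta)$. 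Those three metric facts, and your geodesic-length derivation of the translation lemma (including the L\"owner--Heinz step keeping $\gamma(s)\preceq\alpha I$), are correct.

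The genuine gap is at the only point where the lemma has quantitative content: the identification of the contraction ratio with $\zeta_t/(\zeta_t+\epsilon_t)$. Your decomposition, applied as written, delivers the ratio $\alpha/(\alpha+\beta)$ with $\alpha=\|\tilde{A}_t^{\trs}(\tilde{B}_t\tilde{R}_t^{-1}\tilde{B}_t^{\trs})^{-1}\tilde{A}_t\|_2$ and $\beta=\lambda_{\min}(\tilde{Q}_t)$, which is not the stated pair: $\zeta_t$ and $\epsilon_t$ both involve $\tilde{A}_t^{-1}$ and the combination $\tilde{R}_t+\tilde{B}_t^{\trs}(\tilde{A}_t^{\trs})^{-1}\tilde{Q}_t\tilde{A}_t^{-1}\tilde{B}_t$, indicating that the cited computation runs the translation estimate in a different coordinate (essentially on $\mathcal{R}_t(P)^{-1}$, or after pre-conjugating by $\tilde{A}_t^{-1}$, with the roles of the ``shift'' and the ``bounded range'' measured against $\tilde{Q}_t$ rather than against $I$). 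Your final paragraph asserts that a ``sharp rather than crude'' bookkeeping arrives at exactly $\zeta_t/(\zeta_t+\epsilon_t)$, but no such derivation is given, and you explicitly defer it to \cite[Thm.~1]{sun2023riccati}. Since the explicit constants are what Theorem~\ref{theorem:regret_T} and Corollary~\ref{corollary:bound_horizon} consume (via $\overline{\zeta}$ and $\underline{\epsilon}$), proving strict contraction with a \emph{different} ratio does not establish the lemma as stated; either carry out the coordinate change that produces \eqref{eq:zeteps}, or state and use the ratio your decomposition actually yields (which is finite above and positive below under Assumptions~\ref{asm:bounded_data} and~\ref{asm:ctr_obs_uniform}, so the downstream results would survive with modified constants).
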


\begin{lemma}\label{lemma:delta_ub}
Given $T \in \mathbb{N}$, with $(X_{t+T})_{t\in\mathbb{N}_0}$
%$ \in \mathbb{S}^n_{++}$ 
as per~\eqref{eq:terminal_candidate}, and $(P_{t})_{t\in\mathbb{N}_0}$ 
%the indexed element of 
as the bounded positive definite solution of~\eqref{eq:recursion}, the Riemannian distance $\delta(X_{t+T}, P_{t+T}) \leq \overline{\delta}$ for all $t\in\mathbb{N}_0$, where
\begin{align}\label{eq:delta_ub}
\overline{\delta} &:= \sqrt{n} \log \Big(\sup_{t \in \mathbb{N}_0} \frac{\|X_{t+T}\|_2}{\lambda_{\min}(P_{t+T})} \Big)
% < +\infty.
\\
&\leq \sqrt{n} \log \Big( \sup_{t\in\mathbb{N}_0} \frac{\|X_{t+T}\|_2}{\lambda_{\min}(\tilde{Q}_{t+T})}\Big) < +\infty.
\label{eq:delta_ub_finite}
\end{align}
\end{lemma}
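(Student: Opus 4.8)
The goal is to bound the Riemannian distance $\delta(X_{t+T},P_{t+T})$ uniformly in $t$. The natural route is through Lemma~\ref{lemma:log_norm}, which converts a $\|\cdot\|_2$ bound on a suitably normalized positive definite matrix into a bound on $\|\log(\cdot)\|_2$, and hence — since $\delta(\cdot,\cdot)$ is (up to the $\sqrt{n}$ factor relating $\|\cdot\|_F$ to $\|\cdot\|_2$) exactly such a logarithm norm via Lemma~\ref{lemma:delta_invariant} — into a bound on $\delta$. First I would recall from Lemma~\ref{lemma:delta_invariant} that $\delta(X_{t+T},P_{t+T}) = \|\log(P_{t+T}^{-1/2}X_{t+T}P_{t+T}^{-1/2})\|_F$, and bound $\|\cdot\|_F \leq \sqrt{n}\,\|\cdot\|_2$ for the $n\times n$ symmetric matrix inside.

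**Key steps.** The matrix $S_t := P_{t+T}^{-1/2}X_{t+T}P_{t+T}^{-1/2}$ lies in $\mathbb{S}^n_{++}$. By Lemma~\ref{lemma:x_hat_combined} (with the degenerate identification, or more directly since $X_{t+T}$ of the form \eqref{eq:terminal_candidate} satisfies $X_{t+T}\succeq \mathcal{R}_{t+T}(P_{t+T+1})=P_{t+T}$ by \eqref{eq:RiccBounded} and Lemma~\ref{lemma:ric_monotone}), we have $X_{t+T}\succeq P_{t+T}$, hence $S_t \succeq I_n$, i.e. $\lambda_{\min}(S_t)\geq 1$. Then Lemma~\ref{lemma:log_norm} gives $\|\log(S_t)\|_2 = \log(\|S_t\|_2)$. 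Next I would bound $\|S_t\|_2 = \lambda_{\max}(P_{t+T}^{-1/2}X_{t+T}P_{t+T}^{-1/2}) \leq \|P_{t+T}^{-1}\|_2\,\|X_{t+T}\|_2 = \|X_{t+T}\|_2/\lambda_{\min}(P_{t+T})$, using that $P_{t+T}\succ 0$. Combining, $\delta(X_{t+T},P_{t+T}) \leq \sqrt{n}\,\|\log(S_t)\|_2 \leq \sqrt{n}\log(\|X_{t+T}\|_2/\lambda_{\min}(P_{t+T}))$; taking the supremum over $t\in\mathbb{N}_0$ (and using monotonicity of $\log$) yields \eqref{eq:delta_ub}. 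For \eqref{eq:delta_ub_finite}, I would invoke $\lambda_{\min}(P_{t+T})\geq\lambda_{\min}(\tilde{Q}_{t+T})$ from Remark~\ref{rem:PDricc}, which enlarges the bound, and then note finiteness: $\sup_t\|X_{t+T}\|_2<+\infty$ by Remark~\ref{rem:X_tT_finite} and $\inf_t\lambda_{\min}(\tilde{Q}_{t+T})>0$ by Assumption~\ref{asm:ctr_obs_uniform}, so the argument of the logarithm is a finite positive number bounded away from $0$ — actually it is at least $1$ since $X_{t+T}\succeq P_{t+T}\succeq \tilde{Q}_{t+T}$ forces $\|X_{t+T}\|_2\geq\lambda_{\min}(\tilde{Q}_{t+T})$ — so $\overline{\delta}$ is a well-defined nonnegative real.

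**Main obstacle.** The only subtlety is the step $\|S_t\|_2 \leq \|X_{t+T}\|_2/\lambda_{\min}(P_{t+T})$ together with the passage of the supremum through the logarithm: one must check the ratio is $\geq 1$ (so that the outer $\log$ in \eqref{eq:delta_ub} is nonnegative and the bound is not vacuous) and that $\sup_t$ commutes with the pointwise bound. Both follow from $X_{t+T}\succeq P_{t+T}$ and monotonicity of $t\mapsto\sqrt{n}\log(t)$ on $[1,\infty)$, so there is no real difficulty — the argument is essentially an assembly of the preliminary lemmas. A minor care point is that Lemma~\ref{lemma:log_norm} is stated for a matrix with $\lambda_{\min}\geq 1$, which is why the normalization by $P_{t+T}^{-1/2}$ on both sides (rather than, say, $P_{t+T}^{-1}X_{t+T}$) is used: $S_t$ is symmetric positive definite with $S_t\succeq I_n$, exactly the hypothesis required.
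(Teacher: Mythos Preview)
Your proposal is correct and follows essentially the same route as the paper: establish $X_{t+T}\succeq P_{t+T}$ from \eqref{eq:RiccBounded}, write $\delta$ as the Frobenius norm of $\log(P_{t+T}^{-1/2}X_{t+T}P_{t+T}^{-1/2})$ via Lemma~\ref{lemma:delta_invariant}, pass to the spectral norm with the $\sqrt{n}$ factor, apply Lemma~\ref{lemma:log_norm}, and bound $\|P_{t+T}^{-1/2}X_{t+T}P_{t+T}^{-1/2}\|_2\leq \|X_{t+T}\|_2/\lambda_{\min}(P_{t+T})$; the second inequality and finiteness then come from Remark~\ref{rem:PDricc}, Remark~\ref{rem:X_tT_finite}, and Assumption~\ref{asm:ctr_obs_uniform}, exactly as you outline. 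One cosmetic point: Lemma~\ref{lemma:ric_monotone} is not actually needed for $X_{t+T}\succeq P_{t+T}$, since \eqref{eq:RiccBounded} gives $\mathcal{R}_{t+T}(P_{t+T+1})\preceq X_{t+T}$ directly.
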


\begin{proof}
From 
% Lemma~\ref{lemma:w_ub},
\eqref{eq:RiccBounded}, 
$P_{t+T} = \tilde{\mathcal{R}}_{t+T}(P_{t+T+1}) \preceq X_{t+T}$ for all $t \in \mathbb{N}_0$.
So using Lemmas~\ref{lemma:delta_invariant} and~\ref{lemma:log_norm}, it follows that
\begin{align*}
% \label{eq:delta_ub_proof}
\delta(X_{t+T},P_{t+T})
&= \|\log(P_{t+T}^{-\frac{1}{2}} X_{t+T} P_{t+T}^{-\frac{1}{2}})\|_F \\
&\leq \sqrt{n} \|\log(P_{t+T}^{-\frac{1}{2}} X_{t+T} P_{t+T}^{-\frac{1}{2}})\|_2 \\
&= \sqrt{n} \log(\|P_{t+T}^{-\frac{1}{2}} X_{t+T} P_{t+T}^{-\frac{1}{2}}\|_2) \\
% &= \sqrt{n} \log(\|P_{t+T}^{-\frac{1}{2}} (X_{t+T} - P_{t+T}) P_{t+T}^{-\frac{1}{2}}\|_2 + 1) \nonumber\\
% &\leq \sqrt{n} \log(\|P_{t+T}^{-1}\|_2 \|X_{t+T} - P_{t+T}\|_2 + 1) \nonumber\\
% &\leq \sqrt{n} \log(\|P_{t+T}^{-1}\|_2 \|X_{t+T}\|_2 + 1) .
&\leq \sqrt{n} \log(\|X_{t+T}\|_2 / \lambda_{\min}(P_{t+T})),
\end{align*}
which gives~\eqref{eq:delta_ub}.
Since $P_{t+T} \succeq \tilde{Q}_{t+T}$ for all $t \in \mathbb{N}_0$ (see Remark~\ref{rem:PDricc}), $\|P_{t+T}^{-1}\|_2 \leq 1 / \lambda_{\min}(\tilde{Q}_{t+T})$, which with~\eqref{eq:delta_ub} gives~\eqref{eq:delta_ub_finite}. Finiteness of the bound follows from Remark~\ref{rem:X_tT_finite} and Assumption~\ref{asm:ctr_obs_uniform}.
\end{proof}

The following result builds upon Theorem~\ref{theorem:regret_bound}. For given prediction horizon $T\in\mathbb{N}$, and with the terminal penalty matrix sequence set according to~\eqref{eq:terminal_candidate}, the closed-loop is exponentially stable under the 
%corresponding 
$T$-step receding-horizon policy~\eqref{eq:MPC_policy}, with %explicitly 
bounded performance loss.

\begin{theorem}\label{theorem:regret_T}
Given $T\in\mathbb{N}$, and $(X_{t+T})_{t\in\mathbb{N}_0}$ as per~\eqref{eq:terminal_candidate}, consider the receding-horizon state-feedback control policy~\eqref{eq:MPC_policy}.
%with $T \in \mathbb{N}$, and $X_{t+T}$ as per~\eqref{eq:terminal_candidate} for $t \in \mathbb{N}_0$.
For all initial states $\xi \in \mathbb{R}^n$, the 
%receding-horizon 
performance loss as defined in~\eqref{eq:regret_def} satisfies
\begin{align}\label{eq:regret_horizon}
\beta(\xi)
\leq \frac{\overline{\lambda}}{\underline{\lambda}}
\frac{\overline{\omega}}{\underline{\omega}} \overline{\omega}\Big(\exp\Big(\Big(\frac{\overline{\zeta}}{\overline{\zeta}+\underline{\epsilon}}\Big)^{T-1} \overline{\delta}\Big)-1\Big) |\xi|^2
\end{align}
with
\begin{align}
&\overline{\zeta} \!:=\! \sup_{t \in \mathbb{N}_0} \zeta_t
%\|(\tilde{Q}_t + \tilde{Q}_t\tilde{A}_t^{-1}\tilde{B}_t\tilde{R}_t^{-1}\tilde{B}_t^{\trs}(\tilde{A}_t^{\trs})^{-1}\tilde{Q}_t)^{-1}\|_2 
\!<\! + \infty, 
% \label{eq:sup_zeta} \\
%&
\qquad 
\underline{\epsilon} \!:=\! \inf_{t \in \mathbb{N}_0} \epsilon_t
%\lambda_{\min} (\tilde{A}_t^{-1}\tilde{B}_t (\tilde{R}_t\!+\!\tilde{B}_t^{\trs}(\tilde{A}_t^{\trs})\!^{-1}\tilde{Q}_t\tilde{A}_t^{-1}\tilde{B}_t)\!^{-1}\! \tilde{B}_t^{\trs}(\tilde{A}_t^{\trs})\!^{-1} ) 
\! > \! 0, \label{eq:inf_eps}
\end{align}
and $\overline{\omega}, \underline{\omega}, \underline{\lambda}, \overline{\lambda}, \zeta_t,\epsilon_t,\overline{\delta} \in \mathbb{R}_{>0}$ as per 
\eqref{eq:w_ub}, \eqref{eq:w_lb}, \eqref{eq:gamma_def}, \eqref{eq:p_ub}, \eqref{eq:zeteps}, and \eqref{eq:delta_ub}, using $(\hat{X}_{t+1})_{t\in\mathbb{N}_0}$ as per~\eqref{eq:hatX}.
%, respectively.
\end{theorem}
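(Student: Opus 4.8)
The plan is to reduce the $T$-step receding-horizon policy~\eqref{eq:MPC_policy} to the equivalent $1$-step policy~\eqref{eq:onestepmpc} with terminal penalty $\hat{X}_{t+1}$ defined in~\eqref{eq:hatX}, and then invoke Theorem~\ref{theorem:regret_bound}. The dynamic programming principle already gives this equivalence, so the closed-loop trajectory $(\tilde{x}_t)$ generated by~\eqref{eq:MPC_policy} coincides with that generated by~\eqref{eq:onestepmpc} using $(\hat{X}_{t+1})_{t\in\mathbb{N}_0}$, and therefore $\beta(\xi)$ is the same quantity in both views. Thus it suffices to verify that this $(\hat{X}_{t+1})_{t\in\mathbb{N}_0}$ satisfies the three hypotheses of Theorem~\ref{theorem:regret_bound}: boundedness together with $\hat{X}_t \succeq \mathcal{R}_t(\hat{X}_{t+1})$, the domination $\hat{X}_t \succeq P_t$, and a uniform Riemannian bound $\delta(\hat{X}_t, P_t) \leq \eta$ for a suitable $\eta$. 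Once $\eta$ is identified, Theorem~\ref{theorem:regret_bound} yields $\beta(\xi) \leq (\overline{\lambda}/\underline{\lambda})(\overline{\omega}/\underline{\omega})\overline{\omega}(\exp(\eta)-1)|\xi|^2$, and the task is just to show $\eta = (\overline{\zeta}/(\overline{\zeta}+\underline{\epsilon}))^{T-1}\overline{\delta}$ works.

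First I would dispatch the first two hypotheses using Lemma~\ref{lemma:x_hat_combined}, which is stated precisely to this end: it gives $\hat{X}_t \succeq \mathcal{R}_t(\hat{X}_{t+1})$, $\hat{X}_t \succeq P_t$, and $\sup_{t\in\mathbb{N}_0}\|\hat{X}_{t+1}\|_2 < +\infty$ for the sequence~\eqref{eq:hatX}. So the only real work is the Riemannian estimate. Here I would use the semigroup structure: since $P_t = \mathcal{R}_t \circ \cdots \circ \mathcal{R}_{t+T-2}(P_{t+T-1})$ by~\eqref{eq:recursion}, and $\hat{X}_t = \mathcal{R}_t \circ \cdots \circ \mathcal{R}_{t+T-2}(X_{t+T-1})$ by~\eqref{eq:hatX}, applying the strict contraction of Lemma~\ref{lemma:contraction} once per composed Riccati operator gives
\begin{align*}
\delta(\hat{X}_t, P_t) \leq \Big(\prod_{j=t}^{t+T-2} \rho_j\Big)\, \delta(X_{t+T-1}, P_{t+T-1}) \leq \Big(\frac{\overline{\zeta}}{\overline{\zeta}+\underline{\epsilon}}\Big)^{T-1} \delta(X_{t+T-1}, P_{t+T-1}),
\end{align*}
where the second inequality uses $\rho_j = \zeta_j/(\zeta_j+\epsilon_j) \leq \overline{\zeta}/(\overline{\zeta}+\underline{\epsilon})$ (the map $\zeta \mapsto \zeta/(\zeta+\epsilon)$ is increasing in $\zeta$ and decreasing in $\epsilon$, and the product has $T-1$ factors). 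Finally, Lemma~\ref{lemma:delta_ub} gives $\delta(X_{t+T-1}, P_{t+T-1}) \leq \overline{\delta}$ uniformly in $t$ — noting that the index shift from $t+T$ to $t+T-1$ is immaterial since Lemma~\ref{lemma:delta_ub} holds for all $t\in\mathbb{N}_0$. Hence $\delta(\hat{X}_t, P_t) \leq (\overline{\zeta}/(\overline{\zeta}+\underline{\epsilon}))^{T-1}\overline{\delta} =: \eta$ for all $t\in\mathbb{N}$, which is exactly the $\eta$ appearing in~\eqref{eq:regret_horizon}.

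I would also record that $\overline{\zeta} < +\infty$ and $\underline{\epsilon} > 0$ hold under Assumptions~\ref{asm:bounded_data} and~\ref{asm:ctr_obs_uniform}, by inspection of the formulas~\eqref{eq:zeteps} (these involve only $\tilde{Q}_t, \tilde{R}_t, \tilde{A}_t, \tilde{B}_t$, all uniformly bounded above and, for the inverses, uniformly bounded below in the relevant sense under Assumption~\ref{asm:ctr_obs_uniform}); this justifies that~\eqref{eq:inf_eps} is well posed. With all three hypotheses of Theorem~\ref{theorem:regret_bound} verified for this $\eta$, the bound~\eqref{eq:regret_horizon} follows immediately by substitution. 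The main obstacle is conceptual rather than computational: one must be careful that the $T$-step policy and the $1$-step policy with terminal penalty $\hat{X}_{t+1}$ genuinely produce the same closed-loop behaviour — this is the dynamic-programming equivalence already invoked in passing before~\eqref{eq:onestepmpc} — and that the contraction is applied to the composition of $T-1$ operators (not $T$), since the last Riccati step is absorbed into the definitions $P_{t+T-1} = \mathcal{R}_{t+T-1}(P_{t+T})$ and $X_{t+T-1} \succeq \mathcal{R}_{t+T-1}(P_{t+T})$ from~\eqref{eq:RiccBounded}. The off-by-one in the exponent is the only place where a slip is likely.
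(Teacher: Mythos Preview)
Your proposal is correct and matches the paper's proof essentially step for step: reduce to the $1$-step policy via dynamic programming, invoke Lemma~\ref{lemma:x_hat_combined} for boundedness and the two ordering conditions, iterate Lemma~\ref{lemma:contraction} $T-1$ times and cap with Lemma~\ref{lemma:delta_ub} to get $\eta = (\overline{\zeta}/(\overline{\zeta}+\underline{\epsilon}))^{T-1}\overline{\delta}$, then apply Theorem~\ref{theorem:regret_bound}. The only difference is that the paper spells out explicit eigenvalue inequalities to justify $\overline{\zeta}<+\infty$ and $\underline{\epsilon}>0$ from~\eqref{eq:zeteps} under Assumptions~\ref{asm:bounded_data} and~\ref{asm:ctr_obs_uniform}, whereas you appeal to ``inspection''; that is a matter of detail, not approach.
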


\begin{proof}
%Given $T \in \mathbb{N}$, 
%by Riccati recursion~\eqref{eq:ricc_def} and the principle of optimality, 
As observed in the preamble to Theorem~\ref{theorem:mpc_1step_stable}, with $(\hat{X}_{t+1})_{t\in\mathbb{N}_0}$ as per~\eqref{eq:hatX},
the $T$-step receding-horizon control policy~\eqref{eq:MPC_policy} is equivalent to the $1$-step policy~\eqref{eq:onestepmpc}.
% , where $\tilde{x}_t$ evolves according to~\eqref{eq:xj_evolve} from $\tilde{x}_0 = \xi$.
%, and
%where $\tilde{x}_t$ evolves according to~\eqref{eq:xj_evolve} from $\tilde{x}_0 = \xi$ with $\tilde{u}_t = \tilde{u}^{RH}_t$, and
%\begin{align*}
% $\hat{X}_{t+1} := \mathcal{R}_{t+1} \circ \cdots \circ \mathcal{R}_{t+T-1}(X_{t+T})$
% %\end{align*}
% with $\mathcal{R}_{\bullet}(\cdot)$ as per \eqref{eq:recursion}; c.f.,~\eqref{eq:u_rh_1}.
% %for $t \in \mathbb{N}_0$.
% % By Lemmas~\ref{lemma:x_hat_1} and~\ref{lemma:x_hat_2},
Now, by Lemma~\ref{lemma:x_hat_combined},
%\begin{align}\label{eq:x_hat_sufficient}
$\hat{X}_{t} \succeq P_{t}$, and  $\hat{X}_{t} \succeq \mathcal{R}_{t}(\hat{X}_{t+1})$,
%\end{align}
for all $t\in\mathbb{N}$.
Further,  
%it follows that
\begin{align*}
%\label{eq:delta_x_hat}
&\!\!\delta(\hat{X}_{t},P_{t}) \nonumber\\
&\!\!= \delta(\mathcal{R}_{t} \circ \cdots \circ \mathcal{R}_{t+T-2}(X_{t+T-1}), \mathcal{R}_{t} \circ \cdots \circ \mathcal{R}_{t+T-2}(P_{t+T-1})) \nonumber\\
&\!\!\leq \Big(\frac{\overline{\zeta}}{\overline{\zeta}+\underline{\epsilon}}\Big)^{T-1} \delta(X_{t+T-1}, P_{t+T-1})\leq \Big(\frac{\overline{\zeta}}{\overline{\zeta}+\underline{\epsilon}}\Big)^{T-1} \overline{\delta},
\end{align*}
by repeated application of  Lemma~\ref{lemma:contraction}.
%,
%for all $t \in \mathbb{N}$.
% Note that~\eqref{eq:onestepmpc} corresponds to the receding horizon policy with prediction horizon of~$1$, and the sequence of terminal penalty matrices $(\hat{X}_1, \hat{X}_2, \dots)$.
% In view of~\eqref{eq:x_hat_sufficient} and~\eqref{eq:delta_x_hat}, the equivalent $1$-step time-varying linear state-feedback policy~\eqref{eq:onestepmpc} satisfies the hypotheses of 
In particular,
Theorem~\ref{theorem:regret_bound} applies, which yields the bound \eqref{eq:regret_horizon}. 
% By Assumptions~\ref{asm:bounded_data} and~\ref{asm:ctr_obs_uniform}, $\overline{\zeta} < +\infty$, and $\underline{\epsilon} > 0$; see the proof of~\cite[Thm.~4]{sun2023receding} for details.

It remains to show that $\overline{\zeta}<+\infty$ and $\underline{\epsilon}>0$. First, observe that for all $Z\in\mathbb{S}_{++}^q$, and $Y\in\mathbb{R}^{p\times q}$ with full row rank,
\begin{align*}
\lambda_{\min}(YZY^\prime) 
&= \min_{|x|=1} x^\prime Y Z Y^\prime x \\
&= \min_{|x|=1} \frac{(Y^\prime x)^\prime Z (Y^\prime x)} {|Y^\prime x|^2} \frac{x^\prime YY^\prime x}{|x|^2} \!\geq\! \lambda_{\min}(Z) \lambda_{\min} (YY^\prime),
\end{align*}
and 
$\lambda_{\max}(YZY^\prime) \!=\!
\max_{|x|=1} x^\prime YZY^\prime x
%= \max_{|x|=1} \frac{(W^\prime x)^\prime Z (W^\prime x)} {|W^\prime x|^2} \frac{x^\prime WW^\prime x}
% {|x|^2}
\!\leq\!  \lambda_{\max}(Z) \lambda_{\max}(YY^\prime)$.
Further, $\lambda_{\max}(Z^{-1})=1\big/\lambda_{\min}(Z)$, and when $p=q$,  $\lambda_{\min}(Z) \leq \lambda_{\min}(Z + YY^\prime)$ and $\lambda_{\max}(Z + YY^\prime) \leq \lambda_{\max}(Z) + \lambda_{\max}(YY^\prime)$. Using these inequalities, it follows from \eqref{eq:inf_eps}, and \eqref{eq:zeteps}, that $\overline{\zeta} = \sup_{t\in\mathbb{N}_0} 1 \big/\lambda_{\min}(\tilde{Q}_t + \tilde{Q}_t\tilde{A}_t^{-1}\tilde{B}_t\tilde{R}_t^{-1}\tilde{B}_t^{\trs}(\tilde{A}_t^{\trs})^{-1}\tilde{Q}_t) \leq 1\big/ \inf_{t\in\mathbb{N}_0} \lambda_{\min}(\tilde{Q}_t)$, and 
\begin{align*}
\underline{\epsilon} &\geq  \inf_{t\in\mathbb{N}_0} \lambda_{\min}((\tilde{R}_t\!+\!\tilde{B}_t^{\trs}(\tilde{A}_t^{\trs})\!^{-1}\tilde{Q}_t\tilde{A}_t^{-1}\tilde{B}_t)\!^{-1})\\
& \qquad\qquad\qquad\qquad \times 
 \lambda_{\min}(\tilde{B}_t\tilde{B}_t^\prime) 
 \lambda_{\min}(\tilde{A}_t^{-1} (\tilde{A}_t^\prime)^{-1}) 
\\
&=
\inf_{t\in\mathbb{N}_0}
\lambda_{\min}(\tilde{B}_t\tilde{B}_t^\prime) \!\!\Big/\!\! \big(\lambda_{\max}(\tilde{R}_t\!+\!\tilde{B}_t^{\trs}(\tilde{A}_t^{\trs})\!^{-1}\tilde{Q}_t\tilde{A}_t^{-1}\tilde{B}_t) \lambda_{\max}(\tilde{A}_t^\prime \tilde{A}_t) \big)\\
&\geq 
\inf_{t\in\mathbb{N}_0}
\frac{\lambda_{\min}(\tilde{B}_t\tilde{B}_t^\prime)}{\lambda_{\max}(\tilde{A}_t^\prime \tilde{A}_t)} 
%\\
%& \qquad\qquad \times
1\big/\big(\lambda_{\max}(\tilde{R}_t) + \lambda_{\max}(\tilde{Q}_t)
\frac{\lambda_{\max}(\tilde{B}_t^\prime \tilde{B}_t)}{\lambda_{\min}(\tilde{A}_t \tilde{A}_t^\prime)}\big).
\end{align*}
As such, with Assumptions \ref{asm:bounded_data} and \ref{asm:ctr_obs_uniform}, $\overline{\zeta}<+\infty$ and $\underline{\epsilon}>0$.
\end{proof}

\begin{remark}
In Theorem~\ref{theorem:regret_T}, $\overline{\omega}$, $\overline{\lambda}$, $\overline{\delta}$, and $\overline{\zeta}$ can be replaced by any corresponding upper bounds for these quantities. Similarly, $\underline{\omega}$, $\underline{\lambda}$, and $\underline{\epsilon}$ can be replaced by any corresponding lower bounds.
\end{remark}

% \begin{remark}
% The receding-horizon control policy~\eqref{eq:MPC_policy}, with terminal penalty matrix sequence given by \eqref{eq:terminal_candidate}, 
% %is equivalent to the time-varying linear state-feedback policy in \eqref{eq:onestepmpc}. In turn, 
% %this 
% is equivalent to the corresponding sampled-data policy~\eqref{eq:rh_unlifted} in the original domain, with commensurate performance loss.
% % bounded by
% % %associated with~\eqref{eq:rh_unlifted} is also bounded 
% % the right-hand side of~\eqref{eq:regret_horizon}.
% \end{remark}

Theorem~\ref{theorem:regret_T} can be used to set the prediction horizon $T \in \mathbb{N}$
%for synthesis of the receding-horizon control policy~\eqref{eq:MPC_policy} in line with \eqref{eq:terminal_candidate}. The 
to achieve 
%lower bound provided below ensures 
specified infinite-horizon performance degradation with the terminal penalty sequence given by~\eqref{eq:terminal_candidate}.

%remains below a specified approximation tolerance.
%,without exact knowledge of the problem data beyond the prediction horizon.

\begin{corollary}\label{corollary:bound_horizon}
Given $T\in\mathbb{N}$, and $(X_{t+T})_{t\in\mathbb{N}_0}$ as per~\eqref{eq:terminal_candidate}, consider the receding-horizon state-feedback control policy~\eqref{eq:MPC_policy}.
% Consider the receding-horizon state-feedback control policy~\eqref{eq:MPC_policy} with $T \in \mathbb{N}$, and $X_{t+T}$ as per~\eqref{eq:terminal_candidate} for $t \in \mathbb{N}_0$.
Given
%approximation tolerance 
$\overline{\beta} \in \mathbb{R}_{>0}$,   
 the 
 %receding-horizon 
 performance loss as defined in~\eqref{eq:regret_def} satisfies $\beta(\xi) \leq \overline{\beta} |\xi|^2$ for all initial states $\xi \in \mathbb{R}^n$, if
\begin{align*}
T \geq \left(\log\left(\log\Big(\frac{\overline{\beta}\cdot \underline{\lambda} \cdot\underline{\omega}}{\overline{\omega}\cdot\overline{\lambda}\cdot\overline{\omega}}  + 1\Big) ^{1/\overline{\delta}}\right)\Big/\log\Big(\frac{\overline{\zeta}}{\overline{\zeta}+\underline{\epsilon}}\Big)\right) + 1 
% T \geq \frac{\log\Big(\log\big(\frac{\overline{\beta}}{\|\xi\|^2} \frac{\underline{\lambda}}{\overline{\lambda}} \frac{\underline{\omega}}{\overline{\omega}^2} + 1\big) \frac{1}{\Delta}\Big)}{\log\Big(\frac{\overline{\zeta}}{\overline{\zeta}+\underline{\epsilon}}\Big)} + 1 ,
\end{align*}
with $\overline{\omega}, \underline{\omega}, \underline{\lambda}, \overline{\lambda}, \overline{\delta}, \overline{\zeta}, \underline{\epsilon} \in \mathbb{R}_{>0}$ as per \eqref{eq:w_ub}, 
\eqref{eq:w_lb}, \eqref{eq:gamma_def}, \eqref{eq:p_ub}, \eqref{eq:delta_ub}, %\eqref{eq:sup_zeta} 
and \eqref{eq:inf_eps},
using $(\hat{X}_{t+1})_{t\in\mathbb{N}_0}$ as per~\eqref{eq:hatX}.
%, respectively.
\end{corollary}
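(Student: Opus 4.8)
The plan is to read off the infinite-horizon performance bound from Theorem~\ref{theorem:regret_T} and invert it for $T$; no new estimates are needed. For the receding-horizon policy~\eqref{eq:MPC_policy} with terminal penalties~\eqref{eq:terminal_candidate}, Theorem~\ref{theorem:regret_T} gives
\begin{align*}
\beta(\xi) \leq \frac{\overline{\lambda}}{\underline{\lambda}}\frac{\overline{\omega}}{\underline{\omega}}\,\overline{\omega}\Big(\exp\Big(\Big(\tfrac{\overline{\zeta}}{\overline{\zeta}+\underline{\epsilon}}\Big)^{T-1}\overline{\delta}\Big)-1\Big)|\xi|^2,
\end{align*}
so it suffices to choose $T$ making the scalar multiplying $|\xi|^2$ at most $\overline{\beta}$, i.e.\ so that $\exp\big(\big(\tfrac{\overline{\zeta}}{\overline{\zeta}+\underline{\epsilon}}\big)^{T-1}\overline{\delta}\big) \leq \tfrac{\overline{\beta}\,\underline{\lambda}\,\underline{\omega}}{\overline{\omega}\,\overline{\lambda}\,\overline{\omega}}+1$. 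This rearrangement is legitimate because $\underline{\lambda},\underline{\omega}>0$ and $\overline{\omega},\overline{\lambda}<+\infty$ (from Theorems~\ref{theorem:mpc_1step_stable} and~\ref{theorem:regret_bound} together with Lemma~\ref{lemma:x_hat_combined}), and the resulting right-hand side is strictly greater than $1$ since $\overline{\beta}>0$.

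I would then peel the outer functions off one at a time, each step checking the argument lies where the relevant inverse is defined and monotone. Taking logarithms of both sides of the displayed inequality (permissible as both sides are positive and the argument on the right exceeds $1$) yields $\big(\tfrac{\overline{\zeta}}{\overline{\zeta}+\underline{\epsilon}}\big)^{T-1}\overline{\delta} \leq \log\big(\tfrac{\overline{\beta}\,\underline{\lambda}\,\underline{\omega}}{\overline{\omega}\,\overline{\lambda}\,\overline{\omega}}+1\big)$, and dividing by $\overline{\delta}>0$ (finite and positive by Lemma~\ref{lemma:delta_ub}) gives $\big(\tfrac{\overline{\zeta}}{\overline{\zeta}+\underline{\epsilon}}\big)^{T-1} \leq \log\big(\tfrac{\overline{\beta}\,\underline{\lambda}\,\underline{\omega}}{\overline{\omega}\,\overline{\lambda}\,\overline{\omega}}+1\big)^{1/\overline{\delta}}$. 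The last step takes logarithms once more: since $\underline{\epsilon}>0$ and $\overline{\zeta}\in(0,+\infty)$ by Theorem~\ref{theorem:regret_T}, the base $\tfrac{\overline{\zeta}}{\overline{\zeta}+\underline{\epsilon}}$ lies strictly in $(0,1)$, so $\log\big(\tfrac{\overline{\zeta}}{\overline{\zeta}+\underline{\epsilon}}\big)<0$; taking logarithms of both sides and dividing by this negative quantity reverses the inequality and produces exactly the claimed threshold on $T$ (after adding $1$). Finally, because $T\mapsto\big(\tfrac{\overline{\zeta}}{\overline{\zeta}+\underline{\epsilon}}\big)^{T-1}$ is decreasing, every $T$ at least as large as the threshold also satisfies the displayed sufficient condition, hence $\beta(\xi)\leq\overline{\beta}|\xi|^2$.

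The whole argument is elementary once Theorem~\ref{theorem:regret_T} is in hand; the only point requiring care --- and it is bookkeeping, not a genuine difficulty --- is tracking the direction of the two logarithm steps and, in particular, the sign reversal when dividing through by the negative number $\log\big(\tfrac{\overline{\zeta}}{\overline{\zeta}+\underline{\epsilon}}\big)$, together with confirming that each argument fed to $\log(\cdot)$ is strictly positive (the innermost one, being $1$ plus a positive quantity, is in fact greater than $1$, so no hidden sign issue arises there). No contraction or Riccati machinery beyond what Theorem~\ref{theorem:regret_T} already packages is invoked.
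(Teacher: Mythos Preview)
Your proposal is correct and follows exactly the approach of the paper: invoke the performance bound~\eqref{eq:regret_horizon} from Theorem~\ref{theorem:regret_T}, set it no larger than $\overline{\beta}|\xi|^2$, and rearrange for $T$. The paper's own proof is a one-liner (``manipulating \ldots\ yields the result''), and your write-up simply unpacks that manipulation with appropriate care about signs and monotonicity; nothing is missing or done differently.
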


\begin{proof}
With reference to~\eqref{eq:regret_horizon}, manipulating
\begin{align*}
\frac{\overline{\lambda}}{\underline{\lambda}}
\frac{\overline{\omega}}{\underline{\omega}} \overline{\omega}\Big(\exp\Big(\Big(\frac{\overline{\zeta}}{\overline{\zeta}+\underline{\epsilon}}\Big)^{T-1} \overline{\delta}\Big)-1\Big) |\xi|^2 \leq \overline{\beta}|\xi|^2
\end{align*}
yields the result.
\end{proof}

\section{Conclusion}\label{sec:conc}
A link is established between prediction horizon length and the infinite-horizon performance loss of a receding-horizon approximation of the optimal policy for a lifted reformulation of \eqref{eq:opt_problem}. This is achieved via a strict contraction property of the corresponding Riccati operator, under an assumed uniform controllability and uniform observability property of the dynamics and stage cost in the original domain. Ongoing work is focused on extending the approach to accommodate cross-terms in the stage cost, uncertainty in the problem data, and constraints on the input and state variables.

\printbibliography

\appendix

\begin{proof}[Proof of lemma~\ref{lemma:logm}]
Note that
\begin{align*}
& \exp(T^{-1}\diag(\log(\lambda_1),\dots,\log(\lambda_n))T) \\
&= \sum_{k\in\mathbb{N}} \frac{1}{k!} (T^{-1}\diag(\log(\lambda_1),\dots,\log(\lambda_n))T)^k \\
%&= T^{-1}(\sum_{k\in\mathbb{N}_0} \frac{1}{k!} (\diag(\log(\lambda_1),\cdots,\log(\lambda_n)))^k)T \\
&= T^{-1}\diag(\exp(\log(\lambda_1)),\dots,\exp({\log(\lambda_n)}))T = M. \qedhere
\end{align*}
\end{proof}

\begin{proof}[Proof of Lemma~\ref{lemma:log_symmetric}]
Since $M\in\mathbb{S}_{++}^n$, there exists orthogonal $T^\prime=T^{-1}$ such that $M=T^\prime\diag(\lambda_1, \dots, \lambda_n) T$ with~$\{\lambda_1, \ldots, \lambda_n\} \subset\mathbb{R}_{>0}$. From Lemma~\ref{lemma:logm}, $\log(M)=T^\prime\diag(\log\lambda_1, \dots, \log\lambda_n)T$, which is symmetric.
\end{proof}

\begin{proof}[Proof of Lemma~\ref{lemma:delta_invariant}]
Note that since $Y, Z^{-1} \in \mathbb{S}^n_{++}$, the spectrum of both $YZ^{-1}\in\mathbb{R}^{n\times n}$, and $Z^{-\frac{1}{2}}YZ^{-\frac{1}{2}}\in\mathbb{S}_{++}^n$, excludes $0$, whereby
%\begin{align*}
$\mathrm{spec}\{YZ^{-1}\} = \mathrm{spec}\{Z^{-\frac{1}{2}}YZ^{-\frac{1}{2}}\} = \{\lambda_1, \ldots, \lambda_n\} \subset \mathbb{R}_{>0}$.
%\end{align*}
%with $\lambda_1, \dots, \lambda_n \in \mathbb{R}_{>0}$.
%Since $Z^{-\frac{1}{2}}YZ^{-\frac{1}{2}} \in \mathbb{S}^n_{++}$, 
Further, there exists orthogonal matrix $T^{\trs} = T^{-1}$ such that
%\begin{align*}
$Z^{-\frac{1}{2}}YZ^{-\frac{1}{2}} = T^{\trs} \Lambda T$,
%\end{align*}
where
$\Lambda = \diag(\lambda_1, \dots, \lambda_n)$.
It follows that
\begin{align*}
\delta(Y,Z) = \sqrt{\sum_{i=1}^{n} \log^{2} \lambda_i} 
%\\
&= \sqrt{\tr((\log(\Lambda))^2)} \\
&= \sqrt{\tr(T^{\trs} \log(\Lambda) T T^{\trs} \log(\Lambda) T)} \\
&= \sqrt{\tr((\log(Z^{-\frac{1}{2}}YZ^{-\frac{1}{2}}))^2)} .
\end{align*}
From Lemma~\ref{lemma:log_symmetric}, $\log(Z^{-\frac{1}{2}}YZ^{-\frac{1}{2}}) \in \mathbb{S}^n$. Thus,
\begin{align*}
\delta(Y,Z) &= \sqrt{\tr((\log(Z^{-\frac{1}{2}}YZ^{-\frac{1}{2}}))^{\trs} \log(Z^{-\frac{1}{2}}YZ^{-\frac{1}{2}}))} \\
&= \|\log(Z^{-\frac{1}{2}}YZ^{-\frac{1}{2}})\|_F . \qedhere
\end{align*}
\end{proof}

\begin{proof}[Proof of Lemma~\ref{lemma:log_norm}]
Since $M\in\mathbb{S}^n$, there exists orthogonal matrix $T^{\trs} = T^{-1}$ such that
%\begin{align*}
$M = T^{\trs} \diag(\lambda_1,\dots,\lambda_n) T$,
%\end{align*}
where $\{\lambda_1,\dots,\lambda_n\} \subset \mathbb{R}_{\geq 1}$ is the spectrum of $M$. It follows from Lemma~\ref{lemma:logm} that
%\begin{align*}
$\log(M) = T^{\trs} \diag(\log(\lambda_1),\dots,\log(\lambda_n)) T$,
%\end{align*}
and
\begin{align*}
\|\log(M)\|_2 = \max_i (\log(\lambda_i))
&= \log(\max_i (\lambda_i)) = \log(\|M\|_2),
\end{align*}
where the second equality holds because $\lambda_i\geq 1$.
\end{proof}

\begin{proof}[Proof of Lemma~\ref{lemma:u-v_bound}]
Since $Y-Z \in \mathbb{S}^n_{++}$,
\begin{align*}
\|Y-Z\|_2
% &= \lambda_{\max}(Y-Z) \\
&= \|Z^{\frac{1}{2}}(Z^{-\frac{1}{2}}YZ^{-\frac{1}{2}} - I_n)Z^{\frac{1}{2}}\|_2 \\
&\leq \|Z\|_2 \|Z^{-\frac{1}{2}}YZ^{-\frac{1}{2}} - I_n\|_2 \\
&= \|Z\|_2 (\|Z^{-\frac{1}{2}}YZ^{-\frac{1}{2}}\|_2 - 1) \\
&= \|Z\|_2 (\exp({\log(\|Z^{-\frac{1}{2}}YZ^{-\frac{1}{2}}\|_2})) - 1) .
\end{align*}
Note that
\begin{align*}
\lambda_{\min}(Z^{-\frac{1}{2}}YZ^{-\frac{1}{2}})-1
&= \lambda_{\min}(Z^{-\frac{1}{2}}YZ^{-\frac{1}{2}}-I_n) \\
&= \lambda_{\min}(Z^{-\frac{1}{2}}(Y-Z)Z^{-\frac{1}{2}}) \\
& \geq 0,
\end{align*}
which with Lemma~\ref{lemma:log_norm} gives
\begin{align*}
\log(\|Z^{-\frac{1}{2}}YZ^{-\frac{1}{2}}\|_2) = \|\log(Z^{-\frac{1}{2}}YZ^{-\frac{1}{2}})\|_2 .
\end{align*}
Since $\|\cdot\|_2 \leq \|\cdot\|_F$, it follows that
\begin{align*}
\|Y - Z\|_2 &\leq \|Z\|_2 (\exp(\|\log(Z^{-\frac{1}{2}}YZ^{-\frac{1}{2}})\|_2) - 1) \\
&\leq \|Z\|_2 (\exp(\|\log(Z^{-\frac{1}{2}}YZ^{-\frac{1}{2}})\|_F) - 1) ,
\end{align*}
which combined with~\eqref{eq:delta_equality} gives the claimed result.
\end{proof}

\end{document}